\documentclass[sigconf, nonacm]{acmart}

\usepackage{pvldb}% VLDB/PVLDB formatting overrides, see pvldb.sty for details.
\usepackage{mycommands}
\newcommand{\hlc}[2]{%
	{\sethlcolor{#1}\hl{#2}}%
}

\newcommand{\AuxiliarySampleMotivation}{
	
\begin{figure*}[t]
	\centering
	\begin{tikzpicture}[
		vertex/.style={circle,draw,fill=white,inner sep=1.2pt,minimum size=16pt,font=\small},
		edge/.style={line width=1.0pt},
		elab/.style={font=\scriptsize,fill=white,inner sep=1pt},
		]
		
		\newlength{\cellw}
		\setlength{\cellw}{4.4cm} % horizontal spacing between cells
		
		\begin{scope}[xshift=0\cellw]
			\node[vertex,font=\Large] (u1) at (0,0) {$u$};
			\node[vertex,font=\Large] (v1) at (2,0) {$v$};
			\node[vertex,fill=yellow,font=\Large] (w1) at (1,1.8) {$w$};
			
			\draw[edge, black, very thick]         (u1) -- node[elab, below, yshift=-3pt,font=\Large] {3} (v1);
			\draw[edge, green!60!black]        (v1) -- node[elab, right, xshift=3pt,font=\Large] {1} (w1);
			\draw[edge, green!60!black] (w1) -- node[elab, left, xshift=-3pt,font=\Large]  {2} (u1);
		\end{scope}
		
		\begin{scope}[xshift=1\cellw]
			\node[vertex,font=\Large] (u2) at (0,0) {$u$};
			\node[vertex,font=\Large] (v2) at (2,0) {$v$};
			\node[vertex, fill=yellow,font=\Large] (w2) at (1,1.8) {$w$};
			
			\draw[edge, black, very thick]         (w2) -- node[elab, left, xshift=-3pt,font=\Large] {3} (u2);
			\draw[edge, red, dashed]        (u2) -- node[elab, below, yshift=-3pt,font=\Large] {1} (v2);
			\draw[edge, green!60!black] (v2) -- node[elab, right, xshift=3pt,font=\Large]  {2} (w2);
		\end{scope}
		
		\begin{scope}[xshift=2\cellw]
		\node[vertex,font=\Large] (u3) at (0,0) {$u$};
		\node[vertex,font=\Large] (v3) at (2,0) {$v$};
		\node[vertex, fill=yellow,font=\Large] (w3) at (1,1.8) {$w$};
		
		\draw[edge, black, very thick]         (w3) -- node[elab, left, xshift=-3pt,font=\Large] {3} (u3);
		\draw[edge, red, dashed]        (u3) -- node[elab, below, yshift=-3pt,font=\Large] {2} (v3);
		\draw[edge, green!60!black] (v3) -- node[elab, right, xshift=3pt,font=\Large]  {1} (w3);
	\end{scope}
		
	\end{tikzpicture}
	\caption{Figure explaining the motivation behind the usage of the \emph{auxiliary} samples. 
		Edge labels indicate, without loss of generality, the arrival time $t$ in the stream of edges forming the triangle, for $t \in \{1, 2, 3\}$.
		\hl{Yellow} nodes indicate nodes for which we know the degree from the previous pass (i.e., nodes for which incident edges are sampled in the \emph{main} sample). 
		When the third edge arrives ($t = 3$), we check for the triangles closed by such edge in the \hlc{green!60!black}{main} sample. 
		(Left) Estimating triangles only with edges in the \hlc{green!60!black}{main} sample suffices; (Center, Right) Estimating triangles only with edges in the \hlc{green!60!black}{main} samples \emph{does not} suffice: we need an \hlc{red}{auxiliary} sample of edges. 
		}
	\label{fig:aux_sample}
\end{figure*}
}

\newcommand\PrintFigureIntro[2]{
	\begin{figure*}[t]
		\centering		
		\begin{subfigure}{\textwidth}
			\centering
			\includegraphics[width=\textwidth]{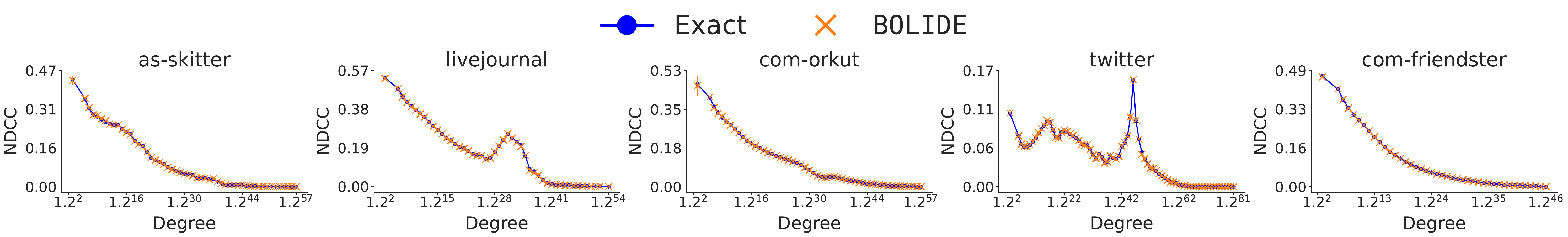}
			\caption{Exact and \algname's estimated node-averaged binned degree-wise clustering coefficient (NDCC) distribution, for degree intervals of powers of 1.2 (i.e., $D_i = [1.2^i, 1.2^{i+1})$).}
			\label{fig:intro_figure_top}
		\end{subfigure}
		\begin{subfigure}{\textwidth}
			\centering
			\includegraphics[width=\textwidth]{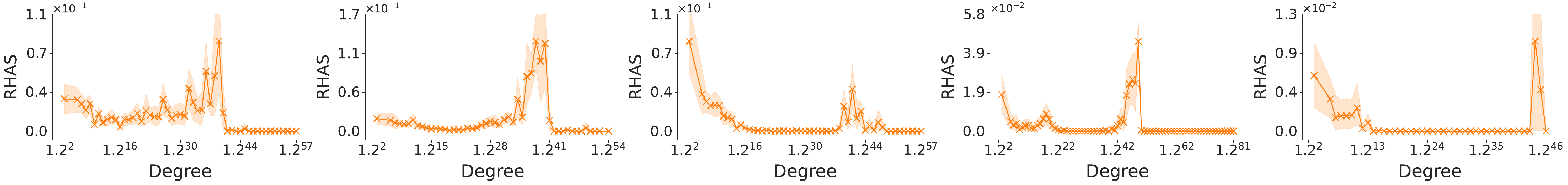}
			\caption{Pointwise RHAS distance between the exact and estimated distribution.}
			\label{fig:intro_figure_bottom}
		\end{subfigure}
		\caption{Results for node-averaged binned degree-wise clustering coefficient (NDCC) distribution, for degree intervals of powers of 1.2 (i.e., $D_i = [1.2^i, 1.2^{i+1})$). For each dataset, the average and standard deviation over 10 independent repetitions are shown.}
		\label{fig:compact_#1_base#2}
	\end{figure*}
}

\newcommand\PrintFigure[2]{
	\begin{figure*}[t]
		\centering
		\includegraphics[width=\textwidth]{res/revision_figures/distros/bolide_#1_bin_size#2.pdf}
		\caption{Exact and estimated distributions for #1 (top) and pointwise RHAS distance between them (bottom), for degree intervals of powers of #2 (i.e., $D_i = [ #2^i, #2^{i+1})$). For each dataset, the average and standard deviation over 10 independent repetitions are shown. 
		Dashed red line indicates the (average) degree threshold $\tau$, separating head and tail estimators of \algname. }
		\label{fig:compact_#1_base#2}
	\end{figure*}
}

\newcommand{\PrintFigureBaslines}{
	\begin{figure*}[t]
		\centering
		\includegraphics[width=\textwidth]{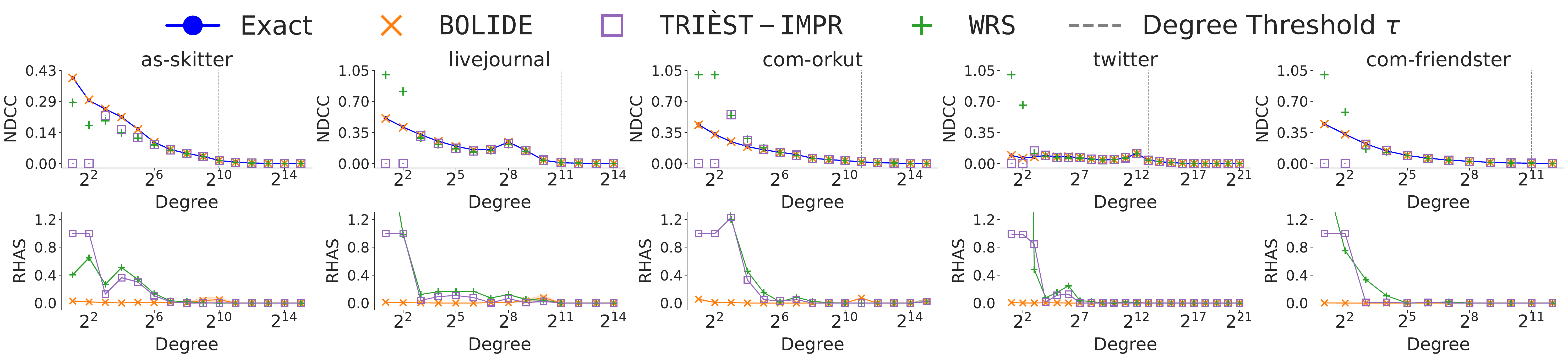}
		\caption{Exact and estimated distributions for NDCC (top) and pointwise RHAS distance between them (bottom), for degree intervals of powers of 2 (i.e., $D_i = [ 2^i, 2^{i+1})$). For each dataset, the average and standard deviation over 10 independent repetitions are shown. 
			Dashed red line indicates the (average) degree threshold $\tau$, separating head and tail estimators of \algname. }
		\label{fig:compact_baselines}
	\end{figure*}
}

\newcommand{\FigureLCC}{
	\begin{figure*}[htbp]
		\centering
		\includegraphics[width=\textwidth]{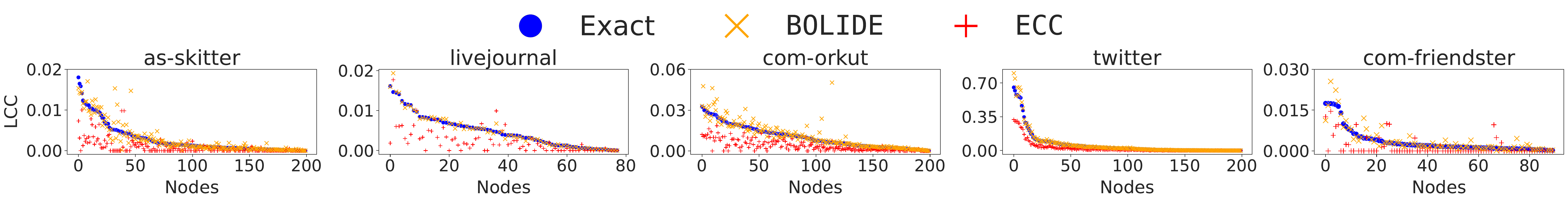}
		\caption{Exact and estimated Local Clustering Coefficients (LCC). For the sake of visualization, we report $\min(200, |H|)$ LCC for nodes sampled uniformly at random from $H$, for each dataset. 
			We display nodes in the $x$-axis sorted in decreasing order of their true LCC value. 
			For \algname\ and \kpalg, the average over 10 independent repetitions is shown for each estimated LCC.}
		\label{fig:lcc_comparison}
	\end{figure*}
}

\newcommand{\GridSearchParams}{
	\begin{figure*}[t]
		\centering
		\includegraphics[width=\textwidth]{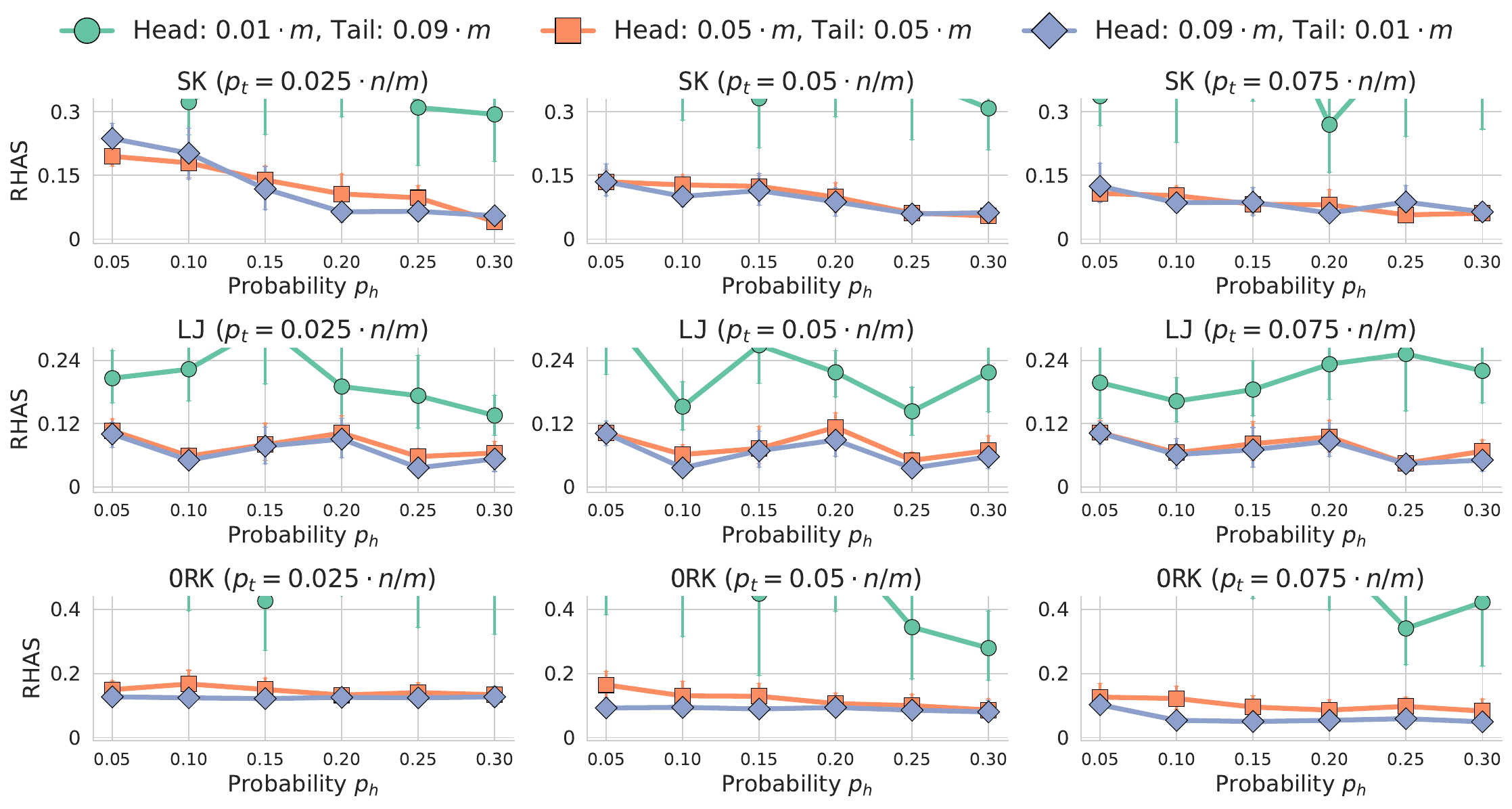}
		\caption{RHAS distance between exact and estimated NDCC distribution, for degree intervals of powers of two (i.e., $D = [2^i, 2^{i+1}]$) versus probability $p_h$ of sampling head nodes. For each row (dataset), each subplot shows a different value of probability $p_t$ of sampling tail nodes. For each dataset and choice of parameters, the average and 95\% confidence intervals over 10 independent repetitions are shown. $y$-axis is shared across rows (datasets).}
		\label{fig:grid_search}
	\end{figure*}
	
}

\newcommand{\EmpiricalBoundsTail}{
	\begin{figure}[t]
		\centering
		\includegraphics[width=0.9\columnwidth]{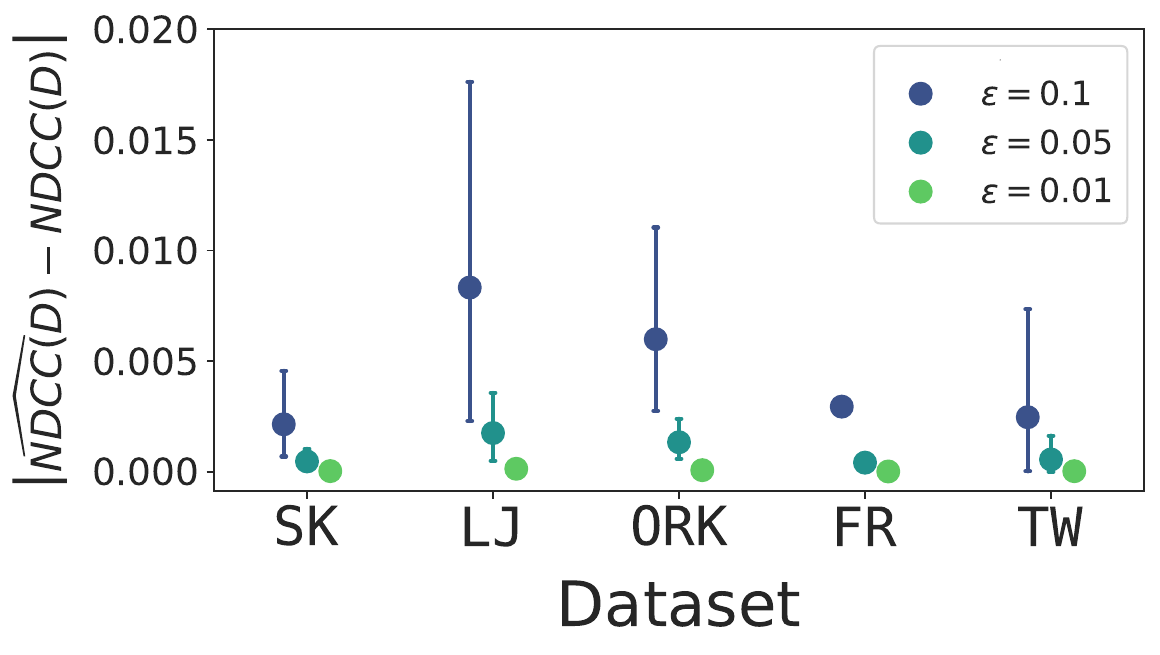}
		\caption{Absolute error of estimates given by bound in Eq.~\ref{eq:ndcc_tail_interval}, for various values of $\varepsilon$. 
		For each dataset and $\varepsilon$, the average and standard deviation over 20 different intervals $D = [2^i, 2^{i+1})$, such that $\tau \le 2^i$, are shown.
		}
		\label{fig:empirical_bounds_tail}
	\end{figure}
}

\newcommand{\FigureECC}{
	\begin{figure*}[t]
		\centering
		\includegraphics[width=\textwidth]{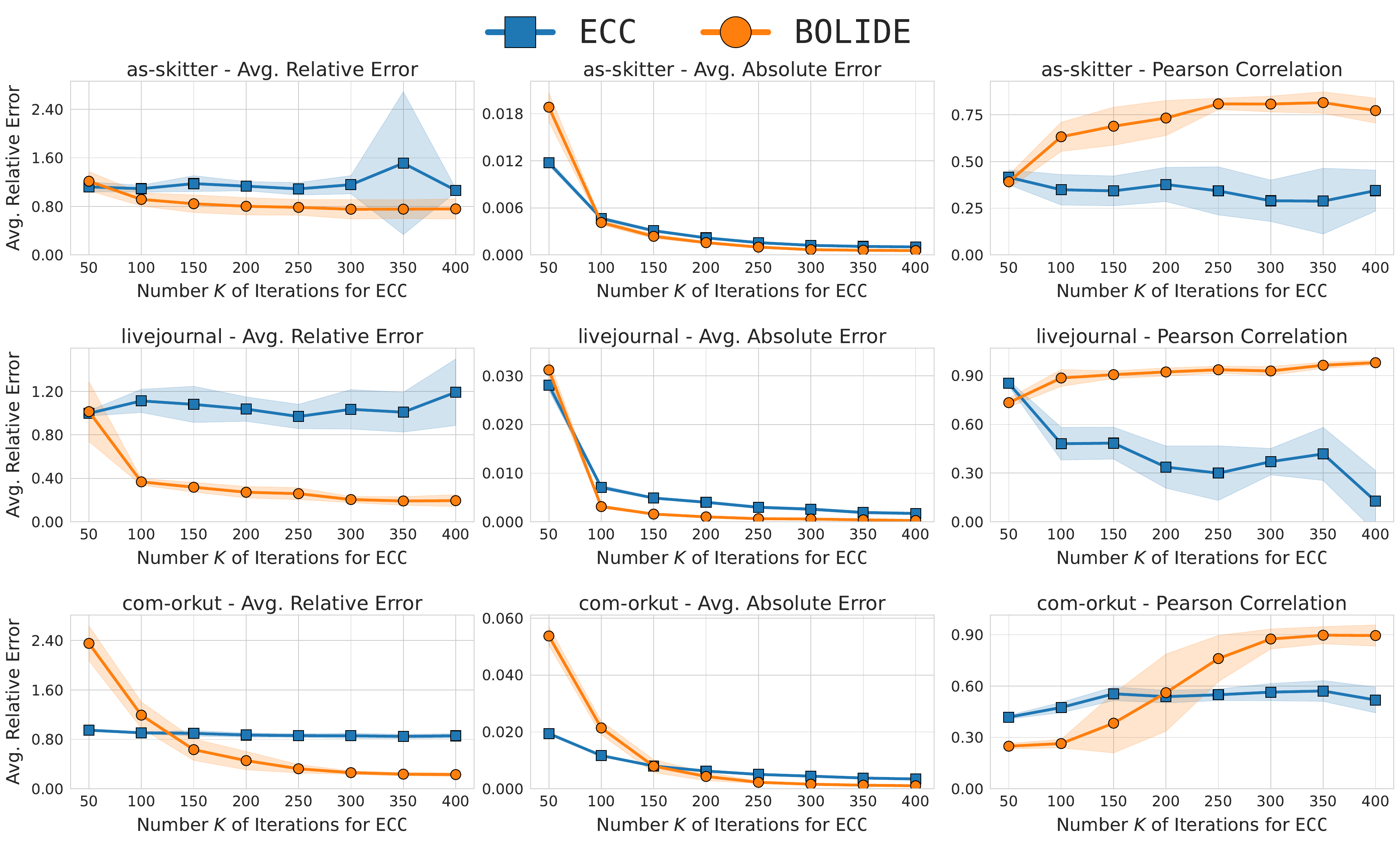}
		\caption{Comparison between \kpalg\ and \algname. For each dataset (row), we report Average Relative Error, Average Absolute Error and Pearson Correlation coefficient over columns, by varying the number $K$ of iterations for \kpalg. For each dataset and metric, the average and standard deviation over 10 independent repetitions are shown. }
		\label{fig:ecc_comparison}
	\end{figure*}
}

\newcommand{\FigurePerf}{
	\begin{figure*}[t]
		\centering
		\includegraphics[width=\textwidth]{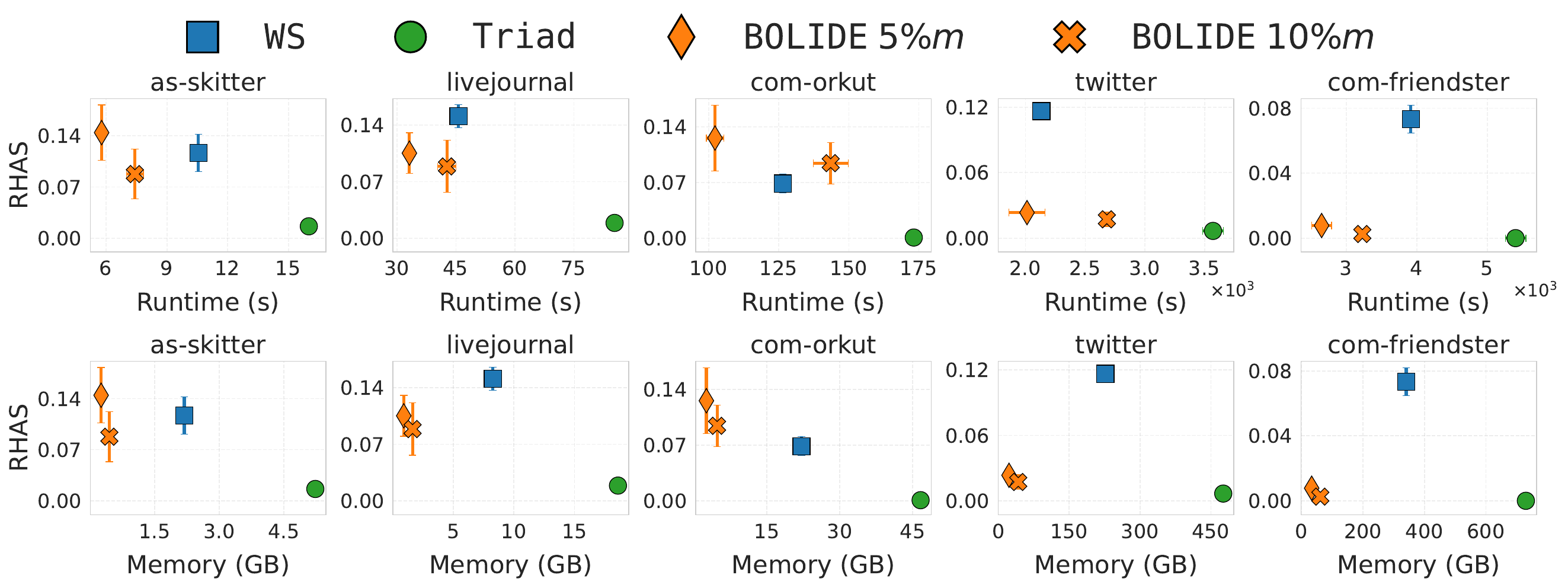}
		\caption{Comparison among \wedgesampling, \triadalg\ and \algname. For each algorithm and for each dataset (column), we report on the y-axis RHAS between exact and approximated NDCC distributions for power-two degree intervals (i.e., $D = [2^i, 2^{i + 1})$), and on the x-axis the runtime (first row) and the memory usage (second row). For all metrics, the average and 90\% confidence intervals over 10 independent repetitions are reported. }
		\label{fig:perf}
	\end{figure*}
}

\newcommand{\TableOvercount}{
	\begin{table}[t]
		\caption{Number $c = W / N$ of extra counters stored by our one-pass algorithm, where $N, N'$ are the number of nodes for which we estimate their degree, and the number of nodes for which we estimate triangles, and $W = \max(0, N' - N)$ is the number of ``wasted'' counters. 
			For each dataset and choice of auxiliary sample budget, the average and standard deviation over 10 independent repetitions are reported.}
		\label{tab:overcounts}
		\centering
		\begin{tabular}{lccc}
			& \multicolumn{3}{c}{Tail Auxiliary Sample Budget $B_{A_t}$} \\
			{Dataset} & {$0.005m$} & {$0.01m$} & {$0.015m$} \\
			\hline
			\texttt{SK}  & $0.00 \pm 0.00$ & $0.22 \pm 0.04$ & $0.55 \pm 0.03$ \\
			\texttt{LJ}  & $0.00 \pm 0.00$ & $0.00 \pm 0.00$ & $0.28 \pm 0.00$ \\
			\texttt{ORK} & $0.81 \pm 0.00$ & $1.58 \pm 0.01$ & $2.24 \pm 0.01$ \\
			\texttt{TW}  & $0.40 \pm 0.01$ & $0.73 \pm 0.01$ & $1.03 \pm 0.01$ \\
			\texttt{FR}  & $0.11 \pm 0.00$ & $0.53 \pm 0.00$ & $0.92 \pm 0.00$ \\
		\end{tabular}
	\end{table}
}

\newcommand{\TableShuffle}{
	\begin{table}[t]
		\caption{RHAS distance between exact and estimated NDCC distribution, for degree intervals of powers of 2 (i.e., $D_i = [2^i, 2^{i+1}]$), for different orderings of the stream. For each dataset and stream ordering, the average and standard deviation over 10 independent repetitions are reported.}
		\label{tab:shuffle}
		\centering
		\begin{tabular}{l cccc}
			& \multicolumn{4}{c}{Input Stream Ordering} \\
			\cmidrule(lr){2-5}
			Dataset & \texttt{Rand} & \texttt{Adj-Incr} & \texttt{Adj-Decr} & \texttt{Adj-Rand} \\
			\midrule
			\texttt{SK}  & $0.064 {{\scriptstyle \pm 0.046}}$ & $0.072 {{\scriptstyle \pm 0.056}}$ & $0.067 {{\scriptstyle \pm 0.040}}$ & $0.085 {{\scriptstyle \pm 0.064}}$ \\
			\texttt{LJ}  & $0.086 {{\scriptstyle \pm 0.054}}$ & $0.076 {{\scriptstyle \pm 0.052}}$ & $0.090 {{\scriptstyle \pm 0.057}}$ & $0.087 {{\scriptstyle \pm 0.049}}$ \\
			\texttt{ORK} & $0.115 {{\scriptstyle \pm 0.080}}$ & $0.086 {{\scriptstyle \pm 0.040}}$ & $0.100 {{\scriptstyle \pm 0.056}}$ & $0.143 {{\scriptstyle \pm 0.073}}$ \\
		\end{tabular}
	\end{table}
}

\newcommand{\TableKPComparison}{
	\begin{table*}[t]
		\caption{Comparison between \kpalg\ and \algname. For each dataset, we report: minimum degree $d$ of nodes estimated by \kpalg; degree threshold $\tau$ for \algname; number $|H|$ of nodes having degree $\ge \max\left(d, \tau\right)$ and estimated by both \kpalg\ and \algname; average relative error, average absolute error and Pearson correlation coefficient for both algorithms. For each dataset, the average and standard deviation over 10 independent repetitions are reported. 
		We highlight in bold the algorithm achieving the best performance for each considered metric.}
		\label{tab:lcc_comparison}
		\centering
		\setlength{\tabcolsep}{6pt}
		\renewcommand{\arraystretch}{1.2}
		\resizebox{\textwidth}{!}{
			\begin{tabular}{c c c c cc cc cc}
				\multirow{2}{*}{Dataset}
				& \multirow{2}{*}{$d$}
				& \multirow{2}{*}{$\tau$}
				& \multirow{2}{*}{$|H|$}
				& \multicolumn{2}{c}{Avg. Relative Error $\downarrow$}
				& \multicolumn{2}{c}{Avg. Absolute Error $\downarrow$}
				& \multicolumn{2}{c}{Pearson Corr. $\uparrow$} \\
				\cmidrule(lr){5-6} \cmidrule(lr){7-8} \cmidrule(lr){9-10}
				& & & & \kpalg & \algname & \kpalg & \algname & \kpalg & \algname \\
				\midrule
				\texttt{SK}  & $3109 {\scriptstyle \pm 49}$ & $311 {\scriptstyle \pm 49}$   & $233 {\scriptstyle \pm 4}$   & $1.13 {\scriptstyle \pm 0.07}$ & $\mathbf{0.80} {\scriptstyle \pm 0.13}$ & $(2.2 {\scriptstyle \pm 0.1}) \!\times\! 10^{-3}$ & $(\mathbf{1.6} {\scriptstyle \pm 0.1}) \!\times\! 10^{-3}$ & $0.38 {\scriptstyle \pm 0.09}$            & $\mathbf{0.73} {\scriptstyle \pm 0.09}$ \\
				\texttt{LJ}  & $3259 {\scriptstyle \pm 139}$ & $574 {\scriptstyle \pm 48}$   &  $72 {\scriptstyle \pm 5}$   & $1.03 {\scriptstyle \pm 0.11}$ & $\mathbf{0.27} {\scriptstyle \pm 0.05}$ & $(4.0 {\scriptstyle \pm 0.3}) \!\times\! 10^{-3}$ & $(\mathbf{1.0} {\scriptstyle \pm 0.1}) \!\times\! 10^{-3}$ & $0.34 {\scriptstyle \pm 0.12}$            & $\mathbf{0.92} {\scriptstyle \pm 0.02}$ \\
				\texttt{ORK} & $3082 {\scriptstyle \pm 71}$ & $926 {\scriptstyle \pm 29}$   & $493 {\scriptstyle \pm 4}$   & $0.87 {\scriptstyle \pm 0.03}$ & $\mathbf{0.45} {\scriptstyle \pm 0.14}$ & $(6.2 {\scriptstyle \pm 0.2}) \!\times\! 10^{-3}$ & $(\mathbf{4.3} {\scriptstyle \pm 1.4}) \!\times\! 10^{-3}$ & $0.54 {\scriptstyle \pm 0.03}$  & $\mathbf{0.56} {\scriptstyle \pm 0.21}$ \\
				\texttt{TW}  & $2867 {\scriptstyle \pm 56}$ & $3366 {\scriptstyle \pm 195}$ & $22k {\scriptstyle \pm 433}$ & $0.79 {\scriptstyle \pm 0.01}$ & $\mathbf{0.40} {\scriptstyle \pm 0.01}$ & $(3.1 {\scriptstyle \pm 0.1})  \!\times\!  10^{-2}$ & $(\mathbf{1.4} {\scriptstyle \pm 0.1}) \!\times\! 10^{-2}$ & $\mathbf{0.96} {\scriptstyle \pm 0.01}$ & $0.93 {\scriptstyle \pm 0.01}$          \\
				\texttt{FR}  & $2636 {\scriptstyle \pm 92}$ & $2996 {\scriptstyle \pm 4}$   &  $85 {\scriptstyle \pm 16}$   & $1.13 {\scriptstyle \pm 0.15}$ & $\mathbf{0.79} {\scriptstyle \pm 0.23}$ & $(3.5 {\scriptstyle \pm 0.5 }) \!\times\! 10^{-3}$ & $(\mathbf{1.7} {\scriptstyle \pm 0.3}) \!\times\! 10^{-3}$ & $0.48 {\scriptstyle \pm 0.22}$            & $\mathbf{0.75} {\scriptstyle \pm 0.11}$ \\
		\end{tabular}}
	\end{table*}
}

\newcommand{\TableHeadAblation}{
	\begin{table*}[t]
		\caption{Performance of \algname\ considering configurations \whead\ vs \wohead. For each dataset and configuration, we report average relative error, average absolute error and Pearson correlation coefficient, together with running time and peak memory consumption. Average and standard deviation over 10 independent repetitions are reported.}
		\label{tab:head_ablation}
		\centering
		\setlength{\tabcolsep}{5pt}
		\renewcommand{\arraystretch}{1.2}
		\resizebox{\textwidth}{!}{
			\begin{tabular}{c cc cc cc cc cc}
				\multirow{2}{*}{Dataset}
				& \multicolumn{2}{c}{Avg. Relative Error $\downarrow$}
				& \multicolumn{2}{c}{Avg. Absolute Error $\downarrow$}
				& \multicolumn{2}{c}{Pearson Corr. $\uparrow$}
				& \multicolumn{2}{c}{Running Time (s) $\downarrow$}
				& \multicolumn{2}{c}{Memory (GB) $\downarrow$} \\
				\cmidrule(lr){2-3} \cmidrule(lr){4-5} \cmidrule(lr){6-7} \cmidrule(lr){8-9} \cmidrule(lr){10-11}
				& \whead & \wohead
				& \whead & \wohead
				& \whead & \wohead
				& \whead & \wohead
				& \whead & \wohead \\
				\midrule
				\texttt{SK}  & $0.80 {\scriptstyle \pm 0.13}$ & $0.72 {\scriptstyle \pm 0.06}$ & $(1.6 {\scriptstyle \pm 0.1}) \!\times\! 10^{-3}$ & $(1.6 {\scriptstyle \pm 0.2}) \!\times\! 10^{-3}$ & $0.73 {\scriptstyle \pm 0.09}$ & $0.75 {\scriptstyle \pm 0.07}$ & $7.4 {\scriptstyle \pm 0.7}$   & $3.4 {\scriptstyle \pm 0.3}$  & $0.44$ & $0.07$ \\
				\texttt{LJ}  & $0.27 {\scriptstyle \pm 0.05}$ & $0.31 {\scriptstyle \pm 0.10}$ & $(1.0 {\scriptstyle \pm 0.1}) \!\times\! 10^{-3}$ & $(1.1 {\scriptstyle \pm 0.1}) \!\times\! 10^{-3}$ & $0.92 {\scriptstyle \pm 0.02}$ & $0.91 {\scriptstyle \pm 0.02}$ & $42.7 {\scriptstyle \pm 3.8}$  & $17.2 {\scriptstyle \pm 2.7}$ & $1.64$ & $0.24$ \\
				\texttt{ORK} & $0.45 {\scriptstyle \pm 0.14}$ & $0.46 {\scriptstyle \pm 0.14}$ & $(4.3 {\scriptstyle \pm 1.4}) \!\times\! 10^{-3}$ & $(4.4 {\scriptstyle \pm 1.4}) \!\times\! 10^{-3}$ & $0.56 {\scriptstyle \pm 0.21}$ & $0.54 {\scriptstyle \pm 0.24}$ & $143.5 {\scriptstyle \pm 10.2}$ & $54.9 {\scriptstyle \pm 4.9}$ & $4.76$ & $0.54$ \\
				\texttt{TW}  & $0.40 {\scriptstyle \pm 0.01}$ & $0.40 {\scriptstyle \pm 0.004}$ & $(1.4 {\scriptstyle \pm 0.1}) \!\times\! 10^{-2}$ & $(1.4 {\scriptstyle \pm 0.02}) \!\times\! 10^{-2}$ & $0.93 {\scriptstyle \pm 0.01}$  & $0.93 {\scriptstyle \pm 0.003}$ & $2680 {\scriptstyle \pm 21}$  & $852 {\scriptstyle \pm 46}$  & $42.24$ & $4.78$ \\
				\texttt{FR}  & $0.79 {\scriptstyle \pm 0.23}$ & $0.91 {\scriptstyle \pm 0.96}$ & $(1.7 {\scriptstyle \pm 0.3}) \!\times\! 10^{-3}$  & $(1.7 {\scriptstyle \pm 1.0}) \!\times\! 10^{-3}$  & $0.75 {\scriptstyle \pm 0.11}$  & $0.76 {\scriptstyle \pm 0.34}$  & $3226 {\scriptstyle \pm 42}$  & $1335 {\scriptstyle \pm 51}$ & $61.77$ & $8.05$ \\
		\end{tabular}}
	\end{table*}
}

\newcommand{\TableDelta}{
\begin{table*}[t]
	\centering
	\caption{Table reporting, for each degree interval $D = [2^i, 2^{i+1})$, the quantities $C(D)$ (number of nodes with degree in $D$), $\Triangles{D}{}$ (number of triangles incident to nodes with degree in $D$), $\Delta_V(D)$ (maximum number of triangles incident to a node with degree in $D$), $\Triangles{D^-}{}$ and $\Delta_E(D^+)$ (maximum number of triangles incident to an edge containing a node with degree in $D^+$). We set $\varepsilon = 0.05$ for computing intervals $D^- = [L(1 + \varepsilon), U(1 - \varepsilon))$ and $D^+ = [L(1 - \varepsilon), U(1 + \varepsilon) )$.
	}
	\label{tab:delta_ratio_perturbed}
	\setlength{\tabcolsep}{4pt}
	\resizebox{0.9\textwidth}{!}{
		\begin{tabular}{@{}l r r r r r r r r r r r r r r r@{}}
			$D$ & \multicolumn{5}{c}{{as-skitter}} & \multicolumn{5}{c}{{livejournal}} & \multicolumn{5}{c}{{com-orkut}} \\
			\cmidrule(lr){1-1} \cmidrule(lr){2-6} \cmidrule(lr){7-11} \cmidrule(lr){12-16}
			$[2^i, 2^{i+1})$
			& $C(D)$ & $\Triangles{D}{}$ & $\Triangles{D^-}{}$ & $\Delta_V(D)$ & $\Delta_E(D^+)$
			& $C(D)$ & $\Triangles{D}{}$ & $\Triangles{D^-}{}$ & $\Delta_V(D)$ & $\Delta_E(D^+)$
			& $C(D)$ & $\Triangles{D}{}$ & $\Triangles{D^-}{}$ & $\Delta_V(D)$ & $\Delta_E(D^+)$ \\
			\midrule
			
			$[2^{1}, 2^{2})$
			& 447k & 291k & 194k & 3 & 1.4k
			& 919k & 756k & 501k & 3 & 218
			& 81.6k & 65.8k & 46.2k & 3 & 102 \\
			
			$[2^{2}, 2^{3})$
			& 456k & 1.4M & 1.2M & 21 & 8.5k
			& 834k & 3.5M & 2.9M & 21 & 188
			& 148k & 595k & 517.7k & 21 & 117 \\
			
			$[2^{3}, 2^{4})$
			& 319k & 4.0M & 3.6M & 105 & 9.8k
			& 743k & 11M & 10.1M & 105 & 223
			& 300k & 4.3M & 4.0M & 105 & 493 \\
			
			$[2^{4}, 2^{5})$
			& 150k & 6.7M & 5.9M & 429 & 14k
			& 617k & 27.7M & 25M & 465 & 452
			& 555k & 25.0M & 22.1M & 433 & 1.7k \\
			
			$[2^{5}, 2^{6})$
			& 69k & 8.8M & 7.7M & 1.5k & 18k
			& 404k & 51.5M & 45.7M & 2.0k & 872
			& 772k & 100M & 87.1M & 1.6k & 3.2k \\
			
			$[2^{6}, 2^{7})$
			& 21.8k & 7.0M & 6.0M & 4.3k & 952
			& 185k & 70.9M & 62.8M & 7.9k & 1.3k
			& 685k & 247M & 221M & 5.3k & 3.0k \\
			
			$[2^{7}, 2^{8})$
			& 9.4k & 7.7M & 6.7M & 10.1k & 524
			& 62.3k & 110M & 97.8M & 31.8k & 1.2k
			& 333k & 326M & 298M & 16.0k & 2.6k \\
			
			$[2^{8}, 2^{9})$
			& 3.7k & 8.4M & 7.6M & 18.3k & 520
			& 19.4k & 158M & 144M & 114k & 715
			& 96.2k & 239M & 210M & 30.2k & 2.2k \\
			
			$[2^{9}, 2^{10})$
			& 1.4k & 9.0M & 7.8M & 34.3k & 1.0k
			& 4.9k & 91.8M & 81.7M & 214k & 1.0k
			& 29.7k & 199M & 186M & 57.9k & 2.4k \\
			
			$[2^{10}, 2^{11})$
			& 536 & 5.7M & 4.7M & 69.3k & 2.1k
			& 704 & 18.4M & 15.1M & 229k & 1.4k
			& 3.0k & 75.4M & 65.5M & 118k & 1.7k \\
			
			$[2^{11}, 2^{12})$
			& 211 & 4.5M & 4.2M & 107k & 4.1k
			& 161 & 4.6M & 3.9M & 123k & 2.5k
			& 735 & 48.8M & 41.7M & 266k & 3.1k \\
			
			$[2^{12}, 2^{13})$
			& 85 & 2.7M & 2.0M & 154k & 8.2k
			& 29 & 1.8M & 1.7M & 123k & 4.3k
			& 250 & 34.9M & 31.8M & 591k & 5.9k \\
			
			$[2^{13}, 2^{14})$
			& 60 & 3.6M & 3.2M & 283k & 16.2k
			& 11 & 1.3M & 1.2M & 368k & 4.5k
			& 99 & 28.1M & 24.8M & 1.0M & 9.1k \\
			
			$[2^{14}, 2^{15})$
			& 24 & 3.5M & 3.3M & 531k & 28.7k
			& 1 & 59k & 0 & 59k & 840
			& 27 & 16.8M & 16.3M & 1.7M & 9.1k \\

		\end{tabular}}
\end{table*}	
}

\renewcommand\vldbdoi{XX.XX/XXX.XX}
\renewcommand\vldbpages{XXX-XXX}
\renewcommand\vldbavailabilityurl{https://github.com/CristianBold4/BOLIDE}

\begin{document}
	
	\title{Single-Pass Estimation of the Clustering Coefficient Distribution in Graph Streams}
	
	%%
	%% The "author" command and its associated commands are used to define the authors and their affiliations.
	\author{Cristian Boldrin}
	%\authornote{Work done in part while visiting University of California, Santa Cruz.}
	%\authornote{Both authors contributed equally to this research.}
	%\orcid{1234-5678-9012}
	%\author{Cristian Boldrin}
	%\authornotemark[1]
	%\email{webmaster@marysville-ohio.com}
	\affiliation{%
		\institution{University of Padova}
		\city{Padova}
		\country{Italy}
	}
	\email{boldrincri@dei.unipd.it}

	\author{C. Seshadhri}
	\affiliation{%
		\institution{University of California, Santa Cruz}
		\city{Santa Cruz}
		\state{California}
		\country{USA}}
	\email{sesh@ucsc.edu}
	
	%%
	%% The abstract is a short summary of the work to be presented in the
	%% article.
	\begin{abstract}
		Triangle counting is one of the most fundamental problems in network analysis. 
		Given the massive sizes of real-world graphs, there is a long history of small-space streaming algorithms providing accurate estimates for this problem. 
		However, most of the results focus on estimating the total triangle count or the number of triangles incident to individual nodes. 
		In practice, one often wants fine-grained information to understand how triangles are distributed, as captured by \emph{clustering coefficients}. 
		In particular, a standard network analysis task requires computing the binned degree-wise clustering coefficient distribution, which provides a rich and informative summary of the structure of the graph.
		
		In this work we present \algname, the first efficient and practical algorithm for estimating binned degree-wise clustering coefficients in streaming. Our algorithm makes a \emph{single pass} over the edge stream, and is allowed to store only a \emph{small fraction} of the total number of edges.
		\algname\ carefully combines different sampling strategies to efficiently gather degree and triangle information across sets of nodes. 
		As a result, our algorithm \emph{provably} approximates the binned degree-wise clustering coefficients, and provides guarantees on the amount of memory used. 
		Our experimental evaluation shows that \algname\ accurately estimates clustering coefficient distributions while efficiently processing large datasets with \emph{billions} of edges and triangles.
		
	\end{abstract}
	
	\maketitle
	
	%%% do not modify the following VLDB block %%
	%%% VLDB block start %%%
	\pagestyle{\vldbpagestyle}
	\begingroup\small\noindent\raggedright\textbf{PVLDB Reference Format:}\\
	\vldbauthors. \vldbtitle. PVLDB, \vldbvolume(\vldbissue): \vldbpages, \vldbyear.\\
	\href{https://doi.org/\vldbdoi}{doi:\vldbdoi}
	\endgroup
	\begingroup
	\renewcommand\thefootnote{}\footnote{\noindent
		This work is licensed under the Creative Commons BY-NC-ND 4.0 International License. Visit \url{https://creativecommons.org/licenses/by-nc-nd/4.0/} to view a copy of this license. For any use beyond those covered by this license, obtain permission by emailing \href{mailto:info@vldb.org}{info@vldb.org}. Copyright is held by the owner/author(s). Publication rights licensed to the VLDB Endowment. \\
		\raggedright Proceedings of the VLDB Endowment, Vol. \vldbvolume, No. \vldbissue\ %
		ISSN 2150-8097. \\
		\href{https://doi.org/\vldbdoi}{doi:\vldbdoi} \\
	}\addtocounter{footnote}{-1}\endgroup
	%%% VLDB block end %%%
	
	%%% do not modify the following VLDB block %%
	%%% VLDB block start %%%
	\ifdefempty{\vldbavailabilityurl}{}{
		\vspace{.3cm}
		\begingroup\small\noindent\raggedright\textbf{PVLDB Artifact Availability:}\\
		The source code, data, and/or other artifacts have been made available at \url{\vldbavailabilityurl}.
		\endgroup
	}
	%%% VLDB block end %%%
	
	\section{Introduction} \label{sec:intro}
	
	Graphs are a ubiquitous representation of relationships among objects in real-world problems. 
	In many applications, understanding structural properties of the network is essential for downstream analyses.
	Among these properties, one of the most interesting is the \emph{local clustering coefficient}~\cite{watts1998collective}, which measures the fraction of connected pairs among the neighbors of a node.
	In other words, the local clustering coefficient of a node is the ratio between the number of triangles incident to the node and the number of triangles it could potentially participate in, given its degree.
	In several applications~\cite{zhang2017efficient,seshadhri2013fast,seshadhri2013triadic}, rather than considering local clustering coefficients for any node in the graph, the quantity of interest is the average clustering coefficient of nodes within a specific degree bin. 
	To this end, one can define the \emph{binned degree-wise clustering coefficient} (see Section~\ref{sec:preliminaries} for a formal definition).
	This measure reveals information about the triangles structure across the degree bins, and is a central tool in analyzing and modeling graphs~\cite{kaiser2008mean,sala2010measurement,ugander2011anatomy,seshadhri2012community,li2017clustering}, database benchmarking~\cite{ciglan2012benchmarking}, social networks~\cite{hardiman2013estimating}, graph embeddings~\cite{bianchi2020spectral}, and link prediction~\cite{WU20161}.
	For instance, binned degree-wise clustering coefficients are directly used to compare synthetic graphs
	to real ones~\cite{sala2010measurement,ugander2011anatomy}, and to fit or assess real-world graph generators~\cite{seshadhri2012community,pfeiffer2012fast,pfeiffer2014attributed}.

	Computing clustering coefficients typically requires triangle enumeration, a fundamental primitive in the domain of network science. 
	However, exact triangle computation becomes infeasible on modern graphs, with common public datasets having edges and triangles in the order of billions.
	For this reason, it is necessary to resort to efficient algorithms that operate under \emph{limited space} and provide \emph{high-quality} approximations.
	
	One of the most important models for handling massive data is the \emph{streaming} model. 
	The input graph is observed as a stream of edges arriving in arbitrary order, and the algorithm must process the stream by storing only a small fraction of the edges of the stream (e.g., via sampling).
	
	Approximating triangle counts in graph streams has been extensively studied, to the point of becoming a research area by its own.
	Since its introduction in~\cite{bar2002reductions}, the problem has received sustained attention for over two decades~\cite{jowhari2005new,buriol2007estimating,tsourakakis2009doulion,pavan2013counting,ahmed2014graph,mcgregor2016better,bera2017towards,kallaugher2017hybrid,stefani2017triest,turkoglu2017edge,turk2019revisiting,shin2017wrs}.
	Despite such a rich history, most results focus on estimating the total triangle count or triangles incident to individual nodes, while surprisingly little attention has been dedicated to estimating clustering coefficients. 
	This gap motivates our studied problem: given an input graph stream, can we design an efficient algorithm for provably computing high-quality estimates of binned degree-wise clustering coefficients?
	
	\PrintFigureIntro{NDCC}{1.2}
	
	\emph{Problem definition.} 
	We provide a brief description of the problem we consider, deferring technical details to Section~\ref{sec:preliminaries}. 
	We consider an input graph observed as an insertion-only edge stream, i.e., a sequence of edges arriving in arbitrary order. 
	%A streaming algorithm has no a priori knowledge of the graph; at each step, the algorithm receives the arriving edge, and must take irrevocably a decision.
	A streaming algorithm has no a priori knowledge of the graph, and observes edges one at a time. Each incoming edge must be processed irrevocably, and past edges are no longer accessible.
	The algorithm works with a limited amount of memory, which in practice we parameterize by a small fraction of the total number of edges (i.e., the stream length).
	We consider \emph{single-pass} algorithms, meaning that the stream can be processed only once.
	
	We study Node-averaged and Wedge-averaged binned Degree-wise Clustering Coefficient (henceforth, $\NodeAvgCC{D}{}$ and $\WedgeAvgCC{D}{}$ respectively), for any given degree \emph{interval} $D$.
	Conventionally, these measures define distributions $\{ \NodeAvgCC{D_i}{} \}_i$ and $\{ \WedgeAvgCC{D_i}{} \}_i$ when evaluated over a collection of degree intervals $\{ D_1, D_2, \dots \}$.
	Our aim is to design an efficient one-pass streaming algorithm to provably compute high-quality estimates of the binned degree-wise clustering coefficients distributions. 
	
	% -- our contributions
	\subsection{Our contributions} 
	\label{sec:contrib}
	We introduce algorithm \algname\ (\emph{B}inned degree \emph{O}ne-pass c\emph{L}ustering coef\-fic\emph{I}ents \emph{D}istribution \emph{E}stimation). 
	\begin{myitemize}
		\item \algname\ is the \emph{first} one-pass streaming algorithm for estimating binned degree-wise clustering coefficients for arbitrary degree intervals. 
		Our approach combines multiple sampling strategies for degree and triangle estimations, to derive a \emph{head} estimator (for any interval $D$ of low degrees) and a \emph{tail} estimator (for any interval $D$ of high degrees) for $\NodeAvgCC{D}{}$ and $\WedgeAvgCC{D}{}$.
		By carefully combining these estimators, \algname\ efficiently provides accurate approximations under arbitrary edge stream ordering.
		\item We analyze \algname\ and provide theoretical guarantees on the quality of estimates for the binned degree-wise clustering coefficients, as well as on the space used by our algorithm. We also show that \algname\ implicitly stores high-quality estimates of \emph{local clustering coefficients} for sufficiently high-degree nodes.
		\item We conduct an extensive experimental evaluation on very large graph streams, with up to $1.8B$ edges and $34.8B$ triangles. 
		Figure~\ref{fig:intro_figure_top} shows an example of our results. 
		The estimates from \algname\ are highly accurate across all datasets, with only a few outliers: the trends in clustering coefficients across degree bins are captured nearly perfectly, despite only storing 10\% of the edge stream. 
		\item We compare with the algorithm from Kutzkov and Pagh~\cite{kutzkov2013streaming}, which is (to the best of our knowledge) the only previous one-pass streaming algorithm that estimates local clustering coefficients for sufficiently high-degree nodes, showing that \algname\ achieves consistently lower errors.
		Moreover, we compare \algname\ with related state-of-the-art sampling approaches~\cite{seshadhri2013fast,sarpe2025efficient}, demonstrating scalability and efficiency in both runtime and memory usage, while providing high-quality estimates. 
		
	\end{myitemize}
	
	\section{Preliminaries}
	\label{sec:preliminaries}
	
	%Let $G = (V, E)$ be an undirected graph with no self-loops and no multiple edges, where $V$ and $E$ denote the set of nodes and the set of edges, respectively, with $|V| = n$ and $|E| = m$.
	Let $G = (V, E)$ be a simple and undirected graph, where $V$ and $E$ denote the set of nodes and the set of edges, respectively, with $|V| = n$ and $|E| = m$. Edge insertions are observed in arbitrary order through the graph stream $\Sigma = \{ e^{(1)}, \dots, e^{(m)}\}$. Each edge $e^{(t)} = \{u, v\} \in E$ is an unordered pair of nodes.
	%Henceforth, we omit the temporal superscript as we consider quantities at the end of the stream. 
	
	We say that $\Delta = \{ u, v, w \}$ is a triangle in $G$ if distinct edges $\{u, v\}, \{v, w\}$, and $\{u, w\}$ all appear in $E$. We also say that a triangle $\Delta$ is incident to $v$ if $v \in \Delta$; in this case, we refer to $\Delta$ as a local triangle for $v$.
	For a node $v$, let $\Deg{v}{}$ denote its degree, and let $\LocalTrianglesCard{v}{}$ be the number of triangles incident to $v$.
	The local clustering coefficient (LCC) of $v$ is defined as $\LocalCC{v}{} = \LocalTrianglesCard{v}{}/{\Deg{v}{} \choose 2}$, and corresponds to the ratio of triangles to the number of wedges ($2$-paths)
	centered at $v$. 
	
	Let $\DegreeSet{}$ be the set of degrees of the graph $G^{}$, i.e., $\DegreeSet{} = \{ d \colon  \exists v \in V \textit{ such that } \Deg{v}{} = d, \text{ with } d > 1\}$. 
	We ignore nodes with degree zero and one, as they cannot participate in triangles. 
	For a fixed set $D \subseteq \DegreeSet{}$ of degrees, we define $\InducedNodes{D}{} = \left\{ v \;|\; \Deg{v}{} \in D\right\}$. 
	Let $\NumNodes{D}{} = \CardSet{\InducedNodes{D}{}}$, and let $\Triangles{D}{}$ be the number of triangles incident to nodes having degree in $D$, i.e., $\Triangles{D}{} = \sum_{v \in V_D} \LocalTrianglesCard{v}{}$.
	We further denote with $\Delta_V(D)$ the maximum number of triangles incident to a node in $\InducedNodes{D}{}$, and with $\Delta_E(D)$ the maximum number of triangles (among those incident to nodes in $\InducedNodes{D}{}$) adjacent to an edge.
	We are now ready to introduce the fundamental binned degree-wise clustering coefficient, common in previous works~\cite{zhang2017efficient,seshadhri2013fast,seshadhri2013triadic}, and central in our paper. 
	
	\begin{definition}[{Node-averaged binned Degree-wise Clustering Coefficient}]
		\label{def:degree_cc_node}
		Let $D \subseteq \DegreeSet{}$ be a set of degrees of the graph. The node-average binned degree-wise clustering coefficient (NDCC) with respect to $D$ is the average local clustering coefficient on all nodes in $\InducedNodes{D}{}$, that is:
		\begin{equation*}
			\NodeAvgCC{D}{} = \frac{1}{\left| \InducedNodes{D}{} \right|} \sum_{v \in \InducedNodes{D}{}} \LocalCC{v}{}.
		\end{equation*}
	\end{definition}
	
	Similarly, we define the binned degree-wise clustering coefficient averaged over wedges. 
	\begin{definition}[{Wedge-averaged binned Degree-wise Clustering Coefficient}]
		\label{def:degree_cc_wedge}
		Let $D \subseteq \DegreeSet{}$ be a set of degrees of the graph. The wedge-average binned degree-wise clustering coefficient (WDCC) with respect to $D$ is the average (over wedges) clustering coefficient on all nodes in $V_D$, that is:
		\begin{equation*}
			\WedgeAvgCC{D}{} = \frac{\sum_{v \in \InducedNodes{D}{}} \LocalTrianglesCard{v}{}}{ \sum_{v \in \InducedNodes{D}{}} {\Deg{v}{} \choose 2}}.
		\end{equation*}
	\end{definition}
	
	Notice that, if the set $D$ contains all graph degrees (i.e., $D = \DegreeSet{}$), then $\WedgeAvgCC{D}{}$ corresponds to the global clustering coefficient, also known as transitivity of the graph.
	Throughout the manuscript, we restrict the set $D$ to meaningful and contiguous \emph{intervals}, i.e., $D = [L, U) \cap \mathbb{Z}^+$ for some $L < U$ with $U = \Theta(L)$.
	
	Henceforth, ``with high probability'' means probability at least $1 - 1 / n^c$, for some constant $c \ge 1$. We use $\bigOmegatilde{n}$ to hide poly-logarithmic factors in $n$, i.e., $\bigOmega{n \cdot \log^c n}$.

	\section{Related Work}
	\label{sec:related_work}

	Since the introduction of the clustering coefficients~\cite{watts1998collective}, a substantial line of work has investigated both its algorithmic computation and its variants. Prior work has considered the average local clustering coefficient over nodes~\cite{buriol2007estimating,schank2005approximating}, over wedges~\cite{etemadi2017bias}, as well as weighted generalizations~\cite{lattanzi2016efficient}.
	
	When the input graph is fully available, sampling-based methods are among the most widely used approaches for estimating binned degree-wise clustering coefficients. Two prominent families are edge-based sampling methods~\cite{de2022estimating,sarpe2025efficient} and wedge sampling methods~\cite{seshadhri2013triadic,etemadi2017bias,seshadhri2014wedge}.
	
	Within the one-pass streaming setting, which is the focus of this paper, only~\cite{kutzkov2013streaming} (to the best of our knowledge) has investigated a variant of our studied problem. Specifically, the algorithm proposed in~\cite{kutzkov2013streaming} is restricted to estimating the local clustering coefficients of sufficiently high-degree vertices, and cannot adapt to the entire degree distribution.
	To address this limitation, our approach to estimating binned degree-wise clustering coefficients in one-pass combines degree estimation with approximate triangle counting. 
	
	% -- ccdh
	The problem of degree estimation under restricted resources has been studied through the task of approximating the complementary cumulative degree histogram (CCDH), which asks for each degree $d$ the number of vertices of degree at least $d$. 
	This problem has been investigated in both sublinear query models~\cite{eden2018provable,bishnu2025towards} and streaming~\cite{simpson2015catching,bishnu2025towards}. 
	Our degree estimator is essentially equivalent to \texttt{headtail}~\cite{simpson2015catching}, a streaming CCDH estimator which combines different sampling mechanisms to obtain accurate estimates across head and tail regions of the degree distribution.
	
	% -- triangle counting
	Triangle counting in streaming has been intensely studied over the last few decades~\cite{bar2002reductions,jowhari2005new,buriol2007estimating,tsourakakis2009doulion,pagh2012colorful,pavan2013counting,kutzkov2013streaming,JhSePi13,lim2015mascot,mcgregor2016better,bera2017towards,kallaugher2017hybrid,stefani2017triest,shin2017wrs,shin2018think,jayaram2021optimal,chen2022triangle,boldrin2024fast}. 
	In single-pass streams, many algorithms have focused on sampling-based triangle estimation, especially via edge sampling and reservoir-based methods~\cite{ahmed2017sampling,stefani2017triest,shin2017wrs,shin2018think,boldrin2024fast,chen2022triangle,lim2015mascot}. These methods have been studied from both theoretical and practical perspectives, achieving high-quality estimates of the global and local triangle counts on real-world datasets.
	However, all existing streaming algorithms are designed to estimate local or global triangle counts, rather than binned degree-wise clustering coefficients. Moreover, since they operate by uniformly sampling edges (and consequently nodes are kept with probability proportional to their degrees), they induce a degree bias toward higher-degree nodes. 
	As a result, adapting them to obtain accurate estimates over arbitrary degree intervals, especially intervals containing only low-degree vertices, is nontrivial in one-pass streams and requires additional correction. This is precisely the challenge we address in our paper.

	\section{Two-Pass Algorithm}
	\label{sec:two_pass_algorithm}
	For ease of presentation, we first introduce a two-pass streaming algorithm. 
	Approximating the binned degree-wise clustering coefficients essentially boils down to estimating degrees and triangle counts, and carefully combining these estimates. 
	Hence, the first pass focuses on sampling nodes to estimate their degrees, while the second pass estimates the number of triangles incident to the sampled nodes only.
	Finally, in Section~\ref{sec:one_pass_algo}, we show how to ``merge'' the two passes into a single pass without losing guarantees on the quality of the returned estimates.

	% first pass
	\begin{algorithm}[b]
		\caption{\algnamefirstpass $\left( \Sigma, p_h, p_t\right)$}
		\label{alg:first_pass}
		\LinesNumbered
		\kwInput{Graph stream $\Sigma$; \\
			Probability $p_h$ of sampling head nodes; \\
			Probability $p_t$ of sampling tail nodes.}
		\kwOutput{Estimate of degree for sampled head and tail nodes.}
		Initialize empty maps $S_h, S_t \gets \emptyset$\label{line:fp_init-first}\;
		%Init. edge sample $E_t \gets \emptyset$\label{line:fp_init-second}\;
		\For{\textup{each incoming edge} $e = \{u, v\}$ \textup{in the stream} $\Sigma$}{\label{line:fp_for-stream}
			%\If{$u \notin S_h$}{ \label{line:fp_head_first}
				%	\lIf{$hash[u] < p_h$}{$S_h[u] \gets 1$} \label{line:fp_head_last}
				%}
			%\lElse{{$S_h[u] \gets S_h[u] + 1$\label{line:fp_head_else}}}
			%Repeat the above steps~\ref{line:fp_head_first}-\ref{line:fp_head_else} also for $v$\label{line_fp-repeat}\;
			%$E_t \gets E_t \cup \{ \{ u, v\} \}$ with probability $p_t$\label{line:fp_tail}\;
			\If{$u \notin S_h$}{\label{line:fp_head_first}
				\lIf{$hash[u] < p_h$}{$S_h[u] \gets 1$\label{line:fp_head_last}}}
			\lElse{$S_h[u] \gets S_h[u] + 1$}\label{line:fp_head_else}
			\lIf{$u \notin S_t$}{set $S_t[u] \gets 1$ with probability $p_t$} \label{line:fp_tail_first}
			\lElse{$S_t[u] \gets S_t[u] + 1$} \label{line:fp_tail_last}
			Repeat the above steps~\ref{line:fp_head_first}-\ref{line:fp_tail_last} also for $v$\;\label{line:fp_repeat}
			
		}
		Let $\ell(r) = \lceil \frac{1 - p_t - (1 - p_t)^{r+1} - r p_t (1 - p_t)^r}{p_t (1 - (1 - p_t)^r)} \rceil$\;\label{line:fp_deg_correction_first}
		Find $r$ s.t. $S_t[v] = r - \ell(r)$ and set $S_t[v] \gets r$, for each $v \in S_t$\; \label{line:fp_deg_correction_last}
		\Return head node-degrees $S_h$, tail (estimated) node-degrees $S_t$\label{line:fp_return}\;
	\end{algorithm}
	
	\AuxiliarySampleMotivation
	
	\emph{First Pass.}
	The first pass (Algorithm~\ref{alg:first_pass}) is inherited from~\cite{simpson2015catching}, whose focus is estimating complementary cumulative degree histogram in graph streams. We borrow the same techniques for estimating degrees of sampled nodes.
	The high-level idea is to maintain a uniform random sample of nodes via hashing (for low-degree nodes, \emph{head} of degree distribution), and a separate sample of nodes picked with probability proportional to their degrees (for high-degree nodes, \emph{tail} of degree distribution).
	
	\algnamefirstpass\ (Alg.~\ref{alg:first_pass}) takes as input the graph stream $\Sigma$ and probabilities $p_h, p_t$ of sampling head and tail nodes, respectively. 
	We assume the existence of a hash function $hash : V \to [0,1]$ that maps nodes uniformly to the interval $[0,1]$.
	First, maps $S_h, S_t$ for storing degrees of (sampled) head and tail nodes are initialized (line~\ref{line:fp_init-first}). 
	For each incoming edge $\Edge{u}{v}$, if node $u \notin S_h$ then \algnamefirstpass\ starts tracking exactly its degree if $hash[u] < p_h$; otherwise, it increments its tracked degree (line~\ref{line:fp_head_first}-\ref{line:fp_head_else}).
	(The probability that $hash[u] < p_h$ is exactly $p_h$.) 
	Next, if $u \notin S_t$ we set $S_t[u] = 1$ with probability $p_t$. Otherwise, if $u$ is already in $S_t$, we increment its degree. 
	Concretely, once we sample a node, we start tracking its degree in $S_t$. This procedure is also referred to as ``sample-and-hold''. 
	The same steps are then applied also for node $v$.
	
	At the end of the stream, degree estimates for (sampled) tail nodes are computed (lines~\ref{line:fp_deg_correction_first}-\ref{line:fp_deg_correction_last}).
	Given that degrees of nodes in $S_t$ follow a (truncated) geometric distribution, we need to consider the expected loss (denoted as $\ell(r)$ for degree $r$) of such a distribution to correct estimates. 
	Intuitively, for each node in $S_t$, we need to account for a correction term that quantifies how late we started tracking its degree during the stream. 
	(See~\cite{simpson2015catching} for further details.)
	
	As we show in Section~\ref{sec:additional_proofs} of the Appendix, it turns out that the corrected degrees are approximately close to the true degrees, i.e., $S_t[v] \approx \Deg{v}{}$ for any ``high-degree'' node $v$, with high probability. (``High-degree'' nodes are defined by the choice of $p_t$.)
	
	% second pass
	\emph{Second Pass.}
	The second pass (Algorithm~\ref{alg:second_pass}) estimates triangles via edge sampling with fixed probability (common in prior work~\cite{lim2015mascot,shin2018think}).
	For our purpose, we are interested only in counting triangles incident to nodes tracked in the first pass (i.e., nodes in $S_h, S_t$, returned by Alg.~\ref{alg:first_pass}), for which we know the degree (or estimated degree).
	To this end, we maintain separate samples dedicated to incoming edges incident to the nodes in $S_h$ and $S_t$, which we denote as head and tail \emph{main} samples, respectively. 
	For each incoming edge $e = \Edge{u}{v}$, triangle counting is performed by checking the number of wedges closed by $e$ in the main samples (i.e., a pair of edges that form a triangle with $e$). 
	Next, if we have either $u$ or $v$ contained in $S_h \cup S_t$, then we independently sample $e$ in the respective head or tail main sample; otherwise, we simply ignore edge $e$. 
	The estimates of triangles incident to nodes in $S_h \cup S_t$ can then be rescaled by an appropriate factor to obtain unbiased estimates.
	
	Unfortunately, this strategy fails. 
	Depending on the arrival order of the triangle edges, we cannot guarantee unbiased estimates using \emph{only} the main samples. 
	For instance, consider a triangle $\{u, v, w\}$ with $w \in S_h \cup S_t$ and $u, v \notin S_h \cup S_t$.
	If $\Edge{u}{v}$ is the first (or second) edge incoming in the stream (relative to edges $\Edge{u}{v}, \Edge{v}{w}, \Edge{w}{u}$ forming the triangle), then we would not include it in our main sample, as neither $u$ nor $v$ belongs to $S_h \cup S_t$. 
	Consequently, we lose the triangle $\{u, v, w\}$, inducing a systematic undercount in estimating triangles incident to node $w$. 
	For this reason, we introduce \emph{auxiliary} samples to store edges with no endpoints in $S_h \cup S_t$.
	Intuitively, the auxiliary samples help catch the triangles missed by the main samples alone, guaranteeing unbiased estimates.
	A visualization is given in Figure~\ref{fig:aux_sample}.

	\begin{algorithm}[b]
		\caption{\algnamesecondpass $\left( \Sigma, S_h, S_t, p_{M_h}, p_{M_t}, p_{A_h}, p_{A_t} \right)$}
		\label{alg:second_pass}
		\LinesNumbered
		%	\DontPrintSemicolon
		\kwInput{Graph stream $\Sigma$; \\
			Head node-degree map $S_h$ (from previous pass); \\
			Tail node-degree map $S_t$ (from previous pass); \\
			Prob. $p_{M_h}$ for sampling edges incident to head nodes; \\
			Prob. $p_{M_t}$ for sampling edges incident to tail nodes; \\
			Prob. $p_{A_h}$ for sampling edges \emph{not} incident to head nodes; \\
			Prob. $p_{A_t}$ for sampling edges \emph{not} incident to tail nodes;
		}
		\kwOutput{Estimate of triangles incident to nodes in $S_h$ and $S_t$.}
		%\tcp{Initialization}
		Initialize edge sets $M_h, M_t, A_h, A_t \gets \emptyset$\;\label{line:sp_init_first}
		Initialize node-triangle maps $\hat{T}_h, \hat{T}_t \gets \emptyset$\;\label{line:sp_init_last}
		%Initialize number $h$ of head edges, number $t$ of tail edges: $h, t \longleftarrow 0$\;
		%nitialize number $a_h$ of auxiliary head edges, number $a_t$ of auxiliary tail edges: $a_h, a_t \longleftarrow 0$\;
		\For{\textup{each incoming edge} $e = \{u, v\}$ \textup{in the stream} $\Sigma$}{\label{line:sp_for-stream}
			\texttt{CountTriangles}($\{u, v\}, M_h, p_{M_h}, A_h, p_{A_h}, \hat{T}_h$)\label{line:sp_ct_head}\;
			\texttt{CountTriangles}($\{u, v\}, M_t, p_{M_t}, A_t, p_{A_t}, \hat{T}_t$)\label{line:sp_ct_tail}\;
			\lIf{$u \in S_h$ or $v \in S_h$}{\texttt{SampleEdge}($\{u, v\}, M_h, p_{M_h}$)\label{line:sp_se_h}}
			\lElse{\texttt{SampleEdge}($\{u, v\}, A_h, p_{A_h}$)\label{line:sp_se_ah}}
			\lIf{$u \in S_t$ or $v \in S_t$}{\texttt{SampleEdge}($\{u, v\}, M_t, p_{M_t}$)\label{line:sp_se_t}}
			\lElse{\texttt{SampleEdge}($\{u, v\}, A_t, p_{A_t}$)\label{line:sp_se_at}}
		}
		\Return head triangle estimates $\hat{T}_h$, tail triangle estimates $\hat{T}_t$.\label{line:sp_return}
	\end{algorithm}
	
	\begin{algorithm}[t]
		\caption{\texttt{CountTriangles}$ \left( \{u, v\}, M, p_M, A, p_A, \hat{T} \right)$}
		\label{alg:count_triangles}
		\LinesNumbered
		\kwInput{
			edge $\{u, v\}$; \\
			main sample $M$; prob.\ $p_M$ for sampling edges in $M$; \\
			auxiliary sample $A$; prob.\ $p_A$ for sampling edges in $A$; \\
			Node-triangle map $\hat{T}$.
		}
		%\tcp{Find all neighbors of u and v across both samples}
		Let $Adj_M(u), Adj_M(v)$ be the adjacency lists of  $u, v$ in sample $M$\;
		Let $Adj_A(u), Adj_A(v)$ be the adjacency lists of  $u, v$ in sample $A$\;
		Let $\widehat{N}(u) \gets Adj_M(u) \cup Adj_A(u)$\;
		Let $\widehat{N}(v) \gets Adj_M(v) \cup Adj_A(v)$\;
		% \tcp{Iterate only through valid triangles}
		\For{each node $w \in \widehat{N}(u) \cap \widehat{N}(v)$}{
			\uIf{$w \in Adj_M(u)$ and $w \in Adj_M(v)$}{
				Initialize $\hat{T}[u]$, $\hat{T}[v]$, $\hat{T}[w]$ to zero if not set yet\;
				Increment $\hat{T}[u]$, $\hat{T}[v]$, $\hat{T}[w]$ by $1 / p_M^2$\;
			}
			\ElseIf{$w \in Adj_M(u)$ or $w \in Adj_M(v)$}{
				Initialize $\hat{T}[u]$, $\hat{T}[v]$, $\hat{T}[w]$ to zero if not set yet\;
				Increment $\hat{T}[u]$, $\hat{T}[v]$, $\hat{T}[w]$ by $1 / \left( p_M \cdot p_A \right)$\;
			}
		}
	\end{algorithm}

	We now describe \algnamesecondpass\ (Algorithm~\ref{alg:second_pass}). 
	The algorithm takes as input the graph stream $\Sigma$, the degree maps $S_h, S_t$ of nodes sampled in the first pass (Alg.~\ref{alg:first_pass}), the probabilities $p_{M_h}, p_{M_t}$ for sampling edges in the head and tail main samples, and the probabilities $p_{A_h}, p_{A_t}$ for sampling edges in the head and tail auxiliary samples. 
	First, main samples $M_h$ and $M_t$, auxiliary samples $A_h$ and $A_t$, and node-triangles maps $\hat{T}_h$ and  $\hat{T}_t$ are initialized (lines~\ref{line:sp_init_first}-\ref{line:sp_init_last}).
	For each incoming edge $e = \Edge{u}{v}$ in the stream, \algnamesecondpass\ counts each occurrence of triangles containing $\{u, v\}$ (lines~\ref{line:sp_ct_head},~\ref{line:sp_ct_tail}).  
	Then, if the incoming edge has at least one endpoint in the set $S_h$ of head nodes, it is included in the main sample $M_h$ with probability $p_{M_h}$ (line~\ref{line:sp_se_h}); otherwise, it is included in the auxiliary sample $A_h$ with probability $p_{A_h}$ (line~\ref{line:sp_se_ah}). An analogous procedure is then repeated with respect to the set $S_t$ of tail nodes (lines~\ref{line:sp_se_t}-\ref{line:sp_se_at}). 
	Finally, triangle estimates $\hat{T}_h, \hat{T}_t$ (if any) for nodes in $S_h$ and $S_t$ are returned (line~\ref{line:sp_return}). 
	
	\algnamesecondpass\ makes use of two subroutines, \texttt{CountTriangles} and \texttt{SampleEdge}.
	\texttt{CountTriangles} (Algorithm~\ref{alg:count_triangles}) counts a triangle closed by the current edge in the stream if \emph{at least one} of its two edges (when the closing one arrives) is in the \emph{main} sample. The probability that such a triangle has been sampled is used to update the relevant counts; this ensures that the final estimate is unbiased. This step is crucial to avoid the problems of Figure~\ref{fig:aux_sample}.
	%By doing that, only triangles incident to nodes for which we know their degrees (or estimated degrees) are counted, allowing for the downstream estimation of clustering coefficients. 
	\texttt{SampleEdge} (see Section~\ref{sec:additional_algo} of the Appendix for pseudocode) adds the given edge independently to the given sample with the given probability.

	\emph{Estimates.}
	By maintaining independence across head and tail (main and auxiliary) samples, our algorithm can be treated as computing two different estimators for clustering coefficients. 
	The head estimator uses triangle estimates in $\hat{T}_h$ and degrees in $S_h$, while the tail estimator uses $\hat{T}_t$ and $S_t$.
	We now describe algorithm \algnameestimate\ (Algorithm~\ref{alg:estimate}), which computes estimates of NDCC and WDCC after the two passes.
	The algorithm takes as input the degree interval $D = [L, U)$, the probability $p_t$, node-degree maps $S_h, S_t$ from the first pass (Alg.~\ref{alg:first_pass}), and node-triangles map $\hat{T}_h, \hat{T}_t$ from the second pass (Alg.~\ref{alg:second_pass}).
	We assume $\hat{T}_h[u] = 0$ for node $u \in S_h, u \notin \hat{T}_h$, and equivalently for $\hat{T}_t$.
	First, \algnameestimate\ computes the degree threshold $\tau$ for distinguishing the use of the head and tail estimator (line~\ref{line:tau}). For simplicity, if $L < \tau < U$, then we set $\tau = U$. 
	If $\tau \ge U$, then estimates are computed by using the head estimator (lines~\ref{line:head_first}-\ref{line:head_last}); otherwise (i.e., $\tau < U$, and $\tau \le L$ from the choice of $\tau$) by using the tail estimator (lines~\ref{line:tail_first}-\ref{line:tail_last}). At the end, the algorithm returns estimates $\widehat{NDCC}(D)$ and $\widehat{WDCC}(D)$ for degree interval $D$ (line~\ref{line:return}). 
	
	\begin{algorithm}[h]
		\caption{\texttt{Estimate}$\left( D, p_t, S_h, \hat{T}_h, S_t, \hat{T}_t \right)$}
		\label{alg:estimate}
		\LinesNumbered
		\kwInput{Degree interval $D = [L, U)$; \\ %Prob. $p_h$; Prob. $p_t$\\
			Probability $p_t$ of sampling tail nodes; \\
			Node-degree map $S_h$ with $\Deg{u}{}$ for each $u \in S_h$;\\
			Node-triangles map $\hat{T}_h$ with $\hat{T}_h[u]$ for each $u \in S_h$;\\
			Node-degree map $S_t$ with $\EstDeg{v}{}$ for each $v \in S_t$;\\
			Node-triangles map $\hat{T}_t$ with $\hat{T}_t[v]$ for each $v \in S_t$.
		}
		Set $\tau = \frac{1}{\varepsilon p_t}$ for any $\varepsilon \in (0, 1/2)$. If $L < \tau < U$, then let $\tau = U$\;\label{line:tau}
		\If{$\tau \ge U$}{\label{line:head_first}
			$\widehat{NDCC}(D) \gets \frac{1}{ \left| \left\{ u \in S_h, \Deg{u}{} \in D \right\} \right| } \sum_{u \in S_h, \Deg{u}{} \in D} \frac{\hat{T}_h[u]}{{\Deg{u}{} \choose 2}}$\;
			$\widehat{WDCC}(D) \gets \frac{\sum_{u \in S_h, \Deg{u}{} \in D} \hat{T}_h[u]}{\sum_{u \in S_h, \Deg{u}{} \in D} {\Deg{u}{} \choose 2}}$\;\label{line:head_last}
		}
		\Else{\label{line:tail_first}
			$\widehat{NDCC}(D) \gets \frac{1}{ \left| \left\{ v \in S_t, \lceil \EstDeg{v}{} \rceil \in D \right\} \right| } \sum_{v \in S_t, \lceil \EstDeg{v}{} \rceil \in D} \frac{\hat{T}_t[v]}{{\lceil \EstDeg{v}{} \rceil \choose 2}}$\;
			$\widehat{WDCC}(D) \gets \frac{\sum_{v \in S_t, \lceil \EstDeg{v}{} \rceil \in D} \hat{T}_t[v]}{\sum_{v \in S_t, \lceil \EstDeg{v}{} \rceil \in D} {\lceil \EstDeg{v}{} \rceil \choose 2}}$\;\label{line:tail_last}
		}
		\Return $\widehat{NDCC}(D), \widehat{WDCC}(D) $\;\label{line:return}
	\end{algorithm}

	\subsection{Analysis}
	\label{sec:analysis}
	In the following, we first present our main theoretical contributions, which provably demonstrate the quality of our returned estimates. 
	We fix a degree interval $D = [L,U)$, and analyze the quality of estimates from the head and tail estimator separately, with respect to the exact quantities $\NodeAvgCC{D}{}$ and $\WedgeAvgCC{D}{}$.
	We recall the parameters used: $p_h$, $p_t$ are the probabilities of sampling head and tail nodes; $p_{M_h}, p_{M_t}$ are the probabilities for sampling edges within the head and tail main samples; $p_{A_h}, p_{A_t}$ are the probabilities for sampling edges within the head and tail auxiliary samples.
	In our analysis, we give lower bounds for all these parameters, \emph{except} for $p_t$, which implicitly determines whether the head or tail estimator should be used (i.e., $p_t$ defines the degree threshold $\tau$, line~\ref{line:tau} of Algorithm~\ref{alg:estimate}).
	Further details on the choice of $p_t$ are given in Section~\ref{sec:one_pass_algo}.
	We use parameters $\eps$ for multiplicative error and $\eta$ for additive error.
	% For the theoretical analysis, as in standard in the streaming literature, we do not optimize dependencies on $\poly(\eps^{-1}\eta^{-1}\log n)$ factors.
	All the proofs are in Section~\ref{sec:additional_proofs} of the Appendix.
	
	\emph{Head Estimator.}
	We start by focusing on the head estimator. Note that the first pass counts exactly the degrees of all vertices in $S_h$, since nodes are sampled (with probability $p_h$) via hashing.
	The estimator error comes from the estimate of the number $\NumNodes{D}{}$ of nodes having degree in $D$, and from the estimates of triangles contained in $\hat{T}_h$. 
	We state the following theorem.
	\begin{theorem}[Head Estimator]
		\label{thm:head_estimator}
		Consider degree interval $D = [L, U)$ such that $L < U \le \tau$. For any $\varepsilon \in (0, 1/2), \eta > 0$, if
		$$
		p_h = \bigOmegatilde{ \frac{1}{\eta^2} \left( \frac{1}{\NumNodes{D}{}} + \frac{\Delta_V(D)}{\Triangles{D}{}} \right)}  \text{ and }
		$$
		$$
		p_{M_h} p_{A_h} = \bigOmegatilde{\frac{1}{\varepsilon^2} \left( \eta^2 + \frac{\Delta_E(D)}{ \Triangles{D}{}} \right) },
		$$
		then Algorithm~\ref{alg:estimate} outputs values $\widehat{NDCC}(D), \widehat{WDCC}(D)$ such that:
		\begin{align*}
			& \left| \widehat{NDCC}(D) - \NodeAvgCC{D}{} \right| \le \varepsilon \cdot \NodeAvgCC{D}{} +  \eta, \text{ and} \\
			& \left| \widehat{WDCC}(D) - \WedgeAvgCC{D}{} \right| \le \varepsilon \cdot \WedgeAvgCC{D}{} +  \eta,
		\end{align*}
		with high probability.
	\end{theorem}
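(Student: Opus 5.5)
The plan is to decompose both estimators into a ratio of two sampled sums and control numerator and denominator separately, conditioning on the head node sample $S_h$. For $u\in V_D$ let $X_u=\mathbf{1}[u\in S_h]$. These are independent Bernoulli$(p_h)$ variables because the hash is uniform, and degrees of sampled nodes are exact. I would write
\[
\widehat{NDCC}(D)=\frac{\sum_{u\in V_D} X_u \hat{T}_h[u]/\binom{\Deg{u}{}}{2}}{\sum_{u\in V_D} X_u},
\qquad
\widehat{WDCC}(D)=\frac{\sum_{u\in V_D} X_u \hat{T}_h[u]}{\sum_{u\in V_D} X_u \binom{\Deg{u}{}}{2}}.
\]
The argument then proceeds in two stages: first the node-sampling stage, which concerns the $X_u$'s with the true $\LocalTrianglesCard{u}{}$ in place of $\hat{T}_h[u]$; second the edge-sampling stage, which concerns $\hat{T}_h[u]$ versus $\LocalTrianglesCard{u}{}$ conditional on $S_h$.

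\emph{Step 1 (unbiasedness).} For a fixed $S_h$, consider any triangle $\{u,v,w\}$ with $u\in S_h$. Each of its two edges that precede the closing edge lies either in $M_h$ (if it has an endpoint in $S_h$) or in $A_h$, and the two memberships are independent. The edge incident to $u$ must be in the main pool. Hence the triangle is counted with weight $1/p_{M_h}^2$ or $1/(p_{M_h}p_{A_h})$, and the probability of counting it is exactly the reciprocal of that weight. This handles the cases in Figure~\ref{fig:aux_sample}, so $\mathbb{E}[\hat{T}_h[u]\mid S_h]=\LocalTrianglesCard{u}{}$.

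\emph{Step 2 (node sampling).} By Chernoff, $\sum X_u=p_h\NumNodes{D}{}(1\pm\eps)$ whenever $p_h\NumNodes{D}{}\gtrsim \eps^{-2}\log n$. The sum $\sum X_u\,\LocalCC{u}{}$ has terms in $[0,1]$. Applying Bernstein/Chernoff with an additive-plus-multiplicative form gives deviation at most $\eps\, p_h\sum\LocalCC{u}{}+\eta\, p_h\NumNodes{D}{}$, which requires only $p_h\NumNodes{D}{}=\tilde\Omega(\eta^{-2})$. For WDCC, the numerator $\sum X_u\LocalTrianglesCard{u}{}$ has terms bounded by $\Delta_V(D)$, so a Chernoff bound for bounded variables needs $p_h\Triangles{D}{}/\Delta_V(D)=\tilde\Omega(\eta^{-2})$ or $\tilde\Omega(\eps^{-2})$. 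The denominator $\sum X_u\binom{\Deg{u}{}}{2}$ concentrates because $U=\Theta(L)$, so all terms are within a constant factor of one another and $\NumNodes{D}{}$ samples suffice. Together these give the stated bound on $p_h$, which is where the two terms $1/\NumNodes{D}{}$ and $\Delta_V(D)/\Triangles{D}{}$ come from.

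\emph{Step 3 (triangle estimates).} Conditional on $S_h$, I need $\sum_{u\in S_h\cap V_D} c_u\hat{T}_h[u]$ to concentrate, with weights $c_u=1$ or $c_u=1/\binom{\Deg{u}{}}{2}$; all such weights are within constant factors by $U=\Theta(L)$. This sum is a function of independent edge indicators, and I would bound it with a second-moment computation followed by a median-of-means or Bernstein-type bound. The variance is $\sum_{\Delta}\mathrm{Var}$ plus covariance terms. Each triangle contributes at most $1/(p_{M_h}p_{A_h})$ (using $p_A\le p_M$ w.l.o.g.). Two triangles are correlated only if they share an edge, and each edge lies in at most $\Delta_E(D)$ relevant triangles. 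So
\[
\mathrm{Var}\lesssim \frac{T'}{p_{M_h}p_{A_h}}+\frac{T'\Delta_E(D)}{p_{M_h}}
\]
with $T'\approx p_h\Triangles{D}{}$ from Step 2. Chebyshev against the target $\eps T'+\eta(\cdot)$ then yields the condition $p_{M_h}p_{A_h}=\tilde\Omega(\eps^{-2}(\eta^2+\Delta_E(D)/\Triangles{D}{}))$ after normalization, and independent repetitions boost this to high probability. The ratio bounds then follow by combining the multiplicative denominator control with the additive-plus-multiplicative numerator control and rescaling $\eps,\eta$ by constants.

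I expect Step 3 to be the main obstacle. The edge-sampling dependencies, such as the same edge appearing in many triangles and mixed main/auxiliary membership, make the variance bookkeeping delicate. Getting the exact $\eta^2+\Delta_E/T$ form, together with a high-probability rather than constant-probability guarantee, may require a careful bounded-differences or Kim–Vu-style inequality in place of plain Chebyshev.
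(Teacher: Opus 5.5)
Your decomposition matches the paper's: node-sampling error with true counts, plus triangle-estimator error conditional on $S_h$, with unbiasedness from the $1/p_{M_h}^2$ versus $1/(p_{M_h}p_{A_h})$ weights. Steps~1--2 are essentially fine. The one slip there is that for NDCC you should concentrate $\sum_u X_u$ within $(1\pm\eta)$, not $(1\pm\varepsilon)$. The hypothesis on $p_h$ has no $\varepsilon^{-2}$, and an additive $O(\eta)$ node-sampling error is all you need.

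\emph{Step~3 does not produce the stated edge condition.} You charge each edge with $\Delta_E(D)$ relevant triangles. But only triangles incident to $S_h\cap V_D$ are relevant, so this count loses a factor $p_h$. The covariance term is then about $T'\Delta_E(D)/p_{A_h}$, not $/p_{M_h}$: triangles $\{u_1,v,w\}$ and $\{u_2,v,w\}$ with $u_1,u_2\in S_h$ and $v,w\notin S_h$ share the auxiliary edge $\{v,w\}$, so their covariance is $1/p_{A_h}-1$. With $T'\approx p_hT(D)$, your variance bound only certifies concentration at target $\varepsilon T'$ when $p_{A_h}=\Omega(\Delta_E(D)/(\varepsilon^2 p_h T(D)))$. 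That is not implied by $p_{M_h}p_{A_h}=\widetilde{\Omega}(\varepsilon^{-2}(\eta^2+\Delta_E(D)/T(D)))$ when $p_h$ is polynomially small, which the $p_h$ hypothesis permits.

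The missing idea is a case split on each edge $e=\{a,b\}$. If an endpoint $a$ lies in $S_h\cap V_D$, every triangle through $e$ contains $a$, so there are at most $\Delta_V(D)$ of them. Otherwise the relevant triangles through $e$ are those whose third vertex is hashed into $S_h$. That count is $\mathrm{Bin}(t_e,p_h)$ with $t_e\le\Delta_E(D)$, hence $O(p_h\Delta_E(D)+\log n)$ for all edges simultaneously w.h.p. Divide the resulting per-edge bound $\chi=O(\Delta_V(D)+p_h\Delta_E(D)+\log n)$ by $T'\approx p_hT(D)$. You get $\Delta_V(D)/(p_hT(D))+\Delta_E(D)/T(D)+\log n/(p_hT(D))$. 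The first and last terms are $\widetilde{O}(\eta^2)$ by the $\Delta_V(D)/T(D)$ part of the $p_h$ hypothesis. This is where the $\eta^2$ in the edge condition comes from. It does not come from additive slack in your target; the paper's target for the triangle error is purely multiplicative.

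\emph{Chebyshev cannot deliver the w.h.p.\ guarantee.} With only polylogarithmic oversampling its failure probability is $1/\mathrm{polylog}(n)$. Taking a median over independent repetitions proves the bound for a different algorithm, not for the single output of Algorithm~\ref{alg:estimate}. Plain bounded differences is also too weak: it ignores that each edge indicator is rarely $1$, and ends up constraining $(p_{M_h}p_{A_h})^2$ rather than $p_{M_h}p_{A_h}$. You need a variance-sensitive exponential bound for sums with a sparse dependency graph.

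The paper applies Janson's Bernstein-type inequality (Lemma~\ref{lemma:bernstein_frac_chromatic}) to the centered variables $w_\Delta(I_\Delta/p_\Delta-1)$. For NDCC, each term is bounded by $O(1/(\binom{L}{2}p_{M_h}p_{A_h}))$ and the total variance is $O(T'/(\binom{L}{2}^2p_{M_h}p_{A_h}))$. The fractional chromatic number of the dependency graph is $O(\chi)$ for the $\chi$ above. This yields $p_{M_h}p_{A_h}=\widetilde{\Omega}(\chi/(\varepsilon^2T'))$ with failure probability $n^{-c}$ in one shot. With these two repairs your outline goes through.
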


	The result in Theorem~\ref{thm:head_estimator} highlights the trade-off between \emph{node} sampling (with probability $p_h$) and triangle counting via \emph{edge} sampling with probabilities $p_{M_h}, p_{A_h}$, captured by the dependence on $\eta$. 
	Concretely, when $\eta^2 \ge \Delta_E(D) / \Triangles{D}{}$, decreasing $\eta$ lowers the required edge-space but raises $p_h$, increasing the node-space.

	Recall that $\Delta_V(D)$ is the maximum number of triangles incident to a node in $\InducedNodes{D}{}$, and $\Delta_E(D)$ is the maximum number of triangles (among those incident to nodes in $V_D$) adjacent to an edge.
	The dependence on such quantities is typically required in the streaming setting (see~\cite{jayaram2021optimal,pagh2012colorful}), since the quality of the estimates depends on how concentrated the triangles are around a few nodes or edges.

	\emph{Tail Estimator.}
	Given that the tail estimator does not observe the exact node degrees, performing the degree binning introduces errors. To capture this effect, we express the guarantees with respect to ``shrunk'' and ``enlarged'' degree intervals. 
	Define $S_{LCC}(D)$ as the sum of local clustering coefficients for nodes in $\InducedNodes{D}{}$, $C(D) = |\InducedNodes{D}{}|$, $T(D)$ as the number of triangles incident to nodes in $\InducedNodes{D}{}$ and $W(D)$ as the number of wedges centered at nodes in $\InducedNodes{D}{}$. 
	Note that we can write $\NodeAvgCC{D}{} = S_{LCC}(D) / C(D)$, and $\WedgeAvgCC{D}{} = T(D)/ W(D)$. 
	We prove the following result. 
	
	\begin{theorem}[Tail Estimator]
		\label{thm:tail_estimator}
		Consider degree interval $D = [L, U)$ such that $\tau \le L < U$. 
		For any $\varepsilon \in (0, 1/2)$, define $D^+ = [L (1 - \varepsilon), U(1 + \varepsilon)] $, and $D^- = [L (1 + \varepsilon), U(1 - \varepsilon)]$.
		For the quantities $\NodeAvgCC{D}{} = \frac{S_{LCC}(D)}{C(D)}, \WedgeAvgCC{D}{} = \frac{T(D)}{W(D)}$, if
		$$
		\tau = \bigOmegatilde{ \frac{1}{\varepsilon p_t} }  \text{ and } 
		p_{M_t} p_{A_t} = \bigOmegatilde{\frac{\Delta_E \left(D^+\right)}{ \varepsilon^2 \Triangles{D^-}{}}},
		$$
		then Algorithm~\ref{alg:estimate} outputs values $\widehat{NDCC}(D), \widehat{WDCC}(D)$ such that:
		\begin{align*}
			& \widehat{NDCC}(D) \in \left[ \left(1 - \varepsilon \right) \; \frac{S_{LCC}\left(D^-\right)}{\NumNodes{D^+}{}},  \left( 1 + \varepsilon \right) \; \frac{S_{LCC}(D^+)}{\NumNodes{D^-}{}} \right], \text{ and}\\
			& \widehat{WDCC}(D) \in \left[ \left(1 - \varepsilon \right) \; \frac{T(D^-)}{W(D^+)},  \left( 1 + \varepsilon \right) \; \frac{T(D^+)}{W(D^-)} \right],
		\end{align*}
		with high probability.
	\end{theorem}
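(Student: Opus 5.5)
The plan is to split the argument into three pieces: degree accuracy of the tail sample, a sandwich on which nodes land in the estimated bin, and concentration of the triangle estimates. Then combine numerator and denominator bounds as ratios.

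\emph{Step 1 (degree accuracy).} I would first show that, with high probability, every $v\in S_t$ with $\Deg{v}{}\ge \tau(1-\varepsilon)$ satisfies $\lceil \EstDeg{v}{}\rceil\in[(1-\varepsilon)\Deg{v}{},(1+\varepsilon)\Deg{v}{}]$.
\begin{itemize}
\item In sample-and-hold, the number of edges of $v$ missed before it enters $S_t$ is geometric with parameter $p_t$, truncated at $\Deg{v}{}$.
\item Its deviation from the correction $\ell(r)$ is $O(\log n/p_t)$ with probability $1-n^{-c}$.
\item With $\tau=\bigOmegatilde{1/(\varepsilon p_t)}$ this deviation is at most $\varepsilon\Deg{v}{}$. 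A union bound over all nodes then gives the claim.
\end{itemize}
This implies a sandwich on the estimated bin:
\[
\InducedNodes{D^-}{}\cap S_t\ \subseteq\ \{v\in S_t:\lceil\EstDeg{v}{}\rceil\in D\}\ \subseteq\ \InducedNodes{D^+}{}\cap S_t .
\]
Moreover, $\binom{\lceil\EstDeg{v}{}\rceil}{2}$ is within a $(1\pm O(\varepsilon))$ factor of $\binom{\Deg{v}{}}{2}$. After rescaling $\varepsilon$ by a constant, this yields the $W(D^\pm)$ and per-node LCC factors.

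\emph{Step 2 (node inclusion).} Every node of degree at least $\tau\ge 1/(\varepsilon p_t)$ is sampled into $S_t$ with probability $1-(1-p_t)^{\Deg{v}{}}\ge 1-e^{-\Omega(\log n)}$. Choosing the polylog in $\tau$ appropriately makes this high probability. Hence, w.h.p., all nodes of degree in $D^+$ belong to $S_t$, so $S_t\cap \InducedNodes{D^\pm}{}=\InducedNodes{D^\pm}{}$.

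\emph{Step 3 (triangle concentration).} Conditioned on $S_t$, Algorithm~\ref{alg:count_triangles} counts each triangle incident to some $v\in S_t$ with an unbiased weight. This is the reason for the auxiliary sample, and holds in every arrival order: the closing edge is observed deterministically, and the other two edges lie in $M_t\cup A_t$ with at least one in $M_t$, with probability $p_{M_t}^2$ or $p_{M_t}p_{A_t}$. Therefore $\sum_{v}\hat T_t[v]$ over the estimated bin is unbiased for the sum of the true local triangle counts. That sum lies between $T(D^-)$ and $T(D^+)$ by Step 1.

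For concentration I would use a bounded-dependence inequality, for example Chebyshev with a covariance count as in MASCOT-type analyses, boosted with a median trick, or a Janson/Kim–Vu type bound.
\begin{itemize}
\item Each indicator has range at most $1/(p_{M_t}p_{A_t})$.
\item Two triangles are dependent only if they share a sampled edge.
\item The variance is therefore at most $T(D^+)\,\Delta_E(D^+)/(p_{M_t}p_{A_t})$.
\end{itemize}
Relative error $\varepsilon$ against $T(D^-)$ then needs $p_{M_t}p_{A_t}=\bigOmegatilde{\Delta_E(D^+)/(\varepsilon^2 T(D^-))}$, which is the stated hypothesis.

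For NDCC the same argument applies to $\sum \hat T_t[v]/\binom{\Deg{v}{}}{2}$, bracketed by $S_{LCC}(D^-)$ and $S_{LCC}(D^+)$. Its weights are smaller by a factor $\binom{\Deg{v}{}}{2}$ that is roughly constant across the bin since $U=\Theta(L)$, so the same relative bound applies. The denominator, the count of nodes in the estimated bin, is deterministic given Step 1 and lies between $C(D^-)$ and $C(D^+)$. Dividing gives the stated intervals.

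\emph{Main obstacle.} I expect the hardest step to be the concentration in Step 3. The estimated bin membership depends on the randomness of $S_t$, the per-node weights vary, and triangles with two tail endpoints are counted with correlated contributions. I would decouple these by conditioning on the first-pass randomness (independent of the edge samples), then applying the Step 1 sandwich deterministically. The variance would then be bounded for the fixed superset $\InducedNodes{D^+}{}$ while the mean is lower-bounded via $\InducedNodes{D^-}{}$.
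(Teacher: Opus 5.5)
Your decomposition is the paper's proof. It has the same ingredients: degree accuracy and inclusion of high-degree nodes (Lemma~\ref{lemma:est_degrees_tail}, items (ii)--(iv)); a sandwich of the estimated bin between $V_{D^-}$ and $V_{D^+}$ (the paper phrases this through complementary CDFs $\widehat F(d)$, $F(d)$, which is the same inclusion); the $(1\pm 3\varepsilon)$ distortion of $\binom{\lceil \hat d_v\rceil}{2}$ (Corollary~\ref{cor:degree_bounds_simplified}); and a dependency-graph concentration for the triangle estimator after conditioning on the first pass. The problems are in how you instantiate Step 3.

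First, Chebyshev with a covariance count only gives constant failure probability. The median trick changes the estimator, whereas the statement is about a single run of Algorithm~\ref{alg:estimate}. Only your other option proves the statement as given: an exponential inequality for dependent sums. The paper uses Janson's Bernstein-type bound (Lemma~\ref{lemma:bernstein_frac_chromatic}), with fractional chromatic number $O(\Delta_E(D^+))$, since a relevant triangle shares an edge with at most $3\Delta_E(D^+)$ others.

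Second, your bookkeeping does not yield the stated hypothesis. Take variance proxy $T(D^+)\Delta_E(D^+)/(p_{M_t}p_{A_t})$ and deviation $\varepsilon T(D^-)$. Either Chebyshev or a Bernstein-type bound then requires $p_{M_t}p_{A_t}=\widetilde{\Omega}\bigl(\Delta_E(D^+)\,T(D^+)/(\varepsilon^2 T(D^-)^2)\bigr)$. That is an extra factor $T(D^+)/T(D^-)$, which is not bounded in general. Your closing plan (variance over the superset $V_{D^+}$, mean lower-bounded via $V_{D^-}$) builds this loss in.

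The fix is to self-normalize, as the paper's head argument does.
\begin{itemize}
\item Condition on the first pass, let $B=\{v\in S_t:\lceil\hat d_v\rceil\in D\}$, and let $\mu_B=\sum_{v\in B}t_v$.
\item Each triangle's weight is $w_\Delta=|\Delta\cap B|\in\{1,2,3\}$. Hence $\sum_\Delta \mathrm{Var}(X_\Delta)\le 3\mu_B/(p_{M_t}p_{A_t})$, with centered increments bounded by $3/(p_{M_t}p_{A_t})$.
\item With deviation $\varepsilon\mu_B$, the Janson exponent is $\Omega\bigl(\varepsilon^2\mu_B\,p_{M_t}p_{A_t}/\Delta_E(D^+)\bigr)$.
\item Only at the end use $\mu_B\ge T(D^-)$ from the sandwich; this gives exactly the stated hypothesis.
\end{itemize}
The NDCC numerator is handled the same way, because its per-node weights $1/\binom{\lceil\hat d_v\rceil}{2}$ vary by only a constant factor over $D^+$ when $U=\Theta(L)$.
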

	
	Theorem~\ref{thm:tail_estimator} shows that the tail estimator obtains multiplicative approximation guarantees with respect to not the original ratios defined by $\NodeAvgCC{D}{}$ and $\WedgeAvgCC{D}{}$, but with respect to slightly distorted ratios, which comprise ``shrunk'' and ``enlarged'' intervals (compared to $D$, and depending on $\varepsilon$).  
	Note that $D^- \subseteq D \subseteq D^+$.
	In Section~\ref{sec:empirical_bounds_tail} of the Appendix, we show that the bounds expressed in Theorem~\ref{thm:tail_estimator} are empirically very precise, even if they consider the slacked intervals $D^-, D^+$. 
	
	Moreover, the tail estimator contains all the high-degree nodes (specifically, nodes with degree $\ge \tau$) in the graph with high probability (see Section~\ref{sec:additional_proofs} for the proof). 
	Considering such nodes, we quantify the error of each of their estimated local clustering coefficient. 
	We have the following result. 
	
	\begin{theorem}[Tail estimator for LCC]
		\label{thm:tail_LCC}
		Consider $\tau = \widetilde{\Omega}\!\left(\frac{1}{\varepsilon p_t}\right)$.
		Let $v \in S_t$ be a node such that $d_v \ge \tau$ and $\LocalCC{v}{} > 0$. 
		Let $\widehat{LCC}(v) = {\hat{T}_t[v]} / {{\lceil \EstDeg{v}{} \rceil \choose 2 }}$, with $\EstDeg{v}{} = S_t[v]$. 
		For any $\varepsilon \in (0,1/2)$ and any $\eta \in (0,\LocalCC{v}{}]$, if
		$$p_{M_t}p_{A_t} = \widetilde{\Omega}\!\left(\frac{\Delta_E(v)\LocalCC{v}{}}{\eta^2 \tau^2} \right), $$
		then, with high probability,
		$$
		\left|\widehat{LCC}(v)-\LocalCC{v}{}\right| \le \varepsilon \LocalCC{v}{} + \eta.
		$$
	\end{theorem}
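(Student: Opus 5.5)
The plan is to split the error of $\widehat{LCC}(v)$ into a degree part and a triangle part, and to control each with high probability. Write $\hat d_v = \lceil \EstDeg{v}{} \rceil$. Then
\[
\widehat{LCC}(v)-\LocalCC{v}{} = \frac{\hat{T}_t[v]-\LocalTrianglesCard{v}{}}{{\hat d_v \choose 2}} + \LocalTrianglesCard{v}{}\left(\frac{1}{{\hat d_v \choose 2}}-\frac{1}{{\Deg{v}{} \choose 2}}\right).
\]
The first term will be charged to the additive error $\eta$, and the second to the multiplicative error $\varepsilon\LocalCC{v}{}$, after rescaling $\varepsilon$ by a constant.

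\emph{Degree step.} I will reuse the degree-concentration fact used for Theorem~\ref{thm:tail_estimator} (proved in the Appendix for the \texttt{headtail}-style estimator). For $\Deg{v}{}\ge\tau=\widetilde{\Omega}(1/(\varepsilon p_t))$, the node $v$ is sampled into $S_t$ with high probability. Moreover, the number of edges of $v$ missed before sampling is a truncated geometric variable with mean about $1/p_t$ and tail $O(\log n/p_t)$. After the $\ell(r)$ correction, this gives $|\hat d_v-\Deg{v}{}|\le \varepsilon \Deg{v}{}$ with high probability. From this, ${\hat d_v\choose 2}\in[(1-3\varepsilon),(1+3\varepsilon)]{\Deg{v}{}\choose 2}$, so the second term is at most $O(\varepsilon)\LocalCC{v}{}$. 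In addition, ${\hat d_v\choose 2}=\Omega(\tau^2)$, which is the quantity that will appear in the denominator of the first term.

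\emph{Triangle step.} I would first verify unbiasedness, conditioning on $S_t$. Each triangle incident to $v\in S_t$ has at least one edge incident to $v$, and therefore lies in the main sample's domain. For the two edges present when the closing edge arrives, there are two cases. If both are main edges, the triangle is counted with probability $p_{M_t}^2$ and weight $1/p_{M_t}^2$. If exactly one is main and the other is auxiliary, it is counted with probability $p_{M_t}p_{A_t}$ and weight $1/(p_{M_t}p_{A_t})$. Hence $\mathbb{E}[\hat{T}_t[v]]=\LocalTrianglesCard{v}{}$.

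For concentration I would write $\hat{T}_t[v]=\sum_\Delta X_\Delta$ over triangles incident to $v$, where each $X_\Delta$ is bounded by $1/(p_{M_t}p_{A_t})$, assuming $p_{A_t}\le p_{M_t}$ without loss of generality. For the variance, the covariance is nonzero only for pairs of triangles that share a sampled edge. A triangle shares an edge with at most $O(\Delta_E(v))$ others, which gives $\mathrm{Var}\le O(\LocalTrianglesCard{v}{}\,\Delta_E(v)/(p_{M_t}p_{A_t}))$. The dependencies are only through shared edges, so I would apply Chebyshev with a median-of-means amplification to get high probability. Alternatively, a Bernstein/Janson-type inequality for edge-indexed dependencies, as in the proofs of Theorems~\ref{thm:head_estimator} and~\ref{thm:tail_estimator}, would do the same job. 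Either route gives deviation $\widetilde{O}(\sqrt{\LocalTrianglesCard{v}{}\Delta_E(v)/(p_{M_t}p_{A_t})})$.

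Dividing this deviation by ${\hat d_v\choose 2}=\Theta(\Deg{v}{}^2)$ and using $\LocalTrianglesCard{v}{}=\LocalCC{v}{}{\Deg{v}{}\choose 2}$, the first term is
\[
\widetilde{O}\!\left(\sqrt{\LocalCC{v}{}\,\Delta_E(v)/(p_{M_t}p_{A_t}\Deg{v}{}^2)}\right)\le \widetilde{O}\!\left(\sqrt{\LocalCC{v}{}\,\Delta_E(v)/(p_{M_t}p_{A_t}\tau^2)}\right).
\]
This is at most $\eta$ exactly under the stated bound on $p_{M_t}p_{A_t}$.

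\emph{Main obstacle.} I expect the main difficulty to be the high-probability concentration for $\hat{T}_t[v]$ under edge-sharing dependencies; getting the dependence down to $\Delta_E(v)$ rather than a worse quantity requires care. The restriction $\eta\le\LocalCC{v}{}$ should be used only to absorb lower-order terms, such as the $1/(p_{M_t}p_{A_t})$ range term in Bernstein, into the main bound.
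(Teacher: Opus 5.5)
Your proposal is correct and follows essentially the paper's argument: Lemma~\ref{lemma:est_degrees_tail} and Corollary~\ref{cor:degree_bounds_simplified} give the $\varepsilon\, LCC(v)$ term, $\hat{T}_t[v]$ is unbiased, and the Janson-type Bernstein bound with fractional chromatic number $O(\Delta_E(v))$ controls the triangle error, with $\eta \le LCC(v)$ absorbing the range term; the only cosmetic difference is that you attach the degree distortion to the true count $t_v$, whereas the paper attaches it to $Y(v) = \hat{T}_t[v]/{d_v \choose 2}$ and then uses $Y(v) \le LCC(v) + \eta$. You should drop the Chebyshev-plus-median-of-means alternative, because median amplification needs independent repetitions and a median, which is a different estimator from the single-run $\widehat{LCC}(v)$ in the statement, so only the dependency-graph Bernstein route proves the theorem as stated.
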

	
	Above, $\Delta_E(v)$ is the maximum number of triangles (among those incident to $v$) adjacent to an edge. 
	Theorem~\ref{thm:tail_LCC} therefore shows that the error in estimating LCC decomposes into a multiplicative term (due to binning approximate degrees) and an additive term (due to triangle estimation), for any node $v$ with degree at least $\tau$.
	Moreover, Theorem~\ref{thm:tail_LCC} formalizes desirable properties of the estimator: accurate LCC estimation depends only on the structure around $v$ (i.e., from $\Deg{v}{} \ge \tau, \LocalCC{v}{}$ and $\Delta_E(v)$), and not on global graph properties. We also see that LCC estimates are more accurate for sufficiently high-degree nodes, whose incident triangles do not concentrate excessively around the same edge. 
	%  (i.e., triangles causing depencence in our triangle estimator, which uses edge sampling).
	%Theorem~\ref{thm:tail_LCC} provides a multiplicative plus additive approximation for local clustering coefficient estimates for sufficiently high-degree nodes. 
	
	%\emph{Choice of degree threshold $\tau$.} 
	%As we discuss in the next section, the algorithm is run on a collection of intervals
	%$D_1, D_2, \ldots$.
	%The degree threshold $\tau$ determines whether the head estimator or the tail estimator provides an accurate output. Note that $\tau$ is basically $1/p_t$ (ignoring $\poly(\eps^{-1} \eta^{-1} \log n)$ factors). 
	%The ideal choice of $p_t$ is to balance the space required by the head and tail estimators.
	%We explain this choice after Theorem~\ref{thm:space}.
	
	% Ideally, the threshold $\tau$ should be picked to minimize the bias-variance tradeoff between our head estimator and our tail estimator. 
	% 	In practice, this quantity cannot be selected precisely a priori. 
	% 	Instead, as discussed, a practical strategy is to fix the probability $p_t$ for sampling edges in the the first pass (e.g., $p_t = 0.1$, so that we sample in expectation the 10\% of edges in the first pass), and set $\tau$ as reported in Alg.~\ref{alg:estimate}.}
% 
\section{One-Pass Algorithm}
\label{sec:one_pass_algo}
In the following, we show how to combine the two passes presented in Section~\ref{sec:two_pass_algorithm} into a single pass.
We analyze the head and tail estimators separately.

For the head estimator, the aggregation is immediate. The head sample $S_h$ is defined via hashing: for a node $u$, its membership in $S_h$ is decided irrevocably upon the first occurrence of $u$ in the stream. From that time onward, we maintain both its exact degree counter (lines~\ref{line:fp_head_first}-\ref{line:fp_head_else} of Alg.~\ref{alg:first_pass}), and subsequently execute the head triangle-counting routine (lines~\ref{line:sp_ct_head},~\ref{line:sp_se_h},~\ref{line:sp_se_ah} of Alg.~\ref{alg:second_pass}). This process yields the same exact degrees tracked in $S_h$ that would be obtained by running Alg. \algnamefirstpass, and the same triangle estimator that would be obtained by running Alg. \algnamesecondpass\ on $S_h$, after the first pass. 

Handling the tail estimator requires care, since a node may enter $S_t$ after multiple edges incident to it have already passed. 
%The second pass is necessary to count the triangles incident to these missed edges. A direct implementation in a single pass leads to systematic undercount and biased estimates.
% , given the nature of the edge sample $E_t$, which eventually defines the tail nodes (line~\ref{line:fp_tail} of Alg.~\ref{alg:first_pass}).
% If we naively address the problem as done for head nodes (i.e., start counting triangles incident to a node only after it first appears in the current $E_t$), then the resulting estimates can be biased. 
Addressing the problem as done for the head estimator can induce systematic errors in the estimates. 
Consider the arrival of edges $\Edge{v}{w}, \Edge{w}{u}, \Edge{u}{v},\Edge{x}{w}$, with $S_t, M_t, A_t = \emptyset$. 
Suppose nodes $u, v, w$ from the first two edges are not sampled in $S_t$, but such edges are sampled in the auxiliary sample $A_t$. When $\Edge{u}{v}$ arrives, we do not observe any triangle since we have $M_t = \emptyset$. Recall that triangles are counted if at least one edge is in $M_t$ when the closing one arrives (see Alg.~\ref{alg:count_triangles}). However, the triangle $\{u, v, w\}$ could be observed from $A_t = \{ \Edge{v}{w}, \Edge{w}{u} \}$ when $\Edge{u}{v}$ arrives.
Later, if node $w$ is sampled into $S_t$ after seeing $\Edge{x}{w}$, then $w$ becomes tracked (tail node), but the earlier triangle $\{u, v, w\}$ has been permanently missed, inducing a systematic undercount (biased estimates). 

To preserve the unbiasedness of triangles for the tail estimator, we decouple triangle counting from tail membership. 
Concretely, after updating $S_t$ (as in Alg.~\ref{alg:first_pass}), we run tail triangle-counting routine (lines~\ref{line:sp_ct_tail},~\ref{line:sp_se_t},~\ref{line:sp_se_at} of Alg.~\ref{alg:second_pass}) with the modification of accounting triangles regardless of whether the triangle edges belong to the set $M_t$. 
In our two-pass algorithm, we only counted triangles with at least one edge in the (tail) main sample. Now we remove this requirement and allow counting triangles even if \emph{both edges} of the triangle (when the arriving one closes the triangle) belong to the \emph{auxiliary} sample.
Thus, implementing the one-pass algorithm requires a slight modification to \texttt{CountTriangles} (Alg.~\ref{alg:count_triangles}) when dealing with tail samples. It needs to check for triangles in $Adj_A(u) \cap Adj_A(v)$ closed by edge $\Edge{u}{v}$, 
with the corresponding increment being $1 / p_{A_t}^2$ since both edges are detected in the auxiliary sample $A_t$. 

This process yields the same set $S_t$ as the one produced by running \algnamefirstpass, which determines the tail nodes.
Moreover, we prove that when $p_{M_t} = p_{A_t}$, the resulting triangle estimator coincides with the one obtained by running \algnamesecondpass\ on $S_t$ after the first pass. 
\begin{lemma}
	\label{lemma:triangle_tail_equiv}
	The estimators produced by the one-pass and two-pass algorithms are identical for the node-degree map $S_t$ and, if $p_{M_t}=p_{A_t}$, also for the node-triangle map $\hat{T}_t$.
\end{lemma}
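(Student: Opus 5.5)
We prove the two claims separately, using a coupling argument. Run both algorithms on the same randomness: the same coin for each node-edge incidence deciding tail membership, and the same coin for each edge deciding whether it enters a tail edge sample.

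\emph{Degree map.} In the one-pass algorithm, the update of $S_t$ on each edge $\Edge{u}{v}$ is exactly lines~\ref{line:fp_tail_first}-\ref{line:fp_tail_last} of Alg.~\ref{alg:first_pass}, applied to $u$ and then $v$. It does not depend on the edge samples or the triangle counters. By induction over the stream, the maps $S_t$ coincide after every prefix. The final correction (lines~\ref{line:fp_deg_correction_first}-\ref{line:fp_deg_correction_last}) is a deterministic function of the final counters, so the corrected $S_t$ is identical as well.

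\emph{Triangle map when $p_{M_t}=p_{A_t}=p$.} Set $p:=p_{M_t}=p_{A_t}$. The key observation is that every edge is routed to exactly one of $M_t$ or $A_t$, in both algorithms, and is kept with probability $p$ either way. So the union $M_t\cup A_t$ is simply a Bernoulli$(p)$ edge sample $E'$. Under the coupling, each edge is kept with the same coin in both algorithms, even though in the one-pass run its routing depends on the current, partial $S_t$. Hence $E'$ is the same set in both runs at every time.

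Now fix a triangle $\{u,v,w\}$ and let $\Edge{u}{v}$ be its last edge to arrive. Look at the modified \texttt{CountTriangles} in the one-pass run. It adds $1/p^2$ to $\hat T_t[u],\hat T_t[v],\hat T_t[w]$ if and only if both earlier edges are in $E'$. The increments $1/p_M^2$, $1/(p_Mp_A)$ and $1/p_A^2$ all equal $1/p^2$, so the main/auxiliary split of the two edges does not matter.

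In the two-pass run, the same triangle is detected if and only if both earlier edges are in $E'$ and at least one of them lies in $M_t$. It then contributes the same $1/p^2$.

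\emph{Main obstacle: the case where neither earlier edge is in $M_t$.} We must show that, for the entries the estimator actually reads, the extra triangles counted by the one-pass run in this case make no difference. Suppose both earlier edges $\Edge{v}{w},\Edge{u}{w}$ are in $A_t$ for the final $S_t$. Then none of $u,v,w$ is in $S_t$, since each of them is an endpoint of one of these two edges. The extra contributions therefore land only on entries $\hat T_t[x]$ with $x\notin S_t$.

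Alg.~\ref{alg:estimate} reads $\hat T_t$ only on $S_t$, so these entries can be discarded, or equivalently the estimator is restricted to $S_t$. The remaining care is that in the one-pass run an edge's routing reflects the partial $S_t$ at its arrival, not the final $S_t$. The plan handles this in two steps.

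First, the final $S_t$ is a superset of every partial $S_t$. An edge routed to $M_t$ in the one-pass run therefore also has an endpoint in the final $S_t$, so it would be routed to $M_t$ in the two-pass run too.

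Second, a triangle with a vertex $x$ in the final $S_t$ is counted by the one-pass run regardless of routing, because the one-pass count needs only both earlier edges in $E'$. In the two-pass run, its earlier edges incident to $x$ lie in $M_t$, and at least one earlier edge is incident to $x$, since $x$ has two triangle edges and at most one of them is the closing edge.

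So for every $x\in S_t$, both runs add $1/p^2$ for exactly the same set of triangles incident to $x$, giving $\hat T_t[x]$ identical on $S_t$.
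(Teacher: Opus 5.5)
Your proof is correct and follows the paper's: both couple the two runs through shared per-edge Bernoulli coins, and both observe that for any $x \in S_t$ a triangle at $x$ is counted in either run exactly when both earlier edges have coin $1$, because in the two-pass run an earlier edge incident to $x$ necessarily lands in the main sample. Your step about each partial $S_t$ being contained in the final one is harmless but unnecessary, since with $p_{M_t}=p_{A_t}$ the one-pass count does not depend on routing at all.
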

The proof of Lemma~\ref{lemma:triangle_tail_equiv} follows from the principle of deferred decisions~\cite{mitzenmacher2017probability}, and is given in Section~\ref{sec:additional_proofs}.

% 
% -- overcounts
Let $\bar{n} = |S_t|$ be the number of nodes in $S_t$ at the end of the stream, i.e., the number of tail nodes for which we estimate their degrees.
Our single-pass algorithm \emph{may} maintain triangle counters for more than $\bar{n}$ nodes. The reason is that we are counting triangles even if the two edges are in the auxiliary samples; for the purpose of our algorithm, such ``wasted'' counters will be discarded if we end up without knowing the degrees of the nodes forming such triangles. 
We observe empirically that the number of ``wasted" counters remains within a small constant factor of $\bar{n}$, where this constant ranges from 0 to 2.2 for representative settings (see Section~\ref{sec:extra_counters}).

Remarkably, this implies that theoretical guarantees for the two-pass algorithm (Section~\ref{sec:analysis}) directly extend to our presented one-pass algorithm, with minimal bookkeeping overhead. We now combine these results to derive the space and time complexity of our one-pass algorithm.

\begin{theorem}
	\label{thm:space}
	Let $\mathcal{D} = \{ D_1, D_2, \dots \}$ be a set of degree intervals.
	There exists a \emph{one-pass} streaming algorithm that, for all $D = [L, U) \in \mathcal{D}$ \emph{simultaneously}, computes approximations of $\NodeAvgCC{D}{}$ and of $\WedgeAvgCC{D}{}$ with the guarantees of Theorems~\ref{thm:head_estimator} and~\ref{thm:tail_estimator}, using \emph{expected} space:
	$$
	\bigOtilde{\frac{n}{\eta^2} \; \max_{D \colon U \le \tau } \left( \frac{1}{\NumNodes{D}{}} + \frac{\Delta_V(D)}{\Triangles{D}{}} \right) + \frac{m}{\varepsilon \tau}} \text{ nodes, and}
	$$
	$$
	\bigOtilde{\frac{m}{\varepsilon} \left( \eta + \max_D \sqrt{\frac{\Delta_E(D^+)}{\Triangles{D^-}{}}} \; \right) } \text{edges, }
	$$
	where $\tau = \frac{1}{\varepsilon p_t}$.
\end{theorem}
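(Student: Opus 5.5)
The proof is mostly bookkeeping. We set all sampling parameters at the thresholds required by Theorems~\ref{thm:head_estimator} and~\ref{thm:tail_estimator}, maximized over $D \in \mathcal{D}$, and then compute the expected size of each sample. By Lemma~\ref{lemma:triangle_tail_equiv} and the discussion preceding it, the one-pass algorithm produces exactly the same $S_h$, $\hat{T}_h$, $S_t$, $\hat{T}_t$ as the two-pass one, provided $p_{M_t}=p_{A_t}$. For the head, we similarly set $p_{M_h}=p_{A_h}$. So the guarantees transfer directly. To make them hold for all $D$ simultaneously, we take a union bound over the at most $n$ distinct intervals in $\mathcal{D}$. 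This costs only a $\log n$ factor inside the $\widetilde{O}$ and keeps the failure probability polynomially small.

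\emph{Node space.} The head sample $S_h$ has expected size $p_h n$. We choose
\[
p_h=\widetilde{\Theta}\Bigl(\eta^{-2}\max_{D\colon U\le\tau}\bigl(1/\NumNodes{D}{}+\Delta_V(D)/\Triangles{D}{}\bigr)\Bigr).
\]
Only intervals with $U\le\tau$ use the head estimator, so this maximum is all Theorem~\ref{thm:head_estimator} needs. This gives the first term.

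Each node $v$ enters $S_t$ with probability at most $\min(1,p_t\Deg{v}{})$, since it gets $\Deg{v}{}$ independent chances of probability $p_t$. Summing over $v$ gives $\mathbb{E}|S_t|\le p_t\sum_v \Deg{v}{}=2mp_t=2m/(\varepsilon\tau)$. This is the second node term.

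The triangle maps $\hat{T}_h$ and $\hat{T}_t$ store counters only for sampled nodes. In the one-pass version, $\hat{T}_t$ can also hold counters for other nodes, and we must charge these too. Each such counter is created by a triangle found with both edges in $A_t$ or $M_t$. The number of distinct nodes receiving counters is therefore at most twice the number of stored edges, plus one. So it is absorbed by the edge bound below, and node-counters never exceed the stated total.

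\emph{Edge space.} Each edge is placed in $M_h$ or $A_h$, never both. Its retention probability is therefore at most $\max(p_{M_h},p_{A_h})$, and with $p_{M_h}=p_{A_h}=p$ we get expected size $m\,p$, where $p=\sqrt{p_{M_h}p_{A_h}}$. The same holds for the tail samples. Taking square roots of the conditions in Theorems~\ref{thm:head_estimator} and~\ref{thm:tail_estimator}, and using $\sqrt{a+b}\le\sqrt a+\sqrt b$, gives
\[
p=\widetilde{O}\Bigl(\varepsilon^{-1}\bigl(\eta+\max_D\sqrt{\Delta_E(D)/\Triangles{D}{}}\bigr)\Bigr)\ \text{(head)},\qquad
p=\widetilde{O}\Bigl(\varepsilon^{-1}\max_D\sqrt{\Delta_E(D^+)/\Triangles{D^-}{}}\Bigr)\ \text{(tail)}.
\]
To merge the head term into the stated form, I would note that $D^-\subseteq D\subseteq D^+$ and that $\Delta_E$ is monotone while $\Triangles{\cdot}{}$ is monotone in the opposite direction. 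Hence $\Delta_E(D)/\Triangles{D}{}\le\Delta_E(D^+)/\Triangles{D^-}{}$, and the head term is dominated by the stated maximum. Multiplying by $m$ gives the edge bound.

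The main obstacle is the equivalence claim behind the first step. Lemma~\ref{lemma:triangle_tail_equiv} covers the estimator only under $p_{M_t}=p_{A_t}$, and the union bound must be run over all of $\mathcal{D}$ while the parameters stay fixed. A second obstacle is controlling the wasted counters in expectation, not just empirically. I would first try the charging argument above: every counter is created by a stored edge, and each stored edge accounts for at most one new endpoint counter per closed triangle. If that charging turns out too loose, I would fall back on stating the bound for retained counters only, with wasted counters discarded at the end.
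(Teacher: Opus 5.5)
Your proposal is correct and follows essentially the same bookkeeping as the paper's proof. Both set $p_h$ from the head intervals only, bound the expected size of $S_t$ by $\sum_v \min(1, p_t d_v)$, minimize edge space under the product constraints by taking $p_M = p_A$ (which also meets the hypothesis of Lemma~\ref{lemma:triangle_tail_equiv}), and union-bound over $\mathcal{D}$. One small correction: an arbitrary, possibly overlapping family can contain $O(n^2)$ intervals, as the paper uses, not ``at most $n$''; this only changes the logarithmic factor. Your explicit monotonicity step $\Delta_E(D)/T(D) \le \Delta_E(D^+)/T(D^-)$ and your charging of triangle counters to endpoints of stored edges fill in points the paper leaves implicit or supports only empirically. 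The charging is sound because fixed-probability samples never evict, and it also covers the counters that \texttt{CountTriangles}, as written, creates for unsampled vertices.
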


\emph{The sublinearity of space and choice of $p_t$.} 
Let us unpack the bounds of Theorem~\ref{thm:space} to better understand the right choice for $p_t$.
For simplicity, consider degree intervals of the form $D_i = [b^i, b^{i+1})$, for some $b > 1$, so that the intervals cover the entire range of degrees. 
Ignoring the dependencies on $\eps$ and $\eta$, as well as the triangle terms, define $F_\tau = \min_{D \colon U \le \tau} \NumNodes{D}{}$. 
Then Theorem~\ref{thm:space} implies that, in expectation, our one-pass algorithm stores $n/F_\tau$ head nodes plus $m/\tau$ tail nodes, where $\tau \approx 1/ p_t$.
Observe that as $\tau$ increases (so tail nodes storage decreases), the minimization defining $F_\tau$ considers more intervals, and $F_\tau$ decreases (so head nodes storage increases). 
This is the primary tradeoff that the choice of $p_t$ (and consequently $\tau$) brings in.

In a typical heavy-tailed degree distribution, $\NumNodes{D_i}{}$ is usually monotonically decreasing in $i$. 
That is, higher-degree intervals contain fewer vertices. Therefore, the ideal choice of $\tau$ is a moderate degree threshold such that every interval below $\tau$ still contains many vertices. 
This trade-off occurs in many sublinear results for estimating the degree distribution~\cite{simpson2015catching,eden2018provable,bishnu2025towards}, where the optimal choice is often characterized by a variant of the $h$-index\footnote{The $h$-index is the largest $h$ such that at least $h$ nodes have degree at least $h$.} of the graph.
In practice, these optimal trade-off points are not possible to determine without prior knowledge of the input graph.

A practical approach is then to set $p_t$ to a small constant, e.g., $p_t = 0.005$, in order to store (in expectation) 1\% of the total edges defining tail nodes. 
However, note that this is a worst-case bound that considers every node. A tighter bound would be $\Exp{\CardSet{S_t}} \le \sum_v \left( 1 - \left(1 - p_t \right)^{\Deg{v}{}} \right) \le \sum_v \min \left(1, p_t \Deg{v}{} \right)$, which limits the contribution of high-degree nodes in the summation.
Again, $\eta$ governs the node-edge space tradeoff: for sufficiently large $\eta$, decreasing its value reduces the edge-space at the cost of increased node-space.

As shown in Section~\ref{sec:bound_deltas} of the Appendix, in real-world graphs $\Delta_V(D)$ and $\Delta_E(D)$ are orders of magnitude smaller than $\Triangles{D}{}$ for representative intervals. Thus, in practice, the triangle-dependent terms in the space bounds of Theorem~\ref{thm:space} are small and contribute only marginally to the total space.

For completeness, we analyze the time complexity of our one-pass algorithm. 
In the analysis, we also account for how our sets of nodes and edges are implemented. 
The worst-case time complexity is dominated by counting the triangles closed by each arriving edge. 
\begin{proposition}
	\label{prop:time}
	Given an input stream $\Sigma$ for graph $G$ with $m$ edges and maximum degree $d_{max}(G)$, our one-pass algorithm processes the stream in $\bigO{m \cdot d_{max}(G)}$ time.  
\end{proposition}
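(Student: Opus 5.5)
The plan is a per-edge cost accounting. We bound the work done when a single edge $e=\{u,v\}$ arrives by $\bigO{d_{max}(G)}$ and then sum over the $m$ edges of $\Sigma$.

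We assume the natural implementation. The maps $S_h,S_t,\hat{T}_h,\hat{T}_t$ and the hash values are hash tables with $\bigO{1}$ expected (or amortized) access. Each edge sample $M_h,M_t,A_h,A_t$ is stored as adjacency lists, one hash set per node, so inserting an edge and testing membership $w\in Adj_M(u)$ take $\bigO{1}$.

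The per-edge work splits into three parts.
\begin{enumerate}
\item The degree-tracking updates (lines~\ref{line:fp_head_first}--\ref{line:fp_repeat} of Alg.~\ref{alg:first_pass}) cost $\bigO{1}$: a constant number of hash lookups, coin flips and counter increments.
\item The sampling steps (lines~\ref{line:sp_se_h}--\ref{line:sp_se_at} of Alg.~\ref{alg:second_pass}) also cost $\bigO{1}$: a membership test in $S_h,S_t$ and at most two \texttt{SampleEdge} calls.
\item The dominant cost is the two calls to \texttt{CountTriangles}, including the one-pass modification that also scans $Adj_A(u)\cap Adj_A(v)$.
\end{enumerate}

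To bound the triangle-counting cost, note that every sample is a subset of the edges seen so far, so $|Adj_M(x)|+|Adj_A(x)|\le d_x\le d_{max}(G)$ for any node $x$. We compute $\widehat{N}(u)\cap\widehat{N}(v)$ by iterating over the smaller of $\widehat{N}(u)$ and $\widehat{N}(v)$ and probing the other side's hash sets. For each candidate $w$ this takes $\bigO{1}$ membership tests to decide which branch applies (main--main, main--auxiliary, or auxiliary--auxiliary). It then performs at most three $\bigO{1}$ counter increments in $\hat{T}$. Hence each call costs $\bigO{\min(d_u,d_v)}\subseteq\bigO{d_{max}(G)}$, and since there are two calls (head and tail), the edge costs $\bigO{d_{max}(G)}$ in total.

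Summing over all edges gives $\bigO{m\cdot d_{max}(G)}$.

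The final degree correction (lines~\ref{line:fp_deg_correction_first}--\ref{line:fp_deg_correction_last}) has to be absorbed into this bound, and this is the only step needing care. The function $\ell(r)$ depends only on $r$ and $p_t$. Because $r-\ell(r)$ is monotone, the value $r$ with $S_t[v]=r-\ell(r)$ can be found either by binary search over $r\le d_{max}(G)$ or by precomputing a lookup table of size $d_{max}(G)+1$. Either way this costs $\bigO{|S_t|\log d_{max}(G)+d_{max}(G)}$, and since $|S_t|\le n\le 2m$ this is $\bigO{m\,d_{max}(G)}$. The estimation step (Alg.~\ref{alg:estimate}) is a single scan over the maps, which costs $\bigO{n}$ per interval.

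The main subtlety is the hashing assumption. With worst-case hash operations the bound picks up at most a logarithmic factor (e.g., balanced trees). We would state that the bound holds in expectation/amortized for hash-based sets, which is standard.
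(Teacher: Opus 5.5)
Your argument matches the paper's proof. The per-edge cost is dominated by the common-neighbour computation over hash-set adjacency lists, iterating over the smaller side in $O(\min(d_u,d_v))\subseteq O(d_{max}(G))$ time; every other operation is $O(1)$ under standard hashing assumptions, and the final tail-degree correction is done by binary search. There is one small slip in that last step. The solution $r$ of $r-\ell(r)=S_t[v]$ can exceed $d_{max}(G)$: when $p_t d_v\ll 1$ one has $\ell(r)\approx (r-1)/2$, so $r\approx 2S_t[v]$. The search range (or lookup table) should therefore be $r\le 2S_t[v]$, which follows from $\ell(r)\le\lceil (r-1)/2\rceil$ because a truncated geometric on $\{0,\dots,r-1\}$ has mean at most $(r-1)/2$. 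This leaves your $O(|S_t|\log d_{max}(G)+d_{max}(G))$ bound, and hence the overall $O(m\cdot d_{max}(G))$ bound, unchanged.
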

Proposition~\ref{prop:time} provides a worst-case bound. In practice, in real graph streams, the cost of computing common neighbors for an edge is much smaller than $d_{max}(G)$, since it depends on the structure of the subgraph induced by the maintained samples, rather than on the maximum degree of $G$ alone.

\subsection{\algname\ Algorithm}
\label{sec:algo_practice}
In practice, the implementation of our proposed one-pass algorithm differs slightly, as we apply the following modification involving the triangle counting phase. 
Instead of sampling edges independently with fixed probability (which enables easier concentration bounds in our theoretical analysis), we implement all our samples via reservoir sampling~\cite{vitter1985random}. 
Reservoir sampling approach is widely used in popular and practical methods for counting triangles in one-pass graph streams~\cite{stefani2017triest,shin2017wrs,wu2025great,boldrin2024fast}, as it keeps space bounded \emph{with certainty} given a memory budget, and not only in expectation (as in Theorem~\ref{thm:space}). 
Reservoir sampling guarantees~\cite{vitter1985random} that the maintained sample of edges is a uniform sample over all the possible sets of edges seen so far, at any time in the stream. 
Therefore, instead of parameters $p_{M_h}, p_{A_h}, p_{M_t}, p_{A_t}$ (used in Alg.~\ref{alg:second_pass}), we consider reservoir samples having memory budgets $B_{M_h}, B_{A_h}$ for head samples, and $B_{M_t}, B_{A_t}$ for tail samples. This approach \emph{guarantees} that our algorithm stores at most $B_{M_h} + B_{M_t} + B_{A_h} + B_{A_t}$ edges at \emph{any time} in the stream. 

Let $m'$ be the number of edges incident to sampled head nodes. Notice that by fixing $B_{M_h} = p_{M_h} \cdot m'$, then edge sampling described in Section~\ref{sec:two_pass_algorithm} samples in expectation $B_{M_h}$ edges in the main head sample $M_h$. 
Since we are unaware of $m'$, from a practitioner's point of view it is easier to set a fixed budget and implement a strategy that promises to maintain the memory bounded. 
The same applies for other edge samples. 

Note that \algname\ does not need $\tau$ to compute estimates ($\tau$ is used only at the end of the stream). A practitioner can then try multiple values of $\tau$ and inspect the resulting estimates.
As proposed in the implementation of~\cite{simpson2015catching}, we pick the degree threshold $\tau$ as the largest degree such that there are at least 10 nodes having that degree in the sample $S_h$. 
%This implies that there are roughly at least $10 / p_h$ nodes of degree $\tau$ in the original graph (for sufficiently large $p_h$). 
This choice empirically~\cite{simpson2015catching} allows for a smooth transition between head and tail estimators.

\begin{algorithm}[t]
	\caption{\algname $\left( \Sigma, p_h, p_t, B_{M_h}, B_{M_t}, B_{A_h}, B_{A_t} \right)$}
	\label{alg:bolide}
	\LinesNumbered
	%	\DontPrintSemicolon
	\kwInput{Graph stream $\Sigma$; \\
		Probabilities $p_h, p_t$ of sampling head and tail nodes; \\
		Budgets $B_{M_h}, B_{M_t}$ for head and tail main samples; \\
		Budgets $B_{A_h}, B_{A_t}$ head and tail auxiliary samples.
	}
	\kwOutput{Estimate of degrees and estimates of triangles for head and tail nodes; degree threshold.}
	% \tcp{Initialization}
	Init head and tail node-degree maps $S_h, S_t \gets \emptyset$\label{line:bolide_init_first}\;
	Init head and tail node-triangle maps: $\hat{T}_h, \hat{T}_t  \gets \emptyset$\;
	Init reservoir of edges $M_h, M_t, A_h, A_t \gets \emptyset$\;
	Init number $h, t$ of head and tail edges: $h, t \gets 0$\;
	Init number $a_h, a_t$ of aux head and aux tail edges: $a_h, a_t \gets 0$\label{line:bolide_init_last}\;
	\For{\textup{each incoming edge} $e = \{u, v\}$ \textup{in the stream} $\Sigma$}{\label{line:bolide_for_each}
		% \tcp{Nodes Sampling}
		\If{$u \notin S_h$}{\label{line:bolide_first}
			\lIf{$hash[u] < p_h$}{$S_h[u] \gets 1$}}
		\lElse{$S_h[u] \gets S_h[u] + 1$}\label{line:bolide_second}
		\lIf{$u \notin S_t$}{set $S_t[u] \gets 1$ with probability $p_t$} \label{line:bolide_third}
		\lElse{$S_t[u] \gets S_t[u] + 1$} \label{line:bolide_last}
		Repeat the above steps~\ref{line:bolide_first}-\ref{line:bolide_last} also for $v$\;\label{line:bolide_repeat}
		% \tcp{Triangle Counting}
		\texttt{CTReservoir}($\{u, v\}, M_h, B_{M_h}, h, A_h, B_{A_h}, a_h, \hat{T}$)\;\label{line:bolide_ct_head}
		\texttt{CTReservoir}($\{u, v\}, M_t, B_{M_t}, t, A_t, B_{A_t}, a_t, \hat{T}$)\;\label{line:bolide_ct_tail}
		% \tcp{Edge Sampling}
		\lIf{$u \in S_h$ or $v \in S_h$}{\texttt{SEReservoir}($\{u, v\}, M_h, B_{M_h}, h$)}\label{line:bolide_se_head_main}
		\lElse{\texttt{SEReservoir}($\{u, v\}, A_h, B_{A_h}, a_h$)}\label{line:bolide_se_head_aux}
		\lIf{$u \in S_t$ or $v \in S_t$}{\texttt{SEReservoir}($\{u, v\}, M_t, B_{M_t}, t$)}\label{line:bolide_se_tail_main}
		\lElse{\texttt{SEReservoir}($\{u, v\}, A_t, B_{A_t}, a_t$)}\label{line:bolide_se_tail_aux}
	}
	% \tcp{Correct tail degrees by expected loss}
	Let $\ell(r) = \lceil \frac{1 - p_t - (1 - p_t)^{r+1} - r p_t (1 - p_t)^r}{p_t (1 - (1 - p_t)^r)} \rceil$\;\label{line:bolide_correct_first}
	Find $r$ s.t. $S_t[v] = r - \ell(r)$ and set $S_t[v] \gets r$, for each $v \in S_t$\; \label{line:bolide_correct_last}
	% Set $S_t[v] \gets \left( S_t[v] - \ell(S_t[v]) \right)$ for all $v \in S_t$\;
	% \tcp{Select head-tail degree threshold}
	Let $C_h(d)$ be the number of nodes with degree exactly $d$ in $S_h$\;\label{line:bolide_tau_first}
	Set $\tau$ to be the largest degree $d$ such that $C_h(d) \geq 10$\;\label{line:bolide_tau_last}
	\Return head and tail node-degree estimates $S_h$, $S_t$, head and tail triangle estimates $\hat{T}_h$, $\hat{T}_t$, threshold $\tau$.\label{line:bolide_return}
\end{algorithm}

We denote the implementation of our one-pass algorithm as \algname. Its pseudocode is reported in Algorithm~\ref{alg:bolide}, and consists in the subsequent run of the first pass and the second pass (using reservoir sampling) merged into a single pass, as described in Section~\ref{sec:one_pass_algo}.
\algname\ uses two subroutines: \texttt{CTReservoir} and \texttt{SEReservoir}, respectively the versions of \texttt{CountTrian\-gles} and \texttt{SampleEdge} adap\-ted to reservoir sampling, and described in Section~\ref{sec:additional_algo}.
Importantly, as discussed in Section~\ref{sec:one_pass_algo}, \texttt{CTReservoir} counts triangles even if both edges (when the closing one arrives) are in the tail auxiliary sample. 

Notice that our triangle estimators can be viewed as a ``black-box'': any existing edge-sampling scheme from the literature~\cite{shin2017wrs,stefani2017triest,wu2025great,boldrin2024fast} can be plugged in to derive triangles' estimates. 
In our implementation of \algname, we adopt the reservoir sampling scheme of \triestalg~\cite{stefani2017triest}.
Note that the approximation guarantees from Theorems~\ref{thm:head_estimator},~\ref{thm:tail_estimator} and~\ref{thm:tail_LCC} do not follow when using reservoir sampling, as such results are proved for fixed probability sampling.
%Since reservoir sampling creates dependencies among sampled edges, characterizing precisely the approximation guarantees from our algorithm using reservoir sampling would introduce covariance and arrival-time terms~\cite{stefani2017triest}. 
On the other hand, reservoir sampling allows to maintain the memory bounded deterministically, and not only in expectation (as in Theorem~\ref{thm:space}), which is why we adopt it in practice.

\emph{Choice of main-auxiliary split.} 
Consider a fixed probability $p_h$ of sampling head nodes, and a memory budget $k_h = B_{M_h} + B_{A_h}$ for the head samples. 
We determine the optimal main-auxiliary split  by minimizing an upper bound on the simplified variance\footnote{Simplified variance represents the total variance as just the sum of individual variances, ignoring the covariance terms. It is commonly used in previous work~\cite{lee2020temporal,wu2025great}, as it is strongly correlated ($R^2 > 0.99$) with traditional variance in real-world graphs~\cite{lee2020temporal}.} of the triangle estimates. This minimization reduces to solving a simple quadratic equation, which can be easily done upfront, before initializing \algname.
We provide the complete details in Section~\ref{sec:params_selection} of the Appendix. 
To illustrate, setting $p_h = 0.1$ yields a head main-auxiliary split $B_{M_h}, B_{A_h} \approx 0.56k_h, 0.44k_h$; increasing $p_h$ to $0.2$ shifts the split to $B_{M_h}, B_{A_h} \approx 0.65 k_h, 0.35 k_h$.
An analogous optimization can be derived for tail main-auxiliary split, which leads to the choice $B_{M_t} = B_{A_t}$ for any value of $p_t$.
Again, we defer the complete details to Section~\ref{sec:params_selection}.  

We derive these (simplified) variance upper bounds by assuming random order streaming. Crucially, this assumption is utilized only as a practical heuristic to initialize (part of) our parameters and does not impact our theoretical guarantees (i.e., our estimators do not make any assumption on the streaming order). 
While real-world graph streams rarely exhibit a perfectly uniform random order, our experimental evaluation (Section~\ref{sec:experiments}) confirms that this initialization strategy leads to highly accurate estimates in practice.

\section{Experimental Evaluation}
\label{sec:experiments}
In this section, we report our main experimental results (see Section~\ref{sec:additional_experiments} of the Appendix for the complete evaluation). 
We begin by introducing the metric used to assess the performance.

\subsection{Evaluation Metric}
\label{sec:metric}
Prior work on estimating the complementary cumulative degree histogram (CCDH)~\cite{simpson2015catching,bishnu2025towards,eden2018provable} evaluates accuracy using the Relative Hausdorff (RH) distance.\footnote{RH distance is preferred to statistical measures such as KS-distance, $\chi^2$, and $\ell_p$-norms, since these measures tend to ignore the tail, which contains only a negligible fraction of the probability mass in real-world networks.}
Informally, RH distance allows a multiplicative slack in the domain, controlled by $\delta$, and a multiplicative slack in the estimate, controlled by $\varepsilon$. 

In our setting, purely multiplicative error is too restrictive when evaluating binned degree-wise clustering coefficients. 
Values are tiny (especially for high-degree nodes), so miniscule absolute deviations induce disproportionately large relative errors. 
To this end, we introduce a modified RH-style distance with Additive Slack in the range (RHAS).
\begin{definition}
	\label{def:rhas}
	Let $f,\hat f : \mathcal D \to \mathcal R$, where $\mathcal D \subseteq \mathbb Z_{>0}$ and $\mathcal R \subseteq \mathbb R_{\ge 0}$.
	Given $\delta \in [0,1)$, $\varepsilon \ge 0$, and $\eta \ge 0$, define neighborhood
	$
	\mathcal N_\delta(x)
	=
	\mathcal D \cap [(1-\delta)x,(1+\delta)x].
	$
	We say that $f$ and $\hat f$ are $(\delta,\varepsilon,\eta)$-close in the \emph{RHAS} sense if:
	\begin{myitemize}
		\item for every $x \in \mathcal D$, there exists $x' \in \mathcal N_\delta(x)$ such that
		$
		|f(x)-\hat f(x')| \le \varepsilon f(x)+\eta,
		$ and
		\item symmetrically, for every $x \in \mathcal D$, there exists $x' \in \mathcal N_\delta(x)$ such that
		$
		|\hat f(x)-f(x')| \le \varepsilon \hat f(x)+\eta.
		$
	\end{myitemize}
\end{definition}
\noindent For fixed $\delta$ and $\eta$, the corresponding RHAS distance is:
\[
\mathrm{RHAS}_{\delta,\eta}(f,\hat f)
=
\inf\left\{
\varepsilon \ge 0
\;\middle|\;
f \text{ and } \hat f \text{ are } (\delta,\varepsilon,\eta)\text{-close}
\right\}.
\]

RHAS distance allows us to assess approximation guarantees with a combination of multiplicative and additive error, as done in theory (Theorems~\ref{thm:head_estimator},~\ref{thm:tail_estimator}).
Notice that the standard RH distance~\cite{simpson2015catching,bishnu2025towards,eden2018provable} is obtained by setting $\eta = 0$ in our RHAS definition (Def.~\ref{def:rhas}).

In our experiments, we evaluate binned degree-wise clustering coefficients (Def.~\ref{def:degree_cc_node},~\ref{def:degree_cc_wedge}) over \emph{logarithmically binned degree intervals}, e.g., powers of two: $D_i = [2^i,2^{i+1})$, which is standard in practice~\cite{seshadhri2013triadic,zhang2017efficient}. 
Specifically, we compare the exact distribution $\{\NodeAvgCC{D_i}{}\}_i$ with the approximated distribution $\{\widehat{NDCC}(D_i)\}_i$, and analogously $\{\WedgeAvgCC{D_i}{}\}_i$ with $\{\widehat{WDCC}(D_i)\}_i$, under RHAS distance (Definition~\ref{def:rhas}).
These distributions are indexed by $i$, which is the logarithmic degree scale.
A useful consequence of RHAS is that the neighborhood $\mathcal N_\delta(i)$ starts including adjacent bins only when $i\ge 1/\delta$. For $\delta=0.1$ and $D_i = [2^i,2^{i+1})$, RHAS enforces strict bin-to-bin matching for low-degree regimes (below $2^{10}$), while allowing comparison with adjacent bins only for high-degree regimes (from $2^{10}$).

\subsection{Experiments}
\label{sec:exp}

The goals of our experiments are: (i) to assess the dependence of \algname\ from parameters; (ii) to 
measure the performance of \algname\ for estimating binned degree-wise clustering coefficients compared to baselines; (iii) to assess robustness of \algname\ under different orderings of the input graph stream; (iv) to compare \algname\ with similar approaches in the literature in terms of accuracy, runtime, and memory usage.

\emph{Experimental Setup.} We implemented our algorithm in C\texttt{++}22. The code to reproduce all experiments is available at~\url{https://github.com/CristianBold4/BOLIDE}. 
All the code was compiled under \texttt{gcc} 13.3.0 and ran on a 2.20 GHz Intel Xeon CPU with 1 TB of RAM, on Ubuntu 24.04. 
%For each run, we report the average and standard deviation over 10 independent trials.

We consider degree intervals $D_i = [b^i, b^{i+1})$ for different values of $b \in \{2, 1.5, 1.2, 1.1 \}$, and for all $i \in \left[1, \lceil \log_b \left( d_{max}(G) \right) \rceil \right]$.
We run our one-pass algorithm \algname\ to compute the approximate distribution $\{ \widehat{NDCC}(D_i)\}_{i}$ and $\{ \widehat{WDCC}(D_i) \}_{i}$. 
To quantitatively evaluate the approximate distributions, we consider the RHAS distance (Def.~\ref{def:rhas}) between exact and approximate distributions, at the end of the stream, by fixing $\delta = 0.1$ and $\eta = 0.001$. 

\emph{Datasets.}
We consider datasets downloaded from~\cite{leskovec2016snap,Kwak10www}, common in previous works. Datasets' statistics are summarized in Table~\ref{tab:datasets}. From each dataset, we remove self-loops and multiple edges (if any), deriving a stream of insertion-only edges. 

\begin{table}[b]
	\caption{Datasets' statistics: number $n$ of nodes; number $m$ of edges; number $T$ of triangles; $\NodeAvgCC{D}{}$ for $D = D_G$.}
	\label{tab:datasets}
	\centering
	\resizebox{\columnwidth}{!}{
		\begin{tabular}{ c c c c c }
			Dataset & $n$ & $m$ & $T$ & NDCC$(D_G)$\\
			\toprule
			as-skitter (\texttt{SK})	      & $1.7M$ & $11.1M$ & $28.7M$ & $0.2963$ \\
			livejournal (\texttt{LJ})        & $4.8M$ & $42.8M$ & $285M$  & $0.3509$ \\
			com-orkut (\texttt{ORK})	      & $3.1M$ & $117M$  & $627M$  & $0.1704$ \\
			twitter (\texttt{TW})			  & $41M$  & $1.2B$  & $34.8B$ & $0.0761$ \\ 
			com-friendster (\texttt{FR})     & $65M$  & $1.8B$  & $4.1B$  & $0.2050$ \\
			%\bottomrule
		\end{tabular}
	}
\end{table}

\emph{Choice of Parameters.}
\algname\ depends on several parameters: probabilities $p_h$ and $p_t$ of sampling head and tail nodes, memory budgets $B_{M_h}, B_{A_h}$ for head samples, and memory budgets $B_{M_t}, B_{A_t}$ for tail samples. 
Clearly, the higher each of these parameters is, the better estimates we get, as they all govern the space used by the algorithm. 

To empirically select the best parameters, we perform a grid search over values $p_h \in \{0.05, 0.1, 0.15, 0.2, 0.25, 0.3\}$ and $p_t \in \{0.025, 0.05, 0.075\} \cdot \frac{n}{m}$. 
Moreover, we strictly bound the space used by \algname, allowing the memory budget for the number of edges stored to be $0.1 m$. 

Let $k_h = B_{M_h} + B_{A_h}$ and $k_t = B_{M_t} + B_{A_t}$ be the space allowed to the head and tail samples, respectively. 
Subject to the total budget constraint $k_h + k_t = 0.1 m$, we evaluate allocations $k_h, k_t \in \{ (0.01, 0.09), (0.05, 0.05), (0.09, 0.01) \} \cdot m$. 
For each configuration, the main-auxiliary split is determined as described at the end of Section~\ref{sec:algo_practice}.
As detailed in Section~\ref{sec:params_choice}, the configuration $k_h = 0.09m, k_t = 0.01m$ consistently yields the highest accuracy (in terms of RHAS) across our evaluated datasets. 
This allocation is justified by the typical heavy-tailed distribution of real-world networks: the majority of edges are incident to the massive number of low-degree nodes, rather than the few high-degree nodes. Consequently, an effective estimator must dedicate a substantially larger budget to edges incident to head nodes.

Finally, we pick $p_h = 0.2$ and $p_t = 0.05 \cdot \frac{n}{m}$, given that we observe a balance between accuracy and memory used around such values (see our grid search's results in Section~\ref{sec:params_choice}). 

Henceforth, if not explicitly specified, we fix parameters as explained above. That is, for each dataset, we set $p_h = 0.2, p_t = 0.05 \cdot \frac{n}{m}, B_{M_h} = 0.058 m, B_{A_h} = 0.032m, B_{M_t} = B_{A_t} = 0.005m$.
Therefore, the space used by our algorithm \algname\ is $0.3n$ nodes (worst-case, in expectation), and $0.1m$ edges, including duplicates.
Since $n$ is typically an order of magnitude less than $m$ (see Table~\ref{tab:datasets}), storing $0.3 n$ nodes is cheap, as the space used by \algname\ is mainly devoted to edges in the reservoir samples. 

\PrintFigureBaslines
\PrintFigure{WDCC}{1.5}
\emph{Results.}
We consider as baselines state-of-the-art algorithms for approximating local triangles in one-pass streams: \triestalg~\cite{stefani2017triest} and \wrsalg~\cite{shin2017wrs,lee2020temporal}. 
\triestalg\ uses a single reservoir sampling to sample incoming edges, while \wrsalg\ couples reservoir sampling with a waiting room storing the most recent edges.
Together with local triangle estimates produced by the baselines, we consider the unbiased degree estimator derived from their stored sample to produce NDCC estimates (see Section~\ref{sec:comparison_baselines} for further details). 

We run \triestalg, \wrsalg\ and \algname\ with memory budget $0.1m$. We compute estimates for degree intervals $D_i = [b^i, b^{i+1})$ for $b \in \{2, 1.5, 1.2, 1.1\}$, using Algorithm \texttt{Estimate} (Alg.~\ref{alg:estimate}) for \algname\, and an equivalent adaptation (considering only the ``tail'' branch) for the baselines. 

Given their similarity, and due to space constraints, we only report results for NDCC with $b = \{2, 1.2\}$, and for WDCC with $b = \{1.5\}$, and we defer to Section~\ref{sec:additional_results} for complete results.
We report the exact and estimated distributions, namely $\{ \NodeAvgCC{D_i}{} \}_i$ vs $\{ \widehat{NDCC}(D_i)\}_i$, and $\{ \WedgeAvgCC{D_i}{} \}_i$ vs $\{ \widehat{WDCC}(D_i)\}_i$, for each degree interval $D_i$ to qualitatively observe the accuracy of estimates.
Moreover, we show ``point-wise'' values of $\varepsilon$ for RHAS (Definition~\ref{def:rhas}), at each degree scale $i$. From Section~\ref{sec:metric}, it is easy to see that RHAS distance between exact and estimated distributions corresponds to the maximum ``point-wise'' value of $\varepsilon$.

Figures~\ref{fig:compact_NDCC_base1.2},~\ref{fig:compact_baselines} and~\ref{fig:compact_WDCC_base1.5} show results.  
We observe that \algname\ is substantially accurate in estimating both NDCC and WDCC, with values of $\varepsilon$ (``point-wise'' RHAS) typically well below 10\% for all considered intervals.
Notably, both \triestalg\ and \wrsalg\ significantly fail to provide accurate approximations for low-degree nodes. This is because \triestalg\ samples edges via reservoir sampling (nodes are kept with probability proportional to their degrees), and \wrsalg\ biases the sample towards recently observed edges on the stream. These results confirm the effectiveness of our head estimator. For the sake of visualization, we show baselines comparison only in Figure~\ref{fig:compact_baselines}.

% -- ordering of stream
\emph{Robustness across orderings of the stream.}
Next, we assess the performance of \algname\ under different orderings of the input graph stream. 
As done in~\cite{simpson2015catching}, we consider the following orderings: (i) \texttt{Rand}, shuffles the original stream uniformly at random; (ii) \texttt{Adj-Incr}, adjacency-list ordering with nodes in increasing degree; (iii) \texttt{Adj-Decr}, adjacency-list ordering with nodes in decreasing degree; (iv) \texttt{Adj-Rand}, adjacency-list ordering with nodes shuffled uniformly at random. 
For each dataset and ordering, we report RHAS distance considering NDCC distribution, for degree intervals of powers of two (i.e., $D_i = [2^i, 2^{i+1})$). 
Given that the goal of this experiment is not scalability, we do not compute results for our biggest datasets (\texttt{TW} and \texttt{FR}).

Table~\ref{tab:shuffle} shows results, with \algname\ achieving comparable and robust performance across the different orderings of the input stream. 
%For \texttt{LJ}, we observe slightly higher RHAS distance with \texttt{Adj-Incr} and \texttt{Adj-Decr} orderings; these values are acceptable, as RHAS distance is (on average) always below 12\%, with space used by \algname\ fixed to $11\%m$ edges. 
\TableShuffle

% -- Comparison with Kutzov + Pagh
\FigureLCC
\TableKPComparison

\emph{Estimate of Local Clustering Coefficients.}
We compare \algname\ with \texttt{EstimateClusteringCoefficients} algorithm~\cite{kutzkov2013streaming} (\kpalg\ from now on). 
\kpalg\ provides estimates of local clustering coefficients for nodes having degree \emph{at least $d$} (for some $d$) in a single-pass stream, and is tightly related to local clustering coefficients estimates from our tail estimator (see Theorem~\ref{thm:tail_LCC}). 

Briefly, \kpalg\ algorithm works by running $K$ independent and parallel copies of monochromatic graph sparsification, using $C$ colors. 
That is, in each copy every node is assigned a random color, and only edges connecting nodes of the same color are kept. 
For each sparsified graph, \kpalg\ then estimates the fraction of closed wedges, and the LCC for each node is reported as the overall ratio across all the parallel copies $K$, provided that such nodes have been retained a sufficient number of times. 
Note that the expected space used by \kpalg\ in terms of the number of edges is $K \cdot m / C$.

Again, we run \algname\ by fixing memory budget to $0.1$ edges.
We run authors' implementation of \kpalg\ by fixing $K = 200$,\footnote{\cite{kutzkov2013streaming} ranges $K \in [50, 100, 150, 200, 250, 300, 350, 400]$. We consider $K = 200$ as in the main experiment of~\cite{kutzkov2013streaming}, and we defer to Section~\ref{sec:comparison_lcc} for a more in-depth analysis.} and set $C = K / 0.1 = 2000$, in order to account for the same (expected) memory space of \algname\ (in terms of number of edges stored). 
Notice that \kpalg\ is, by design, estimating local clustering coefficients of nodes with degree at least some $d$, which is comparable to what our tail estimator is implicitly doing (for nodes with degree $\ge \tau$ ). 
%To make sure that \algname\ is using the tail estimator, we compute the following. 
Let $\LocalCC{v}{}$ and $\EstLocalCC{v}{}$ be the true and estimated local clustering coefficient of node $v$, respectively. 
Let $H$ be the set of nodes of degree at least $\max \left( d, \tau \right)$ for which both \kpalg\ and \algname\ estimate their LCCs. 
As in~\cite{kutzkov2013streaming}, we consider the following metrics:
(i) Average relative error: 
$
\frac{1}{|H|} \sum_{v \in H} | \LocalCC{v}{} - \EstLocalCC{v}{} | / \LocalCC{v}{};
$
(ii) Average absolute error:
$
\frac{1}{|H|} \sum_{v \in H} | \LocalCC{v}{} - \EstLocalCC{v}{} |;
$
(iii) Pearson correlation coefficient between $\{\LocalCC{v}{} \}_{v \in H}$ and $\{\EstLocalCC{v}{} \}_{v \in H}$, measuring the linear correlation between true and estimated values (see~\cite{pearson1895vii,kutzkov2013streaming} for details).
We do not compare runtimes given that, as admitted in~\cite{kutzkov2013streaming}, \kpalg\ is implemented by running copies sequentially (and not in parallel). That is, \kpalg\ is making $K$ passes of data and is always slower than \algname. 
For instance, \kpalg\ takes on average 16 and 25 hours on \texttt{TW} and \texttt{FR}, respectively, while \algname\ processes them in less than 1 hour. 

Figure~\ref{fig:lcc_comparison} shows distributions of local clustering coefficients for nodes in $H$. We see that \algname\ tracks almost perfectly the trend of LCC (barring a few outliers in \texttt{ORK} and \texttt{FR}), as opposed to \kpalg, which often provides estimates that are far off their true value. 
In Table~\ref{tab:lcc_comparison}, we report analytical results. For each dataset, we show values $d$ and $\tau$. 
Note that $d$ (minimum degree of nodes for which \kpalg\ provides estimates) is similar across all datasets, while our degree threshold $\tau$ varies, showing that \algname\ effectively adapts head and tail estimators to the degree distribution of each dataset. 
Table~\ref{tab:lcc_comparison} shows that \algname\ consistently outperforms \kpalg\ in reporting high-quality estimates of LCCs for all considered metrics, but for Pearson correlation coefficient in \texttt{TW} dataset (where our algorithm is more accurate in the estimates).

Most importantly, \algname\ employs \emph{only} the \emph{tail} estimator to estimate LCCs, which consumes $B_{M_t} + B_{A_t} = 0.01m$ edge space, an order of magnitude less compared to \kpalg, which uses $0.1m$ (in expectation).
Therefore, if the final goal is to estimate LCCs, \algname\ can be configured so as to ignore the head component and to fully dedicate its allowed space to the tail estimator.
In Section~\ref{sec:comparison_lcc} of the Appendix, we report runtime and memory statistics for \algname\ instantiated with and without head samples, confirming that, when estimating LCCs, the full space budget can be dedicated to the tail estimator at no additional cost.

% -- Comparison with Wedge Sampling + Runtime
%\TableComparisonWS
\FigurePerf
\emph{Efficiency and Scalability.}
We consider state-of-the-art approaches for estimating NDCC: \texttt{Wedge-Sampling}~\cite{seshadhri2013triadic} (\wedgesampling\ from now on) and \triadalg~\cite{sarpe2025efficient}.
Briefly, \wedgesampling\ works by sampling wedges uniformly at random from the set of wedges centered at nodes in $\InducedNodes{D}{}$, where $\InducedNodes{D}{}$ is the set of nodes having degree in $D$. For each sampled wedge, \wedgesampling\ checks whether it is closed: the fraction of closed wedges over the total number of sampled wedges is reported as an estimate of $\NodeAvgCC{D}{}$.
\triadalg~\cite{sarpe2025efficient} works instead by repeatedly and adaptively sampling sets of edges, based on solving an optimization problem for minimizing the empirical variance. Next, estimates are computed for each set of sampled edges, and then corrected and aggregated to get the final values of NDCC for each interval.  

Notice that \wedgesampling\ and \triadalg\ \emph{are not} streaming algorithms, and require storing the whole graph. Both are aware of exact degree of nodes, as opposed to \algname: the binning of nodes is performed exactly (while for our tail estimator, we need to account for errors in the degree bins). 
%Therefore, the only source of error for \wedgesampling\ stems from sampling wedges, controlled by the number $N_W$ of wedges the algorithm is allowed to sample. 
Despite all being sampling algorithms, \wedgesampling, \triadalg\ and \algname\ differ substantially in both computational model and implementation, and establishing an apples-to-apples comparison is challenging. \wedgesampling\ and \triadalg\ store the entire input, can perform multiple passes over the data, and assume node ids in $[0,n-1]$ to leverage optimizations unavailable to \algname. By contrast, \algname\ is one-pass, uses limited memory, and requires no prior knowledge of $n$ or $m$, nor any preprocessing step such as renormalizing ids.

To demonstrate the power of the streaming model, our experimental goal is to compare the three different methods in accuracy, runtime and memory required.
We run \wedgesampling\ with $N_W = 32k$, which is the maximum value considered in~\cite{seshadhri2013triadic}, and \triadalg\ with sample size fixed to $m / 500$, as suggested in~\cite{sarpe2025efficient}. Note that such parameters have a negligible impact on runtime and memory consumption, which are dominated by the cost of loading and storing the entire graph.
We run \algname\ with memory budget fixed to $5\% m$ and $10\% m$ (in terms of edges stored), setting all the parameters as explained above. 
We consider the RHAS distance between the exact and estimated NDCC distributions for degree intervals of powers of two (i.e., $D_i=[2^i,2^{i+1})$), together with runtime (including time for loading the graph for \wedgesampling\ and \triadalg) and memory (measured as Peak RAM memory, in GB) required by the algorithms. 

Figure~\ref{fig:perf} shows results. 
We see that \algname\ achieves better accuracy than \wedgesampling\ on \texttt{LJ}, \texttt{TW} and \texttt{FR}, and comparable accuracy on \texttt{SK} and \texttt{ORK}, while being faster (except for \texttt{ORK} and \texttt{TW}, in which \algname\ with budget $10\%m$ is slightly slower) and requiring less memory.
\triadalg\ algorithm is consistently the most accurate, due to the sophisticated techniques used. However, it is computationally unfeasible: in our biggest graphs \texttt{TW} and \texttt{FR} it consumes 476GB and 729GB respectively, whereas \wedgesampling\ requires 226GB and 340GB. 
In contrast, these graphs are efficiently processed in less than 50 minutes by \algname, requiring one order of magnitude less memory than \wedgesampling\ and \triadalg\ (precisely, 22GB for \texttt{TW} and 33GB for \texttt{FR} when allowed for $5\%m$ budget), thanks to its small-space one-pass streaming design which avoids to store the whole graph.
In summary, \algname\ is the best algorithm when considering tradeoffs amongst accuracy, runtime, and memory consumption.

\section{Conclusion}
\label{sec:conclusion}
We studied the problem of estimating binned degree-wise clustering coefficients in one-pass graph streams.  
Our proposed method is, to the best of our knowledge, the first algorithm to provide accuracy guarantees for this problem. 
We implemented \algname\ and showed that it yields accurate estimates while being fast and space-efficient, especially in comparison with related methods from the literature.

There are several promising directions for future work.
Since edge triangle counts are a common feature in clustering analysis, it would be compelling to estimate the related distribution with a small-space streaming algorithm.
Another interesting direction is the design of streaming algorithms for higher-order clustering coefficients, which involve large cliques.

\begin{acks}
	Work supported by “National Centre for HPC, Big Data and Quantum Computing” project CN00000013, approved under call M42C – Investimento 1.4 – Avviso “Centri Nazionali”.% – D.D. n. 3138 of 16.12.2021, admitted to funding with MUR decree n. 1031 of 06.17.2022.
	CS acknowledges the support of NSF DMS-2023495, CCF-1740850, 2402572.
\end{acks}

\balance

\bibliographystyle{ACM-Reference-Format}
\bibliography{bibliography}

\clearpage
\nobalance 
\appendix

\section{Tools}
\label{sec:tools}
In the following, we present concentration inequalities used to derive our results. 

\begin{lemma}[Multiplicative Chernoff Bound]
	\label{lemma:mul_chernoff_bound}
	Let $X_1, \dots, X_n$ be independent random variables taking values in $[0, 1]$. Let $X = \sum_{i=1}^{n} X_i$, and let $\mu = \Exp{X}$. Then, for any $\delta \in [0, 1]$, 
	$$
	\Prob{ \left| X - \mu \right| \ge \delta \mu } \leq 2 \exp \left( \frac{-\delta^2 \mu}{3} \right).
	$$
\end{lemma}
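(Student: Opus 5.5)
The plan is the standard moment-generating-function argument: bound the upper and lower tails separately and combine them with a union bound, which produces the factor $2$. The case $\delta = 0$ is trivial, since the right-hand side is then $2 \ge 1$. If $\mu = 0$, then every $X_i$ equals $0$ almost surely, so $X = \mu$ and the event $|X - \mu| > 0$ has probability $0$. We may therefore assume $\delta \in (0,1]$ and $\mu > 0$, and write $\mu_i = \Exp{X_i}$, so that $\mu = \sum_i \mu_i$.

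\emph{Upper tail.} For any $t > 0$, Markov's inequality applied to $e^{tX}$ together with independence gives
$\Prob{X \ge (1+\delta)\mu} \le e^{-t(1+\delta)\mu} \prod_i \Exp{e^{tX_i}}$.
The key step that handles $[0,1]$-valued (not just Bernoulli) variables is convexity of $x \mapsto e^{tx}$ on $[0,1]$. It gives $e^{tx} \le 1 + x(e^t - 1)$, so $\Exp{e^{tX_i}} \le 1 + \mu_i(e^t-1) \le \exp(\mu_i(e^t-1))$. Multiplying over $i$ and choosing $t = \ln(1+\delta)$ yields the bound $\left( e^{\delta}/(1+\delta)^{1+\delta} \right)^{\mu}$. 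It remains to show $\delta - (1+\delta)\ln(1+\delta) \le -\delta^2/3$ on $[0,1]$. I would set $f(\delta) = \delta - (1+\delta)\ln(1+\delta) + \delta^2/3$ and check $f(0) = 0$ and $f'(\delta) = -\ln(1+\delta) + 2\delta/3$. Then $f'(0) = 0$ and $f''(\delta) = -1/(1+\delta) + 2/3$, which is negative on $[0,1/2)$ and positive afterwards. Hence $f'$ first decreases and then increases, and since $f'(1) = 2/3 - \ln 2 < 0$, we get $f' \le 0$ on all of $[0,1]$, so $f \le 0$. This calculus check is the only mildly delicate point, and I expect it to be the main (though small) obstacle.

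\emph{Lower tail.} The argument is symmetric, using $e^{-tX}$ with $t > 0$. The same convexity bound gives $\Exp{e^{-tX_i}} \le \exp(\mu_i(e^{-t}-1))$. Choosing $t = -\ln(1-\delta)$ (for $\delta = 1$, let $t \to \infty$) yields $\left( e^{-\delta}/(1-\delta)^{1-\delta} \right)^{\mu}$. The elementary inequality $(1-\delta)\ln(1-\delta) \ge -\delta + \delta^2/2$, obtained from the Taylor expansion with all higher-order terms nonnegative, shows this is at most $e^{-\delta^2\mu/2} \le e^{-\delta^2\mu/3}$.

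Finally, the event $|X - \mu| \ge \delta\mu$ is the union of the two tail events, so adding the two bounds gives $\Prob{|X-\mu| \ge \delta\mu} \le 2\exp(-\delta^2\mu/3)$, as claimed in Lemma~\ref{lemma:mul_chernoff_bound}.
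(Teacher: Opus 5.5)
The paper lists this lemma among its standard tools and gives no proof. Your moment-generating-function argument is the standard textbook proof and is correct: it uses the convexity bound $e^{tx} \le 1 + x(e^t-1)$ on $[0,1]$, the choices $t = \pm\ln(1\pm\delta)$, the two elementary inequalities for $(1\pm\delta)\ln(1\pm\delta)$, and a union bound. One small slip: when $\mu = 0$ the event in the lemma is $|X-\mu| \ge \delta\mu = 0$, which has probability $1$ rather than $0$, but the claim still holds trivially because the right-hand side equals $2$.
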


\begin{lemma}[Hoeffding's Inequality]
	\label{lemma:hoeffding_ineq}
	Let $X_1, \dots, X_n$ be random variables with $X_i \in [0, 1]$ that are either independent or negatively associated. Let $\bar{X} = \frac{1}{n} \sum_{i = 1}^n X_i$, and $\mu = \Exp{\bar{X}}$. Then, for any $t \ge 0$, 
	$$
	\Prob{ \left| \bar{X} - \mu \right| \ge t } \le 2 \exp \left( -2n t^2 \right)
	$$
\end{lemma}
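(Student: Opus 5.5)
The plan is the standard Chernoff (exponential moment) method: bound the moment generating function of $\sum_i X_i$, apply Markov's inequality, and optimize over the free parameter. By symmetry it suffices to prove the one-sided bound
$$\Prob{\bar{X} - \mu \ge t} \le \exp(-2nt^2),$$
since the lower tail is the same argument applied with a negative exponent, and a union bound over the two tails gives the factor $2$. Write $\mu_i = \Exp{X_i}$, so that $\mu = \frac1n\sum_i \mu_i$.

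\emph{Step 1 (Hoeffding's lemma).} For a single $Y \in [0,1]$ with mean $\nu$ and any real $\lambda$, I will show $\Exp{e^{\lambda(Y-\nu)}} \le e^{\lambda^2/8}$.
\begin{itemize}
\item By convexity, $e^{\lambda y} \le (1-y)e^{0} + y e^{\lambda}$ on $[0,1]$. Taking expectations gives $\Exp{e^{\lambda Y}} \le 1-\nu+\nu e^{\lambda}$.
\item Set $\phi(\lambda) = -\lambda\nu + \log(1-\nu+\nu e^{\lambda})$. Then $\phi(0)=\phi'(0)=0$.
\item The second derivative has the form $\phi''(\lambda) = q(1-q)$ for some $q\in[0,1]$, so $\phi'' \le 1/4$.
\item Taylor's theorem then yields $\phi(\lambda)\le \lambda^2/8$.
\end{itemize}

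\emph{Step 2 (factorizing the MGF).} Fix $\lambda>0$. I need
$$\Exp{\prod_i e^{\lambda X_i}} \le \prod_i \Exp{e^{\lambda X_i}}.$$
\begin{itemize}
\item Under independence this holds with equality.
\item Under negative association I will argue by induction on $n$. The functions $f(X_1,\dots,X_{k}) = \prod_{i\le k} e^{\lambda X_i}$ and $g(X_{k+1}) = e^{\lambda X_{k+1}}$ are nonnegative, nondecreasing, and depend on disjoint coordinates. The definition of negative association then gives $\Exp{fg}\le \Exp{f}\Exp{g}$.
\item For the lower tail ($\lambda<0$), both functions are nonincreasing, and negative association covers that case as well.
\end{itemize}
This is the only place the dependence structure enters.

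\emph{Step 3 (Markov and optimization).} Combining the two steps,
$$\Prob{\textstyle\sum_i (X_i-\mu_i) \ge nt} \le e^{-\lambda n t}\prod_i \Exp{e^{\lambda(X_i-\mu_i)}} \le \exp\!\left(-\lambda n t + n\lambda^2/8\right).$$
Choosing $\lambda = 4t$ gives $\exp(-2nt^2)$, as required.

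The main obstacle is Step 2 in the negatively associated case. One has to check that the product of nondecreasing nonnegative functions remains nondecreasing, so that the induction is valid, and handle the sign of $\lambda$ for the lower tail. The curvature bound $\phi''\le 1/4$ in Step 1 is routine but must be verified carefully.
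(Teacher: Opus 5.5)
Your proof is correct: Hoeffding's lemma via convexity and the curvature bound $\phi''\le 1/4$, the domination $\mathbb{E}\bigl[\prod_i e^{\lambda X_i}\bigr]\le\prod_i \mathbb{E}\bigl[e^{\lambda X_i}\bigr]$ obtained by induction from the definition of negative association (nondecreasing test functions for $\lambda>0$, nonincreasing ones for $\lambda<0$), and Markov's inequality with $\lambda=4t$ plus a union bound over the two tails together give exactly $2\exp(-2nt^2)$. The paper does not prove this lemma; it lists it among standard tools, and your argument is the standard proof behind it, in which negative association enters only through that moment-generating-function domination and the independent-case Chernoff--Hoeffding argument then goes through unchanged.
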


\begin{lemma}[Bernstein's Inequality]
	\label{lemma:bernstein_ineq}
	Let $X_1, \dots, X_n$ be independent random variables such that $\left| X_i - \Exp{X_i} \right| \le M$ for all $i$. Let $S = \sum_{i = 1}^n X_i$, and $V = \sum_{i = 1}^n \Var{X_i}$. Then, for any $t \ge 0$,
	$$
	\Prob{ \left| S - \Exp{S} \right| \ge t } \le 2 \exp \left( - \frac{t^2}{2V + \frac{2}{3} M t} \right)
	$$
\end{lemma}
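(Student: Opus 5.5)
This is the classical Bernstein inequality, and I would prove it by the Chernoff (exponential moment) method: bound the moment generating function of each centred summand, multiply over $i$ using independence, apply Markov's inequality, and optimise over the free parameter.

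\emph{Setup.} Let $Y_i = X_i - \Exp{X_i}$, so that $\Exp{Y_i} = 0$, $|Y_i| \le M$ and $\Exp{Y_i^2} = \Var{X_i} =: \sigma_i^2$. Then $S - \Exp{S} = \sum_i Y_i$ and $V = \sum_i \sigma_i^2$. It suffices to prove the upper tail
$$
\Prob{\textstyle\sum_i Y_i \ge t} \le \exp\left(-\frac{t^2}{2V + \frac{2}{3}Mt}\right).
$$
The variables $-Y_i$ satisfy the same hypotheses with the same $M$ and $V$, so the lower tail follows by symmetry. A union bound over the two tails then gives the factor $2$.

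\emph{MGF bound.} Fix $\lambda \in (0, 3/M)$. Expand $e^{\lambda Y_i} = 1 + \lambda Y_i + \sum_{k \ge 2} \lambda^k Y_i^k / k!$. The linear term has mean zero, and for $k \ge 2$ we have $|\Exp{Y_i^k}| \le M^{k-2}\sigma_i^2$. Using $k! \ge 2 \cdot 3^{k-2}$ for $k \ge 2$, we get
$$
\Exp{e^{\lambda Y_i}} \le 1 + \frac{\lambda^2 \sigma_i^2}{2} \sum_{k\ge 2} \left(\frac{\lambda M}{3}\right)^{k-2} = 1 + \frac{\lambda^2\sigma_i^2}{2(1-\lambda M/3)} \le \exp\left(\frac{\lambda^2\sigma_i^2}{2(1-\lambda M/3)}\right).
$$
The only delicate points are the factorial bound and keeping $\lambda M/3 < 1$ so that the geometric series converges.

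\emph{Chernoff step and choice of $\lambda$.} By independence, $\Exp{e^{\lambda \sum_i Y_i}} = \prod_i \Exp{e^{\lambda Y_i}}$. Markov's inequality then gives
$$
\Prob{\textstyle\sum_i Y_i \ge t} \le \exp\left(-\lambda t + \frac{\lambda^2 V}{2(1-\lambda M/3)}\right).
$$
Choose $\lambda = t/(V + Mt/3)$, which indeed lies in $(0, 3/M)$. With this choice:
\begin{itemize}
\item $1 - \lambda M/3 = V/(V + Mt/3)$;
\item so the second term of the exponent equals $\lambda^2 (V + Mt/3)/2 = t^2/\bigl(2(V + Mt/3)\bigr)$;
\item and the first term is $-\lambda t = -t^2/(V + Mt/3)$.
\end{itemize}
Adding the two, the exponent is $-t^2/(2V + \frac{2}{3}Mt)$, which is exactly the claimed bound. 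The degenerate case $t = 0$ is trivial.

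The main obstacle is the MGF estimate, specifically obtaining the constant $\frac{2}{3}$ in the exponent via $k! \ge 2\cdot 3^{k-2}$. Everything after that is routine algebra.
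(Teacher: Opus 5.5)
Your proof is correct. It is the standard exponential-moment (Chernoff) derivation, and it recovers exactly the stated constant, $\exp\bigl(-t^2/(2V+\tfrac{2}{3}Mt)\bigr)$. The moment bound $|\mathbb{E}[Y_i^k]|\le M^{k-2}\sigma_i^2$, the estimate $k!\ge 2\cdot 3^{k-2}$, the geometric sum, and the algebra after substituting $\lambda = t/(V+Mt/3)$ all check out. The paper itself does not prove this lemma. It lists it in the Tools appendix as a standard concentration inequality and invokes it, for example in bounding the node-sampling error of the head WDCC estimator. So there is no alternative route to compare against.

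One small point needs tightening. You say that $\lambda = t/(V+Mt/3)$ lies in $(0,3/M)$, but this holds strictly only when $V>0$ (and $t>0$). If $V=0$, then $\lambda = 3/M$ sits exactly on the boundary, where your geometric series diverges. The case $M=0$ also leaves $3/M$ undefined, but it forces $V=0$, so it falls under the same case. This degenerate case is trivial: $V=0$ forces every $Y_i=0$ almost surely, so the tail probability is $0$ for every $t>0$. You should dispatch it explicitly, as you did for $t=0$, rather than let it ride on the general choice of $\lambda$.
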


\begin{lemma}[One Sided Chernoff Like Bound]
	\label{lemma:one_sided_chernoff_bound}
	Let $X_1, ..., X_n$ be independent random variables such that each $X_i \sim Bern(p)$ with $p < \tau (1 - \varepsilon)$ for some $\varepsilon \in (0, 1/2]$. Let $X = \sum_{i=1}^n X_i$. Then, we have:
	$$
	\Prob{X \ge n \tau} \le \exp \left( \frac{- \varepsilon^2 n \tau}{10} \right).
	$$
\end{lemma}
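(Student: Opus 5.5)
The plan is to reduce to a binomial whose success probability sits exactly at the threshold $\tau(1-\varepsilon)$, and then apply a standard one-sided multiplicative Chernoff bound with a suitable deviation parameter.

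\emph{Trivial case.} If $\tau(1-\varepsilon) > 1$, then $n\tau > n \ge X$. Hence $\Prob{X \ge n\tau} = 0$ and there is nothing to prove. From now on assume $p' := \tau(1-\varepsilon) \le 1$.

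\emph{Step 1: stochastic domination.} Couple each $X_i \sim Bern(p)$ with $Y_i \sim Bern(p')$ by using a common uniform variable $R_i \in [0,1]$: set $X_i = \mathbf{1}[R_i < p]$ and $Y_i = \mathbf{1}[R_i < p']$, independently over $i$. Since $p < p'$, we have $X_i \le Y_i$ pointwise. Let $Y = \sum_i Y_i$; then $X \le Y$, and therefore
$$\Prob{X \ge n\tau} \le \Prob{Y \ge n\tau}.$$
The mean of $Y$ is $\mu' = n\tau(1-\varepsilon)$.

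\emph{Step 2: choose the deviation.} Write $n\tau = (1+\delta)\mu'$ with $\delta = \varepsilon/(1-\varepsilon)$. Since $\varepsilon \in (0,1/2]$, this gives $\delta \in (0,1]$. Now invoke the one-sided upper-tail Chernoff bound
$$\Prob{Y \ge (1+\delta)\mu'} \le \exp\!\left(-\delta^2\mu'/3\right), \qquad \delta\in[0,1].$$

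This step is the main subtlety. Lemma~\ref{lemma:mul_chernoff_bound} as stated is two-sided and carries a factor $2$. That factor cannot be absorbed into $\exp(-\varepsilon^2 n\tau/10)$ when $\varepsilon^2 n\tau$ is small. So I would use the standard one-sided form, obtained from the usual MGF argument:
$$\Prob{Y \ge (1+\delta)\mu'} \le \exp\!\left(-\tfrac{\delta^2\mu'}{2+\delta}\right) \le \exp\!\left(-\delta^2\mu'/3\right) \quad\text{for } \delta \le 1.$$

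\emph{Step 3: arithmetic.} Substituting the values of $\delta$ and $\mu'$,
$$\delta^2\mu' = \frac{\varepsilon^2}{(1-\varepsilon)^2}\, n\tau(1-\varepsilon) = \frac{\varepsilon^2 n\tau}{1-\varepsilon} \ge \varepsilon^2 n\tau.$$
Hence
$$\Prob{X \ge n\tau} \le \exp\!\left(-\varepsilon^2 n\tau/3\right) \le \exp\!\left(-\varepsilon^2 n\tau/10\right),$$
which is the claimed bound, with room to spare in the constant.
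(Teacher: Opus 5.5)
Your proof is correct, and it takes a different route from the paper, which gives no self-contained argument for this lemma. The paper simply invokes Lemma~22 of \cite{bishnu2025towards}, which bounds the probability that the empirical average $\frac{1}{n}\sum_i X_i$ reaches $\tau$. It then notes that this is the same event as $X \ge n\tau$.

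You instead derive the bound from scratch in three steps:
\begin{enumerate}
\item You dispose of the degenerate case $\tau(1-\varepsilon) > 1$, where the event is empty.
\item You dominate $\mathrm{Bin}(n,p)$ by $\mathrm{Bin}(n,\tau(1-\varepsilon))$ through the monotone coupling.
\item You apply the one-sided multiplicative Chernoff bound with $\delta = \varepsilon/(1-\varepsilon) \in (0,1]$.
\end{enumerate}

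The citation buys brevity. Your route buys independence from the external reference. It also makes explicit why the paper's own two-sided Lemma~\ref{lemma:mul_chernoff_bound} would not suffice here: its factor $2$ cannot be absorbed when $\varepsilon^2 n\tau$ is small.

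Your route also yields a sharper constant. You obtain $\exp(-\varepsilon^2 n\tau/3)$. If you keep the denominator $2+\delta$ rather than bounding it by $3$, the same computation gives $\exp\bigl(-\varepsilon^2 n\tau/(2-\varepsilon)\bigr)$. So the constant $10$ in the statement is quite loose.
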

\begin{proof}
	The proof follows immediately from Lemma 22 of~\cite{bishnu2025towards}, where instead of showing concentration on the average $\frac{1}{n} \sum_{i = 1}^n X_i$ (as done in~\cite{bishnu2025towards}), we consider the sum $X = \sum_{i = 1}^n X_i$.
\end{proof}

In the following, we leverage a result from~\cite{janson2004large} in the context of dependency graphs. For random variables $X_1, \dots, X_n$, a graph $H$ with vertex set $V = \{1, \dots n\}$ is called a \emph{dependency graph}~\cite{dubhashi2009concentration} for variables $X_i$'s if, for any two disjoint subsets $A, B \subset V$ such that there are no edges connecting a vertex in $A$ to a vertex in $B$, the collections $\{ X_i \}_{i \in A}$ and $\{X_j\}_{j \in B}$ are mutually independent. 
\begin{lemma}[Bernstein-type ineq. for dependency graphs {\cite[Thm.~2.3]{janson2004large}}]
	\label{lemma:bernstein_frac_chromatic}
	Let $X_1,\dots, X_n$ be random variables with a dependency graph $H$ and $\mathbb{E}[X_i]=0$ for all $i$. Assume $|X_i|\le b$ for all $i$, and let $V=\sum_{i=1}^n \Var{X_i}$. 
	Let $\chi^*(H)$ be the fractional chromatic number of $H$. Then for all $t\ge 0$,
	\[ 
	\Prob{ \left|\sum_{i=1}^n X_i\right| \ge t } \le 2\exp\!\left(-\frac{8 t^2}{25 \chi^*(H)\left(V+\frac{bt}{3}\right)} \right).
	\]
\end{lemma}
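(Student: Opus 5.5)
This is a cited result (\cite[Thm.~2.3]{janson2004large}), so the paper need not reprove it; if I did, I would follow Janson's fractional-cover argument with a weighted H\"older inequality. Fix a fractional coloring of $H$: independent sets $I_1,\dots,I_k$ of $H$ with weights $w_j>0$ such that $\sum_{j\colon i\in I_j} w_j = 1$ for every vertex $i$ and $\sum_j w_j = \chi^*(H)$. Then $S=\sum_i X_i = \sum_j w_j S_j$ with $S_j=\sum_{i\in I_j} X_i$. Since $I_j$ is independent in $H$, the summands of $S_j$ are mutually independent (apply the dependency-graph definition repeatedly to singletons). Write $V_j=\sum_{i\in I_j}\Var{X_i}$; the covering property gives $\sum_j w_j V_j = V$. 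By symmetry (replace $X_i$ by $-X_i$) it suffices to bound the upper tail, and the final union of the two tails gives the factor $2$.

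\textbf{Step 1 (single independent block).} For $0<\theta<3/b$, the standard Bernstein moment bound for independent centered variables bounded by $b$ gives
\[
\mathbb{E}\, e^{\theta S_j} \le \exp\!\left(\frac{\theta^2 V_j}{2(1-\theta b/3)}\right).
\]

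\textbf{Step 2 (combining blocks).} I would not apply Jensen with weights $w_j/\chi^*$: this only yields an exponent of order $\chi^{*2}V$ instead of $\chi^* V$. Instead, pick probabilities $q_j>0$ with $\sum_j q_j=1$ and apply the generalized H\"older inequality:
\[
\mathbb{E}\, e^{\lambda S} = \mathbb{E}\prod_j e^{\lambda w_j S_j} \le \prod_j \left(\mathbb{E}\, e^{\lambda w_j S_j/q_j}\right)^{q_j}.
\]
Inserting Step 1 with $\theta_j=\lambda w_j/q_j$ bounds the log-MGF by roughly
\[
\sum_j \frac{\lambda^2 w_j^2 V_j}{2 q_j\,(1-\lambda b w_j/(3q_j))}.
\]
The natural choice is $q_j \propto w_j\sqrt{V_j}$, possibly mixed with a term proportional to $w_j$ to control the $b$-denominators. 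The main part then becomes $\lambda^2\bigl(\sum_j w_j\sqrt{V_j}\bigr)^2/2$. By Cauchy--Schwarz,
\[
\Bigl(\sum_j w_j\sqrt{V_j}\Bigr)^2 \le \Bigl(\sum_j w_j\Bigr)\Bigl(\sum_j w_j V_j\Bigr) = \chi^*(H)\, V,
\]
which is exactly the linear dependence on $\chi^*$ we want.

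\textbf{Step 3 (Chernoff).} Apply Markov's inequality to $e^{\lambda S}$ and optimize $\lambda$ to get a tail of the form $\exp\!\left(-c\,t^2/(\chi^*(V+bt/3))\right)$.

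\textbf{Main obstacle.} The delicate part is the denominators $1-\lambda b w_j/(3q_j)$. When $V_j$ is tiny, $q_j \propto w_j\sqrt{V_j}$ is tiny and $\theta_j$ blows up. I would handle this by mixing, $q_j = \tfrac12\,w_j\sqrt{V_j}/\sum_l w_l\sqrt{V_l} + \tfrac12\,w_j/\chi^*$. This keeps every $\theta_j \le 2\lambda\chi^*$ while losing only constant factors in the variance term. Tracking these factor-$2$ losses, together with restricting $\lambda$ to a constant fraction of $3/(b\chi^*)$, is what produces a constant like $8/25$ instead of the $1/2$ of the independent case. I would do this bookkeeping last, since any absolute constant suffices for the uses in this paper.
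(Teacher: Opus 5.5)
The paper gives no proof of this lemma (it is quoted from Janson), so the question is whether your reconstruction yields the inequality as stated. Your framework is right: exact fractional cover by independent sets, mutual independence inside each $I_j$, the per-block Bernstein bound, H\"older, and $\sum_j w_jV_j=V$. The gap is in the step you postpone, which does not deliver $8/25$.

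Take $q_j=\frac12 a_j+\frac12 w_j/\chi^*$ with $a_j=w_j\sqrt{V_j}/\sum_l w_l\sqrt{V_l}$, and apply the two losses you describe: the main term is at most $\lambda^2(\sum_j w_j\sqrt{V_j})^2\le\lambda^2\chi^*V$, and $\theta_j\le 2\lambda\chi^*$. This gives $\log\mathbb{E}e^{\lambda S}\le\lambda^2\chi^*V/(1-2\lambda\chi^*b/3)$. In the variable $\mu=\lambda\chi^*$, that is $1/\chi^*$ times Bernstein's MGF bound with variance $2V$ and range $2b$, so it yields $\exp\bigl(-t^2/(4\chi^*(V+bt/3))\bigr)$. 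Since $1/4<8/25$, this is weaker than the lemma, and for $bt\ll V$ no choice of $\lambda$ repairs it. So your claim that the bookkeeping ``produces a constant like $8/25$'' is unsupported. Any constant would indeed suffice for the paper's $\widetilde{\Omega}$ applications, but the statement fixes $8/25$.

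The detour itself rests on a miscalculation. Inside your own H\"older inequality, the choice $q_j=w_j/\chi^*(H)$ gives $\theta_j=\lambda\chi^*$ for every $j$, hence
\[
\log\mathbb{E}\,e^{\lambda S}\le\sum_j\frac{w_j}{\chi^*}\cdot\frac{\lambda^2(\chi^*)^2V_j}{2(1-\lambda\chi^*b/3)}=\frac{\lambda^2\chi^*V}{2(1-\lambda\chi^*b/3)}.
\]
This is linear in $\chi^*$. An exponent of order $(\chi^*)^2V$ appears only if you use the arithmetic-mean (Jensen) bound $\mathbb{E}e^{\lambda S}\le\sum_j q_j\mathbb{E}e^{\theta_jS_j}$ together with the crude estimate $V_j\le V$. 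The choice $q_j=w_j/\chi^*$ attains exactly the Cauchy--Schwarz value $\chi^*V$ you were after, and all denominators coincide, so your ``main obstacle'' never arises.

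To finish, set $\mu=\lambda\chi^*$. Markov's inequality gives $\Pr[S\ge t]\le\exp\bigl(\frac{1}{\chi^*}\bigl[-\mu t+\frac{\mu^2V}{2(1-\mu b/3)}\bigr]\bigr)$, and $\mu=t/(V+bt/3)$ gives $\exp\bigl(-t^2/(2\chi^*(V+bt/3))\bigr)$. Since $1/2\ge 8/25$, this implies the lemma after symmetrizing for the factor $2$; the cited constant is simply not sharp for this argument.

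Your mixed weights can also be pushed to $8/25$, but only with more work. You would need a sharper variance estimate, e.g.\ $1/(a_j+\pi_j)\le\frac14(1/a_j+1/\pi_j)$ with $\pi_j=w_j/\chi^*$, which removes the factor $2$ on $V$. You would also need exact optimization over $\lambda$. Given the simpler choice above, there is no reason to do this.
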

Above, the fractional chromatic number $\chi^*(H)$ is defined as the linear programming relaxation of the standard chromatic number $\chi(H)$. Intuitively, it represents the minimum weight needed to cover the vertices of $H$ with independent sets. 
It holds $\chi^*(H) \le \chi(H) \le d_{max}(H) + 1$, where $d_{max}(H)$ is the maximum degree for graph $H$.

\section{Additional algorithms}
\label{sec:additional_algo}
In this section, we report the pseudocode of subroutines missing from the main text.  

\texttt{SampleEdge} (Algorithm~\ref{alg:sample_edge}) adds the given edge independently to the given sample with the given probability.

\texttt{CTReservoir} (Algorithm~\ref{alg:count_triangles_reservoir}) is given the incoming edge $\Edge{u}{v}$, the main sample $S$ with its memory budget $M_s$ and counter $s$, the auxiliary sample $A$ with its budget $M_a$ and counter $a$, and the node-triangle map to eventually update. 
First, the algorithm computes the adjacency lists of nodes $u, v$ in both samples $S$ and $A$, deriving the list of neighbors (lines~\ref{line:ct_res_first}-\ref{line:ct_res_last}). Then, for every node $w$ forming a triangle with $u$ and $v$ (line~\ref{line:ct_res_for}), it derives the probability of sampling such triangle based on whether the two edges (when the incoming one arrives) forming the triangle belong both to sample $S$ (line~\ref{line:ct_res_SS}), both to sample $A$ (line~\ref{line:ct_res_AA}), or one in $S$ and the other in $A$ (line~\ref{line:ct_res_SA}). 
Finally, such probability is used to increment the local triangles count (line~\ref{line:ct_res_incr}). 

\texttt{SEReservoir} (Algorithm~\ref{alg:sample_edge_reservoir}) is given the edge to sample, the reservoir sample $S$ together with its memory budget $M$ and the number $s$ of elements seen so far (belonging to $S$), and applies the reservoir sampling update: if $S$ is not full yet, it adds the incoming edge to $S$ (line~\ref{line:se_res_not_full}), otherwise it adds it to $S$ with probability $M / s$ and by replacing an edge in $S$ uniformly at random (lines~\ref{line:se_res_first}-\ref{line:se_res_last}). Then, it increments the counter $s$ (line~\ref{line:se_res_incr}).
We assume that \texttt{FlipBiasedCoin}(p) returns $HEAD$ with probability $p$. 

Notice that the correctness of our triangle-counter algorithm follows from the use of reservoir sampling (see~\cite{stefani2017triest,vitter1985random}), and since we are using independent reservoir samples for (head and tail) main and auxiliary samples (see Section~\ref{sec:one_pass_algo}). 

\begin{algorithm}[!b]
	%\footnotesize
	\caption{\texttt{SampleEdge}$\left(\{u, v\}, S, p \right)$}
	\label{alg:sample_edge}
	\LinesNumbered
	\KwIn{edge $\{u, v\}$; edge sample $S$; probability $p$ of sampling; \\
	}
	Let  $S \gets S \cup \{\{u, v\}\}$ with probability $p$\;
\end{algorithm}

\begin{algorithm}[!b]
	\caption{\texttt{CTReservoir}($\{u, v\}, S, M_s, s, A, M_a, a, \hat{T}$)}
	\label{alg:count_triangles_reservoir}
	\LinesNumbered
	\kwInput{
		edge $\{u, v\}$; \\
		main sample $S$; memory budget $M_s$ for sample $S$;\\
		number $s$ of edges belonging to $S$; \\
		auxiliary sample $A$; memory budget $M_a$ for sample $A$; \\
		number $a$ of edges belonging to $A$; \\
		Node-triangle map $\hat{T}$.
	}
	
	%\tcp{Find all neighbors of $u$ and $v$ across both samples}
	Let $Adj_S(u), Adj_S(v)$ be the adjacency lists of  $u, v$ in sample $S$\label{line:ct_res_first}\;
	Let $Adj_A(u), Adj_A(v)$ be the adjacency lists of  $u, v$ in sample $A$\;
	Let $N(u) \gets Adj_S(u) \cup Adj_A(u)$\;
	Let $N(v) \gets Adj_S(v) \cup Adj_A(v)$\;\label{line:ct_res_last}
	
	%\tcp{Iterate over triangles}
	\For{each node $w \in N(u) \cap N(v)$}{ \label{line:ct_res_for}
		Initialize $\hat{T}[u]$, $\hat{T}[v]$, $\hat{T}[w]$ to zero if not set yet\;
		
		\uIf{$w \in Adj_S(u)$ \textnormal{{AND}} $w \in Adj_S(v)$}{
			$p \gets \min \left( 1, \frac{M_s}{s} \cdot \frac{M_s - 1}{s - 1} \right)$\label{line:ct_res_SS}\;
		}
		\uElseIf{$w \in Adj_A(u)$ \textnormal{AND} $w \in Adj_A(v)$}{
			$p \gets \min \left( 1, \frac{M_a}{a} \cdot \frac{M_a - 1}{a - 1} \right)$ \label{line:ct_res_AA}\;
		}
		\lElse{$p \gets \min \left( 1, \frac{M_s}{s} \right) \cdot \min \left( 1, \frac{M_a}{a} \right)$ \label{line:ct_res_SA}}
		Increment $\hat{T}[u]$, $\hat{T}[v]$, $\hat{T}[w]$ by $1/p$ \label{line:ct_res_incr}\;
	}
\end{algorithm}

\begin{algorithm}[t]
	\caption{\texttt{SEReservoir}$\left(\{u, v\}, S, M, s \right)$}
	\label{alg:sample_edge_reservoir}
	\LinesNumbered
	\kwInput{edge $\{u, v\}$; edge sample $S$; \\
		memory budget $M$ for sample $S$; \\
		number $s$ of edges belonging to $S$ seen so far.
	}
	\lIf{$\CardSet{S} < M$}{$S \longleftarrow S \cup \{ \{u, v\}\}$}\label{line:se_res_not_full}
	\ElseIf{\textnormal{\texttt{FlipBiasedCoin}}$(M / s) == HEAD$}{\label{line:se_res_first}
		$\{\bar{u}, \bar{v}\} \longleftarrow$ edge sampled uniformly at random from $S$\;
		$S \gets (S \setminus \{\{\bar{u}, \bar{v}\}\}) \cup \{\{u, v\}\}$\;\label{line:se_res_last}
	}
	$s \gets s + 1$; \label{line:se_res_incr}\\
\end{algorithm}

\section{Additional Proofs}
\label{sec:additional_proofs}
We report all our derived results together with their proofs. 

\begin{lemma}
	\label{lemma:est_degrees_tail}
	Let $\EstDeg{u}{}$ be the estimated degree for each node $u$ contained in node-degree maps $S_h$ and $S_t$ returned by \algnamefirstpass\ algorithm, using fixed probability $p_t$ for sampling edges. 
	Then:
	\begin{enumerate}[label=(\roman*)]
		\item $\EstDeg{u}{} = \Deg{u}{}$ for each node $u \in S_h$\label{item:lemma_one_first},
	\end{enumerate}
	and, for any $\varepsilon \in (0, 1/2)$, if
	$
	\tau = \bigOmegatilde{\frac{1}{\varepsilon p_t}},
	$
	we have:
	\begin{enumerate}[label=(\roman*)]
		\setcounter{enumi}{1}
		\item if $\Deg{v}{} \leq (1 - \varepsilon) \tau$, then $\EstDeg{v}{} \leq \tau$ for each node $v \in S_t$;\label{item:lemma_one_second}
		\item if $\Deg{v}{} \ge \tau$, then $\EstDeg{v}{} \in (1 \pm \varepsilon) \Deg{v}{}$ for each node $v \in S_t$ with high probability;\label{item:lemma_one_third}
		\item if $\Deg{v}{} \ge \tau$, then $v \in S_t$ with high probability.\label{item:lemma_one_fourh}
	\end{enumerate}
\end{lemma}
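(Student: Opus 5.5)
Item (i) is immediate from the head-sampling mechanism. Membership of $u$ in $S_h$ is decided by $hash[u] < p_h$, which is a deterministic function of $u$. It is therefore already fixed when the first edge incident to $u$ arrives, at which point $S_h[u] \gets 1$. Every later incident edge increments the counter, so at the end $S_h[u] = \Deg{u}{}$ exactly. For the tail items I will use the sample-and-hold structure. Fix $v$ and list its incident edges in stream order. Each time $v$ appears while untracked, it enters $S_t$ independently with probability $p_t$ (applied at each occurrence of $v$; I will treat the occurrences as independent coin flips). Let $G_v$ be the number of incident edges that precede the entry time. Then $G_v$ is a geometric random variable with parameter $p_t$, truncated at $\Deg{v}{}$. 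The raw counter is $\Deg{v}{} - G_v$, and the correction in lines~\ref{line:fp_deg_correction_first}--\ref{line:fp_deg_correction_last} maps a raw value $s$ to the $r$ with $r - \ell(r) = s$. The map $r \mapsto r - \ell(r)$ is nondecreasing, and $\ell(r) \le 1/p_t$ (it is the expected loss of a truncated geometric), so $\EstDeg{v}{} \in [\,S_t[v],\, S_t[v] + 1/p_t + O(1)\,]$.

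For (iv) I will bound $\Prob{v \notin S_t} = (1-p_t)^{\Deg{v}{}} \le \exp(-p_t \tau)$. Since $\tau = \bigOmegatilde{1/(\varepsilon p_t)}$, we have $p_t \tau \ge c \log n/\varepsilon$, so this probability is at most $n^{-c}$. A union bound over all $n$ nodes then gives (iv) for all of them simultaneously.

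For (iii), take $\Deg{v}{} \ge \tau$. The event $G_v \ge k$ has probability $(1-p_t)^k \le e^{-p_t k}$, so with $k = c\log n / p_t$ we get $G_v \le c \log n/p_t$ with high probability. Combined with $\ell(r) \le 1/p_t$, this gives $|\EstDeg{v}{} - \Deg{v}{}| \le O(\log n/p_t)$. By the choice of $\tau$ (taking the hidden polylog to absorb $\log n$), $O(\log n/p_t) \le \varepsilon \tau \le \varepsilon \Deg{v}{}$, which yields $\EstDeg{v}{} \in (1\pm\varepsilon)\Deg{v}{}$. A final union bound over the nodes makes this hold for all of them.

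For (ii), suppose $\Deg{v}{} \le (1-\varepsilon)\tau$. The raw counter is at most $\Deg{v}{}$, so $\EstDeg{v}{} \le \Deg{v}{} + \ell(\cdot) \le (1-\varepsilon)\tau + 1/p_t$. Using $\tau \ge 1/(\varepsilon p_t)$ gives $1/p_t \le \varepsilon\tau$, and hence $\EstDeg{v}{} \le \tau$ deterministically.

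The main obstacle is making the correction step rigorous. I need that $r \mapsto r-\ell(r)$ is monotone and that $0 \le \ell(r) \le 1/p_t$. Without these facts the inverse step in line~\ref{line:fp_deg_correction_last} could move the estimate by more than the geometric fluctuation. I would verify both directly from the closed form of $\ell(r)$: it equals $\mathbb{E}[G \mid G < r]$ for $G$ geometric, which is increasing in $r$ with increments less than $1$ and bounded by $\mathbb{E}[G] = (1-p_t)/p_t$. The ceiling introduces only an additive error of $1$, which is harmless.
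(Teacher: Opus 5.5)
Your proof is correct and follows the paper's argument item by item. You use exact counting for hashed head nodes for (i), the tail bound $(1-p_t)^a \le e^{-p_t a}$ on the number of missed edges for (iv) and the lower side of (iii), and the deterministic bound $0 \le \ell(\cdot) \le 1/p_t$ for (ii) and the upper side of (iii), exactly as the paper does. The differences are cosmetic: you bound the loss additively by $O(\log n / p_t)$ instead of by $\varepsilon d_v$, and you justify the properties of $\ell$ (via $\ell(r) = \lceil \mathbb{E}[G \mid G < r] \rceil$) that the paper only asserts. Note also that $\lceil (1-p_t)/p_t \rceil \le 1/p_t$ already holds, so the extra $O(1)$ slack you allow is not needed.
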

\begin{proof}
	\emph{Item~\ref{item:lemma_one_first}.} By design, \algnamefirstpass\ is tracking exactly degrees of nodes in $S_h$, which are sampled via hashing. 
	
	\emph{Item~\ref{item:lemma_one_second}.} 
	Fix a vertex $v$. Let $C_v$ be the ``raw'' counter for node $v$, i.e., the value in $S_t[v]$ before adding back the expected loss. We can express $\Deg{v}{} = C_v + L_v$, where $L_v$ is the number of edges incident to $v$ that we missed. For any $a \ge 0$, we have $\Prob{L_v \ge a} \le (1 - p_t)^a \le \exp \left( -p_t a \right)$. 
	The expected loss $\ell(r)$ is such that $0 \le \ell(r) \le \lceil \frac{1 - p_t}{p_t} \rceil \le \frac{1}{p_t}$. For $\Deg{v}{} \le \left( 1 - \varepsilon \right) \tau$ and the suggested choice of $\tau$:
	$$
	\EstDeg{v}{} = C_v + \ell \left( \EstDeg{v}{} \right) \le d_v + \frac{1}{p_t} \le  \left( 1 - \varepsilon \right) \tau + \varepsilon \tau \le \tau.
	$$
	
	\emph{Item~\ref{item:lemma_one_third}.} 
	Fix a vertex $v$. For the lower tail, we have that if $L_v \le \varepsilon d_v$, which happens with probability $1 - \exp \left( -p_t \varepsilon \Deg{v}{} \right) \ge 1 - \exp \left( -p_t \varepsilon \tau \right)$,  then:
	$$
	\EstDeg{v}{} = C_v + \ell \left( \EstDeg{v}{} \right) \ge C_v = \Deg{v}{} - L_v \ge \Deg{v}{} - \varepsilon \Deg{v}{} = \left( 1 - \varepsilon \right) \Deg{v}{}.
	$$
	Therefore, the lower tail follows by a union bound and the suggested choice of $\tau$. The upper tail holds deterministically:
	$$
	\EstDeg{v}{} = C_v + \ell \left( \EstDeg{v}{} \right)  \le \Deg{v}{} + \frac{1}{p_t} \le \Deg{v}{} + \varepsilon \tau \le \left( 1 + \varepsilon \right) \Deg{v}{}.
	$$
	
	\emph{Item~\ref{item:lemma_one_fourh}.} Fix a vertex $v$ such that $\Deg{v}{} \ge \tau$. Then:
	$$
	\Prob{v \notin S_t} = \Prob{L_v \ge \Deg{v}{}} \le \exp \left(- p_t \Deg{v}{} \right) \le \exp \left( - p_t \tau \right) \le \frac{1}{n^c}. 
	$$
	The item follows by a union bound over nodes $v$ with $\Deg{v}{} \ge \tau$.
	
\end{proof}

\begin{fact}
	\label{fact:degree_bounds}
	Given $\EstDeg{}{}$ such that $\EstDeg{}{} \in \left(1 \pm \varepsilon \right) \Deg{}{}$, and $d \ge \tau$, it holds:
	$$
	{\EstDeg{}{} \choose 2} \in \left[ \left( 1 - \varepsilon \right)^2 - \frac{\varepsilon (1 - \varepsilon)}{\tau - 1}, \left( 1 + \varepsilon \right)^2 + \frac{\varepsilon (1 + \varepsilon) }{\tau - 1} \right] {\Deg{}{} \choose 2}.
	$$
\end{fact}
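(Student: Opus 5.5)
The plan is a direct algebraic computation: express $\EstDeg{}{}$ through a relative perturbation of $\Deg{}{}$, expand ${\EstDeg{}{} \choose 2}$ exactly, then bound the resulting ratio using monotonicity and $\Deg{}{} \ge \tau$. Here ${x \choose 2}$ means the polynomial $x(x-1)/2$, so the computation is valid even when $\EstDeg{}{}$ is not an integer. We also assume $\tau > 1$, which is needed for the stated bound to make sense.

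\textbf{Step 1 (exact expansion).} Write $\EstDeg{}{} = (1+\delta)\Deg{}{}$ with $|\delta| \le \varepsilon$. Then
\begin{align*}
\EstDeg{}{}(\EstDeg{}{}-1) &= (1+\delta)\Deg{}{}\bigl((1+\delta)\Deg{}{} - 1\bigr) \\
&= (1+\delta)\bigl[(1+\delta)\Deg{}{}(\Deg{}{}-1) + \delta \Deg{}{}\bigr].
\end{align*}
Dividing by $\Deg{}{}(\Deg{}{}-1)$ gives the identity
$$
{\EstDeg{}{} \choose 2} = \left( (1+\delta)^2 + \frac{\delta(1+\delta)}{\Deg{}{}-1} \right) {\Deg{}{} \choose 2}.
$$

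\textbf{Step 2 (extremes in $\delta$).} Define $g(\delta) = (1+\delta)^2 + \delta(1+\delta)/(\Deg{}{}-1)$. Its derivative is $2(1+\delta) + (1+2\delta)/(\Deg{}{}-1)$. This is positive for $\delta \ge -\varepsilon > -1/2$, since then $1+\delta > 0$ and $1+2\delta > 0$. Hence on $[-\varepsilon, \varepsilon]$ the minimum of $g$ is $g(-\varepsilon) = (1-\varepsilon)^2 - \varepsilon(1-\varepsilon)/(\Deg{}{}-1)$ and the maximum is $g(\varepsilon) = (1+\varepsilon)^2 + \varepsilon(1+\varepsilon)/(\Deg{}{}-1)$. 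The restriction $\varepsilon < 1/2$ is used here; for general $\varepsilon < 1$ it would suffice to use $1 + \delta > 0$ together with a short check.

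\textbf{Step 3 (replace $\Deg{}{}$ by $\tau$).} Since $\Deg{}{} \ge \tau > 1$, we have $1/(\Deg{}{}-1) \le 1/(\tau-1)$. Both correction terms have nonnegative coefficients $\varepsilon(1 \pm \varepsilon)$. Weakening them therefore only enlarges the interval, which yields exactly the claimed bounds.

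The only mildly delicate point is the monotonicity in Step 2, or equivalently checking that the extremes occur at the endpoints. Everything else is routine algebra.
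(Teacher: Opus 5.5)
Your proof is correct, and it fills a gap: the paper states this Fact without proof and uses it directly in the proof of Corollary~\ref{cor:degree_bounds_simplified}. Your three steps are exactly the routine verification the statement relies on. These are the exact identity $\binom{\hat d}{2} = \bigl((1+\delta)^2 + \delta(1+\delta)/(d-1)\bigr)\binom{d}{2}$, monotonicity of $g$ on $[-\varepsilon,\varepsilon]$, and the weakening $1/(d-1)\le 1/(\tau-1)$. You also usefully make explicit two hypotheses the statement leaves implicit: $\tau>1$, and a bound on $\varepsilon$ such as $\varepsilon\le 1/2$, which is the regime in which the Corollary applies the Fact.

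One small correction: drop the parenthetical claim that general $\varepsilon<1$ works with a short check. The function $g$ is a convex quadratic in $\delta$ with vertex at $\delta^\ast = -1+\frac{1}{2d}$. So once $\varepsilon > 1-\frac{1}{2d}$, the minimum over $[-\varepsilon,\varepsilon]$ is attained in the interior, and the stated lower bound can fail. For example, $d=\tau=2$ and $\varepsilon=0.9$ give $g(\delta^\ast)=-1/8 < -0.08 = g(-\varepsilon)$, and $-0.08$ is exactly the claimed lower bound. Some restriction on $\varepsilon$ is therefore genuinely needed, not merely convenient.
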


\begin{corollary}
	\label{cor:degree_bounds_simplified}
	Following Fact~\ref{fact:degree_bounds}, for any $\varepsilon \in (0, 1/2], \tau \ge 4$:
	$$
	{\EstDeg{}{} \choose 2} \in \left( 1 \pm 3 \varepsilon \right) {\Deg{}{} \choose 2}.
	$$
\end{corollary}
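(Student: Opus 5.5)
The plan is to plug the hypotheses $\varepsilon \le 1/2$ and $\tau \ge 4$ into the two endpoints of the interval in Fact~\ref{fact:degree_bounds} and bound each one by elementary algebra. We use only one consequence of $\tau \ge 4$, namely $\frac{1}{\tau - 1} \le \frac{1}{3}$. After that substitution no $\tau$ remains, and each endpoint is a polynomial in $\varepsilon$ to compare with $1 \pm 3\varepsilon$.

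\emph{Lower endpoint.} We need $(1-\varepsilon)^2 - \frac{\varepsilon(1-\varepsilon)}{\tau-1} \ge 1 - 3\varepsilon$. Using $\frac{1}{\tau-1}\le \frac13$ and $\varepsilon(1-\varepsilon) \ge 0$, the left side is at least
\[
1 - 2\varepsilon + \varepsilon^2 - \tfrac{\varepsilon}{3} + \tfrac{\varepsilon^2}{3} \;\ge\; 1 - \tfrac{7}{3}\varepsilon \;\ge\; 1 - 3\varepsilon .
\]
Here we drop the nonnegative $\varepsilon^2$ terms.

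\emph{Upper endpoint.} We need $(1+\varepsilon)^2 + \frac{\varepsilon(1+\varepsilon)}{\tau-1} \le 1 + 3\varepsilon$. Again using $\frac{1}{\tau-1}\le\frac13$, the left side is at most
\[
1 + 2\varepsilon + \varepsilon^2 + \tfrac{\varepsilon}{3} + \tfrac{\varepsilon^2}{3} \;=\; 1 + \tfrac{7}{3}\varepsilon + \tfrac{4}{3}\varepsilon^2 .
\]
This is the only place where $\varepsilon \le 1/2$ is needed. It gives $\frac43\varepsilon^2 \le \frac23\varepsilon$, so the total is at most $1 + 3\varepsilon$.

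Since ${\Deg{}{} \choose 2} \ge 0$, multiplying these endpoint bounds by ${\Deg{}{} \choose 2}$ places ${\EstDeg{}{} \choose 2}$ in $(1 \pm 3\varepsilon){\Deg{}{} \choose 2}$. This completes the argument.

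I do not expect a genuine obstacle. The only point needing care is the upper bound, where the quadratic term $\tfrac43\varepsilon^2$ must be absorbed using $\varepsilon \le 1/2$. The constant $3$ is exactly what makes this work ($\tfrac73 + \tfrac23 = 3$).
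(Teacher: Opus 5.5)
Your proof is correct and takes essentially the paper's route: bound each endpoint of the interval from Fact~\ref{fact:degree_bounds} using $\frac{1}{\tau-1}\le\frac13$, and absorb the quadratic term in the upper endpoint via $\varepsilon\le 1/2$. The only difference is the order of steps: the paper first uses $\varepsilon^2\le\varepsilon/2$ to reach $1+2.5\varepsilon+\frac{1.5\varepsilon}{\tau-1}$ and then applies $\tau\ge 4$, and for the lower endpoint it simply drops $\varepsilon^2$ and bounds $\varepsilon(1-\varepsilon)\le\varepsilon$.
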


\begin{proof}
	From Fact~\ref{fact:degree_bounds}, we analyze the upper and lower bounds, separately. 
	For the upper bound, since $\varepsilon \le 1/2$, we have $\varepsilon^2 \le \varepsilon/2$. Thus:
	$$
	\left( 1 + \varepsilon \right)^2 + \frac{\varepsilon (1 + \varepsilon) }{\tau - 1} 
	= 1 + 2\varepsilon + \varepsilon^2 + \frac{\varepsilon(1 + \varepsilon)}{\tau - 1} 
	\le 1 + 2.5\varepsilon + \frac{1.5\varepsilon}{\tau - 1}.
	$$
	For $\tau \ge 4$, it follows an upper bound of $1 + 3\varepsilon$.
	For the lower bound:
	$$
	\left( 1 - \varepsilon \right)^2 - \frac{\varepsilon (1 - \varepsilon)}{\tau - 1} 
	= 1 - 2\varepsilon + \varepsilon^2 - \frac{\varepsilon(1 - \varepsilon)}{\tau - 1} 
	\ge 1 - 2\varepsilon - \frac{\varepsilon}{\tau - 1}.
	$$
	For $\tau \ge 4$, the expression is $> 1 - 3\varepsilon$. 
\end{proof}
Notice that in our experiments, we have values of $\tau$ from 253 (as-skitter) to 4448 (twitter).

\emph{Quality of Estimates.}
In the following, we report proofs of results analyzing the quality of the estimates for binned-degree clustering coefficients, returned by our proposed (two-pass) approach. Note that such results extend to our one-pass approach (see Section~\ref{sec:one_pass_algo}).

We start by analyzing the head estimator. We prove Theorem~\ref{thm:head_estimator} by showing bounds for NDCC and WDCC separately in Theorem~\ref{thm:ndcc_head_appendix} and~\ref{thm:wdcc_head_appendix}. 
First, we provide concentration for estimates of the number of nodes and the number of triangles. 
Throughout, we assume degree interval $D = [L, U)$, with $U = \Theta(L)$.

\begin{lemma}
	\label{lemma:concentration_C_h}
	Consider a degree interval $D$ and any $\varepsilon \in (0, 1) $. Let $C_h(D)$ be the number of nodes with degree in $D$ in the sample $S_h$. Then, if $p_h = \bigOmegatilde{\frac{1}{\varepsilon^2 \NumNodes{D}{}}}$, we have, with high probability, $C_h(D) \in (1 \pm \varepsilon) p_h \NumNodes{D}{}$.
\end{lemma}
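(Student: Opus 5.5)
The plan is to write $C_h(D)$ as a sum of independent indicators and apply the Multiplicative Chernoff Bound (Lemma~\ref{lemma:mul_chernoff_bound}). For each node $v \in \InducedNodes{D}{}$, let $X_v$ be the indicator that $hash[v] < p_h$, i.e., that $v$ enters $S_h$.

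By item~\ref{item:lemma_one_first} of Lemma~\ref{lemma:est_degrees_tail}, every node in $S_h$ has its exact degree recorded. Hence a sampled node is counted in $C_h(D)$ exactly when its true degree lies in $D$, which gives
$$C_h(D) = \sum_{v \in \InducedNodes{D}{}} X_v .$$
Two facts make the indicators suitable for Chernoff:
\begin{itemize}
\item For each $v$ we have $\Prob{X_v = 1} = p_h$, since $hash$ maps nodes uniformly into $[0,1]$.
\item Membership in $S_h$ is determined only by the hash value of $v$ and does not depend on the stream order, so the $X_v$ are mutually independent. Here I assume, as is standard, a fully random (or sufficiently independent) hash function.
\end{itemize}
It follows that $\mu = \Exp{C_h(D)} = p_h \NumNodes{D}{}$.

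Applying Lemma~\ref{lemma:mul_chernoff_bound} with $\delta = \varepsilon \in (0,1)$ gives
$$\Prob{\left|C_h(D) - p_h\NumNodes{D}{}\right| \ge \varepsilon p_h \NumNodes{D}{}} \le 2\exp\left(-\varepsilon^2 p_h \NumNodes{D}{}/3\right).$$
The hypothesis $p_h = \bigOmegatilde{1/(\varepsilon^2 \NumNodes{D}{})}$ means we may take $p_h \ge 3(c+1)\ln n/(\varepsilon^2 \NumNodes{D}{})$ for a suitable constant. With this choice the right-hand side is at most $2n^{-(c+1)} \le n^{-c}$, which is the high-probability claim. If statements are needed for all intervals $D$ at once, there are at most $n$ of them, so a union bound absorbs the extra factor into the logarithm.

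There is no real obstacle here. The only points that need justification are the independence of the $X_v$, which rests on the idealized hash assumption, and the fact that binning by exact degree introduces no error. The latter is precisely what distinguishes the head estimator from the tail estimator.
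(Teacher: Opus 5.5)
Your proposal is correct and takes the same approach as the paper. Like the paper, you write $C_h(D)$ as a sum of independent indicators $X_u$, one per node with degree in $D$, each equal to $1$ with probability $p_h$; you then note that the mean is $p_h \NumNodes{D}{}$ and apply the multiplicative Chernoff bound of Lemma~\ref{lemma:mul_chernoff_bound} with $\delta=\varepsilon$. Your extra details (the explicit constant in $p_h$ and the hash-independence assumption) just spell out what the paper leaves implicit. One small slip in your aside: arbitrary intervals can number $O(n^2)$ rather than $n$, but the union bound still absorbs this into the logarithm.
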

\begin{proof}
	Let $X_u = 1$ if $u \in S_h$ and 0 otherwise. Note that $X_i$'s are independent. We have $C_h(D) = \sum_{u, \Deg{u}{} \in D} X_u$, and $\Exp{C_h(D)} = p_h \NumNodes{D}{}$.
	Result follows from an application of Lemma~\ref{lemma:mul_chernoff_bound}.
\end{proof}

We show concentration bounds also for the number of triangles incident to sampled nodes. 
Recall that $\Delta_V(D)$ is the maximum number of triangles incident to a node with degree in $D$.
\begin{lemma}
	\label{lemma:concentration_T_h}
	Consider degree interval $D$ and any $\varepsilon \in (0, 1) $. Let $T_h(D)$ be the number of triangles incident to nodes with degree in $D$ in the sample $S_h$. Then, if $p_h = \bigOmegatilde{\frac{\Delta_V(D)}{\varepsilon^2 \Triangles{D}{}}}$, we have, with high probability, $T_h(D) \in (1 \pm \varepsilon) p_h \Triangles{D}{}$.
\end{lemma}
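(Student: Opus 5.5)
We write $T_h(D)$ as a sum of independent bounded terms, one per node, and apply a Chernoff bound normalized by $\Delta_V(D)$. For each $u \in \InducedNodes{D}{}$ let $X_u = 1$ if $u \in S_h$ and $0$ otherwise. Since membership in $S_h$ is decided by the hash function, the $X_u$ are independent Bernoulli$(p_h)$ variables, as already used in Lemma~\ref{lemma:concentration_C_h}. Then
\[
T_h(D) = \sum_{u \in \InducedNodes{D}{}} \LocalTrianglesCard{u}{} \, X_u, \qquad \Exp{T_h(D)} = p_h \sum_{u \in \InducedNodes{D}{}} \LocalTrianglesCard{u}{} = p_h \Triangles{D}{}.
\]
Here $\Triangles{D}{}$ counts each triangle once per incident node of degree in $D$, matching its definition as a sum of local counts.

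The terms $\LocalTrianglesCard{u}{} X_u$ are not in $[0,1]$, so we rescale. Define $Y_u = \LocalTrianglesCard{u}{} X_u / \Delta_V(D)$. By the definition of $\Delta_V(D)$, each $Y_u$ lies in $[0,1]$, and the $Y_u$ are independent. Let $Y = \sum_u Y_u = T_h(D)/\Delta_V(D)$, with mean $\mu = p_h \Triangles{D}{}/\Delta_V(D)$.

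Lemma~\ref{lemma:mul_chernoff_bound} then gives
\[
\Prob{|Y - \mu| \ge \varepsilon \mu} \le 2\exp\left(-\frac{\varepsilon^2 p_h \Triangles{D}{}}{3\,\Delta_V(D)}\right).
\]
Taking $p_h \ge \frac{3 c \ln n \cdot \Delta_V(D)}{\varepsilon^2 \Triangles{D}{}}$, which is the $\widetilde{\Omega}$ hypothesis, makes this probability at most $2/n^{c}$. Multiplying back by $\Delta_V(D)$ gives $T_h(D) \in (1\pm\varepsilon) p_h \Triangles{D}{}$ with high probability.

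The only delicate point is the normalization by the maximum per-node contribution $\Delta_V(D)$. It is needed because the summands are not bounded by $1$, and it is exactly why the ratio $\Delta_V(D)/\Triangles{D}{}$ appears in the requirement on $p_h$. Independence itself is not an obstacle, since node sampling is i.i.d. through the hash. If $\Triangles{D}{} = 0$ the statement is trivial, so we may assume $\Delta_V(D) \ge 1$.
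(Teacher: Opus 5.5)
Your proposal is correct and matches the paper's proof: you write $T_h(D)=\sum_{u \in V_D} X_u t_u$ with independent hash-based indicators $X_u$, rescale each summand by $\Delta_V(D)$ so it lies in $[0,1]$, and apply the multiplicative Chernoff bound (Lemma~\ref{lemma:mul_chernoff_bound}). Your explicit threshold $p_h \ge 3c\ln n\,\Delta_V(D)/(\varepsilon^2 T(D))$ simply spells out the constant that the paper leaves inside the $\widetilde{\Omega}$.
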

\begin{proof}
	Let $X_u = 1$ if $u \in S_h$ and 0 otherwise. Note that $X_i$'s are independent. We have $T_h(D) = \sum_{u, \Deg{u}{} \in D} X_u \LocalTrianglesCard{u}{}$, and $\Exp{T_h(D)} = p_h \Triangles{D}{}$.
	By definition, it holds $X_u t_u \le \Delta_V(D)$, for each $u$. 
	Result follows by an application of Lemma~\ref{lemma:mul_chernoff_bound} to the normalized variables $X_u t_u / \Delta_V(D)$.
\end{proof}

\begin{theorem}
	\label{thm:ndcc_head_appendix}
	Consider degree interval $D = [L, U)$ such that $L < U \le \tau$. For any $\varepsilon, \eta \in (0, 1)$, if
	$$
	p_h = \bigOmegatilde{ \frac{1}{\eta^2} \left( \frac{1}{\NumNodes{D}{}} + \frac{\Delta_V(D)}{\Triangles{D}{}} \right)}  \text{ and }
	$$
	$$
	p_{M_h} p_{A_h} = \bigOmegatilde{\frac{1}{ \varepsilon^2 } \left( \eta^2 + \frac{\Delta_E(D)}{\Triangles{D}{} } \right) },
	$$ 
	then Algorithm \algnameestimate\ outputs a value $\widehat{NDCC}(D)$ such that:
	\begin{align*}
		& \left| \widehat{NDCC}(D) - \NodeAvgCC{D}{} \right| \le \varepsilon \cdot \NodeAvgCC{D}{} +  \eta.
	\end{align*}
	with high probability.
\end{theorem}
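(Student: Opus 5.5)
The plan is to split the error of $\widehat{NDCC}(D)$ into a node-sampling part and an edge-sampling part, controlled respectively by $p_h$ and by $p_{M_h}p_{A_h}$.

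Write $S_h(D)=\{u\in S_h:\Deg{u}{}\in D\}$ and $C_h(D)=|S_h(D)|$. By Item~\ref{item:lemma_one_first} of Lemma~\ref{lemma:est_degrees_tail}, degrees in $S_h$ are exact, so $S_h(D)$ is exactly the set of sampled nodes of $\InducedNodes{D}{}$ and no binning error arises. Introduce the intermediate quantity
\[
Y=\frac{1}{C_h(D)}\sum_{u\in S_h(D)}\LocalCC{u}{},
\]
the average of the true LCCs over the sampled nodes. Then
\[
|\widehat{NDCC}(D)-\NodeAvgCC{D}{}|\le|\widehat{NDCC}(D)-Y|+|Y-\NodeAvgCC{D}{}|,
\]
and we bound the two terms separately.

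\emph{Step 1: node sampling, the term $|Y-\NodeAvgCC{D}{}|$.}
\begin{itemize}
\item Condition on $C_h(D)=k$. The sampled nodes are then a uniformly random $k$-subset of $\InducedNodes{D}{}$. Sampling without replacement is negatively associated, so Hoeffding's inequality (Lemma~\ref{lemma:hoeffding_ineq}) applies to the values $\LocalCC{u}{}\in[0,1]$ and gives $|Y-\NodeAvgCC{D}{}|\le\eta/2$ with high probability once $k=\widetilde\Omega(1/\eta^2)$.
\item Lemma~\ref{lemma:concentration_C_h} (with constant $\varepsilon$) gives $k\ge p_h\NumNodes{D}{}/2$ whenever $p_h=\widetilde\Omega(1/\NumNodes{D}{})$. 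This holds, and moreover $k=\widetilde\Omega(1/\eta^2)$, under the assumption $p_h=\widetilde\Omega(1/(\eta^2\NumNodes{D}{}))$.
\item Apply Lemma~\ref{lemma:concentration_T_h} with accuracy $\eta$ under $p_h=\widetilde\Omega(\Delta_V(D)/(\eta^2\Triangles{D}{}))$. This gives $T_h(D)\in(1\pm\eta)p_h\Triangles{D}{}$, a lower bound on the total triangle mass inside the sample that Step 2 needs.
\end{itemize}

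\emph{Step 2: triangle estimation, the term $|\widehat{NDCC}(D)-Y|$.} Condition on $S_h$. We have
\[
\widehat{NDCC}(D)-Y=\frac{1}{k}\sum_{u\in S_h(D)}\frac{\hat T_h[u]-\LocalTrianglesCard{u}{}}{\binom{\Deg{u}{}}{2}}.
\]
\begin{itemize}
\item Unbiasedness: each triangle counted by Algorithm~\ref{alg:count_triangles} is weighted by the inverse of its detection probability. Whenever a triangle has a node in $S_h$, its first two edges land in $M_h$ or $A_h$ with independent coins, so $\Exp{\hat T_h[u]}=\LocalTrianglesCard{u}{}$.
\item The sum decomposes over triangles $\Delta$ incident to $S_h(D)$. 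Let $X_\Delta$ be the centered, rescaled indicator that $\Delta$ is detected, weighted by $\frac1k\sum_{u\in\Delta\cap S_h(D)}\binom{\Deg{u}{}}{2}^{-1}$.
\item Each $|X_\Delta|\le 3/(k\,p_{M_h}p_{A_h}\binom{L}{2})$, using $p_{M_h}^2\ge p_{M_h}p_{A_h}$ up to a constant, or handling the two cases separately.
\item Two triangle variables are dependent only if they share a sampled edge. Hence the dependency graph has degree $O(\Delta_E(D))$, and $\chi^*\le O(\Delta_E(D))$.
\item The variance is at most $\sum_\Delta \Exp{X_\Delta^2}\lesssim \frac{T_h(D)}{k^2p_{M_h}p_{A_h}\binom{L}{2}^2}$.
\end{itemize}
Apply Lemma~\ref{lemma:bernstein_frac_chromatic} with $t=\varepsilon Y+\eta/2$. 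Using $U=\Theta(L)$, we have $\binom{\Deg{u}{}}{2}=\Theta(\binom{L}{2})$ on $D$, so $Y=\Theta\big(T_h(D)/(k\binom{L}{2})\big)$ and $T_h(D)\approx p_h\Triangles{D}{}$. The exponent becomes of order
\[
\frac{t^2p_{M_h}p_{A_h}k\binom{L}{2}}{\Delta_E(D)\,(Y+t)}.
\]

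\emph{Main obstacle.} The hard part is making this exponent $\Omega(\log n)$ without knowing how $Y$ compares to $\eta$. I would split into two cases.
\begin{itemize}
\item If $Y\ge\eta$, use $t\ge\varepsilon Y$. The bound reduces to $p_{M_h}p_{A_h}\,\varepsilon^2\,T_h(D)/\Delta_E(D)=\widetilde\Omega(1)$. This requires $p_{M_h}p_{A_h}=\widetilde\Omega(\Delta_E(D)/(\varepsilon^2\Triangles{D}{}))$, which absorbs the $p_h$ factor via Step 1 and accounts for one term in the hypothesis.
\item If $Y<\eta$, use $t\ge\eta/2$. The $\eta^2/\varepsilon^2$ term in the hypothesis on $p_{M_h}p_{A_h}$ together with the node-count lower bound on $k$ from Step 1 should suffice.
\end{itemize}
Both cases need careful tracking of the normalizations $k$ and $\binom{L}{2}$, which is where I expect most of the bookkeeping. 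Combining the two steps via a union bound gives
\[
|\widehat{NDCC}(D)-\NodeAvgCC{D}{}|\le\varepsilon Y+\eta\le\varepsilon\,\NodeAvgCC{D}{}+O(\eta),
\]
and rescaling the constants in $\eta$ completes the argument.
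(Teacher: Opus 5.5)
Your decomposition and Step~1 match the paper's. Step~2 has a genuine gap: the bound $\chi^*=O(\Delta_E(D))$ is too weak to reach the stated hypothesis on $p_{M_h}p_{A_h}$. By your own computation, the exponent in the case $Y\ge\eta$ is of order $\varepsilon^2p_{M_h}p_{A_h}T_h(D)/\Delta_E(D)$. Since $T_h(D)\approx p_h\Triangles{D}{}$, this forces $p_{M_h}p_{A_h}=\widetilde{\Omega}\bigl(\Delta_E(D)/(\varepsilon^2p_h\Triangles{D}{})\bigr)$. The factor $1/p_h\ge 1$ makes this \emph{stronger} than the hypothesis, and nothing in Step~1 absorbs it. It would be harmless if $\Delta_E(D)=O(\Delta_V(D))$, because then $\Delta_E(D)/(p_h\Triangles{D}{})=O(\eta^2)$ by the hypothesis on $p_h$. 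But $\Delta_E(D)$ can be much larger than $\Delta_V(D)$. Take $N$ disjoint gadgets, each an edge $\{a_j,b_j\}$ plus $K\ge 3$ common neighbours of degree $2$, and let $D=[2,4)$. Then $\Delta_V(D)=1$, $\Delta_E(D)=K$ and $\NumNodes{D}{}=\Triangles{D}{}=NK$. The hypotheses allow $p_h\approx\mathrm{polylog}(n)/(\eta^2NK)$ and $p_{M_h}p_{A_h}\approx\mathrm{polylog}(n)\,(\eta^2+1/N)/\varepsilon^2$. Your argument, however, needs $p_{M_h}p_{A_h}$ of order $\eta^2K/\varepsilon^2$ up to polylogarithmic factors. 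The theorem still holds in this example: w.h.p.\ each gadget contains only $O(\log n)$ sampled nodes, so the triangle estimates are only mildly dependent. That is exactly what $\chi^*=O(\Delta_E(D))$ cannot see.

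The missing idea, which is how the paper argues, is a bound on $\chi^*$ that uses the randomness of $S_h$. Only triangles with a vertex in $S_h\cap\InducedNodes{D}{}$ enter the sum. There are two cases for an edge $e$:
\begin{itemize}
\item If $e$ has an endpoint in $S_h\cap\InducedNodes{D}{}$, every triangle on $e$ contains that endpoint, so at most $\Delta_V(D)$ of them appear.
\item Otherwise, a triangle on $e$ appears only if its third vertex lies in $S_h\cap\InducedNodes{D}{}$. These third vertices are distinct and hashed independently, so their number is $\mathrm{Bin}(t^{3rd}_e,p_h)$ with $t^{3rd}_e\le\Delta_E(D)$. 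Hence it is $O(p_h\Delta_E(D)+\log n)$ simultaneously for all edges w.h.p.
\end{itemize}
So $\chi^*=O(\Delta_V(D)+p_h\Delta_E(D)+\log n)$. Using $T_h(D)\ge(1-\eta)p_h\Triangles{D}{}$ from Step~1, the hypothesis on $p_h$ with its polylogarithmic slack, and $\Delta_V(D)\ge 1$, you get
\[
\frac{\chi^*}{T_h(D)}=O\!\left(\frac{\Delta_V(D)+\log n}{p_h\Triangles{D}{}}+\frac{\Delta_E(D)}{\Triangles{D}{}}\right)=O\!\left(\eta^2+\frac{\Delta_E(D)}{\Triangles{D}{}}\right).
\]
This is where the $\eta^2$ term in the hypothesis on $p_{M_h}p_{A_h}$ comes from: it pays for the $\Delta_V(D)$ part of $\chi^*$, not for an additive-error regime.

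With this bound your case split is unnecessary. Apply the Bernstein-type inequality to $\sum_{u\in S_h(D)}(\hat T_h[u]-t_u)/{d_u\choose 2}$ with the purely multiplicative deviation $\varepsilon kY$. The exponent is of order $\varepsilon^2p_{M_h}p_{A_h}T_h(D)/\chi^*$ regardless of how $Y$ compares with $\eta$. This gives $|\widehat{NDCC}(D)-Y|\le\varepsilon Y$ w.h.p., and Step~1 then yields $\varepsilon Y\le\varepsilon\NodeAvgCC{D}{}+\varepsilon\eta$.
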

\begin{proof}
	Recall that $\EstDeg{u}{} = \Deg{u}{}$ for any $u \in S_h$, since degrees are tracked exactly (Lemma~\ref{lemma:est_degrees_tail}, item~\ref{item:lemma_one_first}).
	For head estimator, Alg. \algnameestimate\ computes estimates as:
	\begin{equation}
		\widehat{NDCC}(D) = \frac{1}{{ \left| \left\{ u \in S_h, \Deg{u}{} \in D \right\} \right| }} \; \sum_{u \in S_h, \Deg{u}{} \in D} \frac{\EstLocalTrianglesCard{u}{}}{{\Deg{u}{} \choose 2} }.
	\end{equation}
	Let $\widehat{C}(D) = \left| \left\{ u \in S_h, \Deg{u}{} \in D \right\} \right|$. 
	Define the quantity: 
	$$
	Y(D) = \frac{1}{ \widehat{C}(D)} \; \sum_{u \in S_h, \Deg{u}{} \in D} \frac{ \LocalTrianglesCard{u}{} } {{\Deg{u}{} \choose 2} }
	$$
	We bound the overall error as follows:
	\allowdisplaybreaks
	\begin{align}
		\label{eq:bound_ndcc}
		& | \widehat{NDCC}(D) - \NodeAvgCC{D}{}| \nonumber \\
		&\; \le \underbrace{| \NodeAvgCC{D}{} - Y(D) |}_{(i): \text{ node sampling error}} + \underbrace{| \widehat{NDCC}(D) - Y(D)|}_{(ii): \text{ triangle estimator error}}.
	\end{align}
	
	First, we bound term $(i)$ of Eq.~\ref{eq:bound_ndcc}. 
	Let $Z_u = \LocalTrianglesCard{u}{} / {\Deg{u}{} \choose 2}$ if $u \in S_h$, and 0 otherwise; we have $Z_u \in [0, 1]$ and $Y(D) = \sum_{u, \Deg{u}{} \in D} Z_u / \widehat{C}(D)$. %Note that $Z_u$'s are independent. 
	Conditioning on the event $\widehat{C}(D) = C \in (1 \pm \eta) p_h \NumNodes{D}{}$ (see Lemma~\ref{lemma:concentration_C_h}), we have that variables $Z_i$'s are negatively associated. That is, $\Prob{Z_u \ne 0 \;|\; \widehat{C}(D) = C} = C / \NumNodes{D}{}$. Therefore:
	\begin{align*}
		& \Exp{Y(D) \;\middle\vert\; \widehat{C}(D) = C} \\
		& = \frac{1}{C} \sum_{u, \Deg{u}{} \in D} \frac{\LocalTrianglesCard{u}{}}{{\Deg{u}{} \choose 2}} \; \Exp{u \in S_h, \Deg{u}{} \in D \;\middle\vert\; \widehat{C}(D) = C } \\ 
		&=  \frac{1}{C} \sum_{u, \Deg{u}{} \in D} \frac{\LocalTrianglesCard{u}{}}{{\Deg{u}{} \choose 2}} \; \frac{C}{\NumNodes{D}{}} = \NodeAvgCC{D}{}.
	\end{align*}
	Applying Hoeffding's inequality (Lemma~\ref{lemma:hoeffding_ineq}) for negative associated variables:
	\begin{align*}
		& \Prob{ \left| Y(D) - \NodeAvgCC{D}{} \right| \ge \eta \;\middle\vert\; \widehat{C}_h = C} \\
		&= \Prob{ \left| \frac{\sum_{u, \Deg{u}{} \in D} Z_u}{C} - \NodeAvgCC{D}{} \right| \ge \eta \;\middle\vert\; \widehat{C}_h = C} \\
		& \le 2 \exp \left( -2 C \eta^2 \right) \le 2 \exp \left( -2 (1 - \eta) p_h \Ccount(D) \eta^2 \right),
	\end{align*}
	which is $\le 1 / n^c$ for $p_h = \bigOmegatilde{ \frac{1}{\eta^2 \NumNodes{D}{}} }$, which bounds the first term of Eq.~\ref{eq:bound_ndcc}.
	Applying a union bound over both events concludes the proof for case 1.
	
	Now, we bound term $(ii)$ of Eq.~\ref{eq:bound_ndcc}, which controls the error of our triangle-estimator. Our goal boils down to bounding:
	\begin{equation}
		\label{eq:item_ii_case_1_proof_NDCC}
		\frac{1}{\widehat{C}(D)} \left| \sum_{u \in S_h, \Deg{u}{} \in D} \frac{\EstLocalTrianglesCard{u}{} - \LocalTrianglesCard{u}{}}{{\Deg{u}{} \choose 2}}  \right|.
	\end{equation}
	Define $\widehat{A}(D), A(D)$ as $\sum_{u \in S_h, \Deg{u}{} \in D} \frac{\EstLocalTrianglesCard{u}{}}{{\Deg{u}{} \choose 2}}$ and $\sum_{u \in S_h, \Deg{u}{} \in D} \frac{\LocalTrianglesCard{u}{}}{{\Deg{u}{} \choose 2}}$, respectively.
	Eq.~\ref{eq:item_ii_case_1_proof_NDCC} can be expressed as $\frac{1}{\widehat{C}(D)} \left| \widehat{A}(D) - A(D) \right|$.
	Let $\mathcal{T}_h(D)$ be the set of triangles incident to nodes in $S_h$, having degree in $D$. We have $|\mathcal{T}_h(D)| = \Triangles{D}{h}$. 
	For each triangle $\Delta$, define:
	$$
	w_\Delta = \sum_{u \in \Delta \cap S_h, \Deg{u}{} \in D} \frac{1}{{\Deg{u}{} \choose 2}} \in \left[ \frac{1}{{U \choose 2}}, \frac{3}{{L \choose 2}} \right]. 
	$$
	We have $\sum_{\Delta \in \mathcal{T}_h(D)} w_\Delta = A(D)$.
	Let $I_\Delta = 1$ if triangle $\Delta$ is counted by Alg. \algnamesecondpass, and 0 otherwise. We can write: 
	$$
	\widehat{A}(D) = \sum_{\Delta \in \mathcal{T}_h(D)} \frac{w_\Delta}{p_\Delta} I_\Delta, \textup{ and } \Exp{\widehat{A}(D)} = A(D).
	$$
	Let $X_\Delta = \frac{w_\Delta I_\Delta}{p_\Delta}$.
	We have $\Var{X_\Delta} \le \frac{9 p_\Delta (1 - p_\Delta)}{{L \choose 2}^2 p_\Delta^2} \le \frac{9}{{L \choose 2}^2 p_A p_M}$, assuming $p_A \le p_M$.
	Define centerd variables $Y_\Delta = X_\Delta - \Exp{X_\Delta}$. We have $Y_\Delta = w_\Delta \left( \frac{I_\Delta}{p_\Delta} - 1 \right) \le \frac{3}{{L \choose 2} p_M p_A}$.
	Note that $\sum_\Delta X_\Delta = \widehat{A}(D)$.
	
	Consider the dependency graph $H$ for variables $X_1, \dots X_{T_h(D)}$. That is, place a vertex in $H$ for every variable $X_i$, and place an edge between $X_{\Delta_i}$ and $X_{\Delta_j}$ if they are dependent, i.e., if triangle $\Delta_i$ and triangle $\Delta_j$ share a sampled edge during the streaming algorithm. 
	
	We provide a refined bound for the fractional chromatic number $\chi^*(H) \le \chi$, which corresponds to the maximum number of triangles (among those in $\mathcal{T}_h(D)$) sharing an edge in the original graph. 
	Fix an edge $e = \{a, b\}$. We have the following cases:
	\begin{myitemize}
		\item if $a \in V_D$ and is sampled in $S_h$, then every triangle adjacent to edge $e$ is incident to $a$. Hence, $\chi \le \Delta_V(D)$. The same holds when $b \in V_D$ and $b \in S_h$.
		\item if $a, b \notin V_D$, then a triangle incident to $e$ is counted when the third endpoint is in $V_D$ and sampled in $S_h$. Hence, the number of distinct triangles through $e$ corresponds to the number of nodes (in $S_h \cap V_D$) forming triangles $e$. Let $t^{3rd}_e$ be the number of triangles adjacent to edge $e$ with the third endpoint in $V_D$. We express $X_e \sim Bin(t^{3rd}_e, p_h)$, as each node is sampled i.i.d. via hashing. Let $\mu_e = t^{3rd}_e p_h$. If $\mu_e = \Omega(\log n)$, by a Chernoff bound we get $\Prob{X_e \ge 2 \mu_e} \le \exp(- \mu_e / 3) \le n^{-c - 2}$. Otherwise, if $\mu_e = o(\log n)$, we have: $\Prob{X_e \ge a} \le \sum_{k \ge a} {t_e^{3rd} \choose k} p_h^k \le  \sum_{k \ge a} \left( \frac{e \mu_e}{k} \right)^k \le 2 \left( \frac{e \mu_e}{a} \right)^a$ by solving the geometric series for $a \ge 2e \mu_e$. Picking $a = \max \left(2e \mu_e ,  O(\log n) \right)$ makes this probability small enough. 
	\end{myitemize}
	Given that $t^{3rd}_e \le \Delta_E(D)$ and by a union bound over $\Theta(n^2)$ many edges, we get $\chi \le \Delta_V(D) + p_h \Delta_E(D) + \log n$ (with high probability)

	Now, we have $\sum_\Delta \Var{Y_\Delta} =  \sum_\Delta \Var{X_\Delta} \le \Triangles{D}{h} \frac{9}{{L \choose 2}^2 p_A p_M}$.
	By defining $Y = \sum_\Delta Y_\Delta$, we can apply Lemma~\ref{lemma:bernstein_frac_chromatic}:
	\allowdisplaybreaks
	\begin{align*}
		& \Prob{ \left| \widehat{A}(D) - A(D) \right| \ge \varepsilon A(D)} = \Prob{ \left| Y \right| \ge \varepsilon A(D)} \\
		& \le 2 \exp \left( \frac{-8 \varepsilon^2 A^2(D) }{25 \chi \left( \frac{9 \Triangles{D}{h}}{{L \choose 2}^2 p_M p_A} + \frac{ \varepsilon A(D)}{{L \choose 2} p_M p_A} \right)} \right),
	\end{align*}
	which is $\le 1/n^c$ given that $A(D) \in T_h(D) \left[ 1/{U \choose 2}, 3/{L \choose 2} \right]$, $U/L = \Theta(1)$, for $p_M p_A = \bigOmegatilde{\frac{\chi}{\varepsilon^2 T_h(D)}}$. 
	We then use Lemma~\ref{lemma:concentration_T_h} to obtain concetration for $T_h(D)$ around $p_h \Triangles{D}{}$. Finally, we use the fact that $1 / \NumNodes{D}{} \le \Delta_V(D) / \Triangles{D}{}$ to get $p_M p_A = \bigOmegatilde{\frac{1}{\varepsilon^2} \left( \eta^2 + \frac{\Delta_E(D)}{\varepsilon^2 \Triangles{D}{}}} \right)$.
	Thus, we showed:
	\begin{align*}
		& \left| \widehat{NDCC}(D) - Y(D) \right| = \frac{1}{\widehat{C}(D)} \left| \widehat{A}(D) - A(D) \right| \le \frac{\varepsilon A(D)}{\widehat{C}(D)} = \varepsilon Y(D),
	\end{align*}
	with high probability. Combining item $(i), (ii)$ we proved, for $D = [L, U)$ with $L \le U \le \tau$, that: 
	\begin{align*}
		& \left| \widehat{NDCC}(D) - \NodeAvgCC{D}{} \right| \le \eta + \varepsilon Y(D) \le \eta + \varepsilon (\NodeAvgCC{D}{} + \eta) \\
		& = \varepsilon \cdot \NodeAvgCC{D}{} + (1 + \varepsilon) \eta.
	\end{align*}
	The final result follows then by rescaling $\eta$.
	
\end{proof}

\begin{theorem}
	\label{thm:wdcc_head_appendix}
	Consider a degree interval $D = [L, U)$ such that $L < U \le \tau$. For any $\varepsilon, \eta \in (0, 1)$, if
	$$
	p_h = \bigOmegatilde{ \frac{1}{\eta^2} \left( \frac{1}{\NumNodes{D}{}} + \frac{\Delta_V(D)}{\Triangles{D}{}} \right)}  \text{ and }
	$$
	$$
	p_{M_h} p_{A_h} = \bigOmegatilde{\frac{1}{ \varepsilon^2 } \left( \eta^2 + \frac{\Delta_E(D)}{\Triangles{D}{} } \right) },
	$$ 
	then Algorithm \algnameestimate\ outputs a value $\widehat{WDCC}(D)$ such that:
	\begin{align*}
		& \left| \widehat{WDCC}(D) - \WedgeAvgCC{D}{} \right| \le \varepsilon \cdot \WedgeAvgCC{D}{} +  \eta.
	\end{align*}
\end{theorem}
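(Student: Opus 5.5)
We propose to follow the template of the proof of Theorem~\ref{thm:ndcc_head_appendix}, now treating the ratio of two sums. The head estimator is
\[
\widehat{WDCC}(D) = \frac{\widehat{T}_h(D)}{W_h(D)}, \qquad \widehat{T}_h(D)=\sum_{u\in S_h,\Deg{u}{}\in D}\hat{T}_h[u], \qquad W_h(D)=\sum_{u\in S_h,\Deg{u}{}\in D}{\Deg{u}{}\choose 2}.
\]
Introduce the intermediate quantity $Y(D)=T_h(D)/W_h(D)$, where $T_h(D)=\sum_{u\in S_h,\Deg{u}{}\in D}\LocalTrianglesCard{u}{}$ is the exact triangle mass of the sampled head nodes (Lemma~\ref{lemma:concentration_T_h}). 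We then split
\[
|\widehat{WDCC}(D)-\WedgeAvgCC{D}{}| \le |Y(D)-\WedgeAvgCC{D}{}| + |\widehat{WDCC}(D)-Y(D)|,
\]
where the first term is the node-sampling error and the second is the triangle-estimator error.

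\emph{Node-sampling term.} We show that numerator and denominator of $Y(D)$ each concentrate multiplicatively.
\begin{itemize}
\item For the numerator, Lemma~\ref{lemma:concentration_T_h} with accuracy $\eta$ gives $T_h(D)\in(1\pm\eta)p_h\Triangles{D}{}$ once $p_h=\bigOmegatilde{\Delta_V(D)/(\eta^2\Triangles{D}{})}$.
\item For the denominator, write $W_h(D)=\sum_{u}X_u{\Deg{u}{}\choose 2}$. Since $U=\Theta(L)$, every summand lies in $[{L\choose 2},{U\choose 2}]$, a constant-ratio range. Normalizing by ${U\choose 2}$, the multiplicative Chernoff bound (Lemma~\ref{lemma:mul_chernoff_bound}) gives $W_h(D)\in(1\pm\eta)p_hW(D)$. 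Its mean is $\Theta(p_h\NumNodes{D}{})$ in normalized units, so $p_h=\bigOmegatilde{1/(\eta^2\NumNodes{D}{})}$ suffices; this is the same reasoning as in Lemma~\ref{lemma:concentration_C_h}.
\end{itemize}
Taking the ratio and applying a union bound gives $Y(D)\in\frac{1\pm\eta}{1\mp\eta}\WedgeAvgCC{D}{}$, so $|Y(D)-\WedgeAvgCC{D}{}|\le 3\eta\,\WedgeAvgCC{D}{}\le 3\eta$, using $\WedgeAvgCC{D}{}\le 1$. This part is simpler than the NDCC case because no conditioning or negative association is needed: both sums are sums of independent bounded variables.

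\emph{Triangle term.} The denominator $W_h(D)$ is exact given $S_h$, so
\[
|\widehat{WDCC}(D)-Y(D)| = \frac{|\widehat{T}_h(D)-T_h(D)|}{W_h(D)},
\]
and it suffices to show $|\widehat{T}_h(D)-T_h(D)|\le\varepsilon T_h(D)$. Write $\widehat{T}_h(D)=\sum_{\Delta\in\mathcal{T}_h(D)}c_\Delta I_\Delta/p_\Delta$, where $c_\Delta\in\{1,2,3\}$ counts the endpoints of $\Delta$ in $S_h\cap V_D$. Unbiasedness conditional on $S_h$ follows as before. We then apply Lemma~\ref{lemma:bernstein_frac_chromatic} with the following inputs:
\begin{itemize}
\item the same dependency graph $H$ as in the NDCC proof, with $\chi^*(H)\le\Delta_V(D)+p_h\Delta_E(D)+\log n$ w.h.p.;
\item variance at most $9T_h(D)/(p_Mp_A)$;
\item range bound $b\le 3/(p_Mp_A)$.
\end{itemize}
This yields the bound w.h.p. provided $p_{M_h}p_{A_h}=\bigOmegatilde{\chi/(\varepsilon^2T_h(D))}$. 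The weights are now constants instead of $1/{\Deg{u}{}\choose 2}$, which makes this step slightly cleaner than before.

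\emph{Converting to the stated condition (main obstacle).} We need to turn $\chi/T_h(D)$ into the stated $\eta^2+\Delta_E(D)/\Triangles{D}{}$. Substitute $T_h(D)\ge(1-\eta)p_h\Triangles{D}{}$ and bound each piece of $\chi$:
\begin{itemize}
\item $\Delta_V(D)/(p_h\Triangles{D}{})\le\eta^2$ by the choice of $p_h$;
\item $p_h\Delta_E(D)/(p_h\Triangles{D}{})=\Delta_E(D)/\Triangles{D}{}$;
\item $\log n/(p_h\Triangles{D}{})$ is absorbed into $\eta^2$ up to log factors, using $p_h\Triangles{D}{}\ge p_h\Delta_V(D)$ and the hidden polylog terms.
\end{itemize}
This is the step we expect to need the most care, together with making sure $T_h(D)>0$ w.h.p. so that the ratio is well defined.

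\emph{Combining.} With both bounds in hand, $|\widehat{WDCC}(D)-\WedgeAvgCC{D}{}|\le 3\eta+\varepsilon Y(D)\le\varepsilon\WedgeAvgCC{D}{}+O(\eta)$, and rescaling $\eta$ gives the statement.
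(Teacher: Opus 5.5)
Your proof is correct and shares the paper's skeleton: the same intermediate $Y(D)=T_h(D)/W_h(D)$, the same split into node-sampling error plus triangle-estimator error, and, for the triangle term, the dependency-graph Bernstein argument that the paper invokes only by saying it is ``equivalent'' to the NDCC case.

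The genuine difference is the node-sampling term. The paper linearizes the ratio. It writes $\Pr[Y(D)\ge WDCC(D)+\eta]=\Pr[\sum_u Z_u\ge 0]$ with $Z_u=X_u\bigl(t_u-(WDCC(D)+\eta){d_u \choose 2}\bigr)$. This is a sum of independent bounded variables with mean $-\eta p_h W(D)$, and the paper applies Bernstein (Lemma~\ref{lemma:bernstein_ineq}) to each tail. That step needs only $p_h=\widetilde{\Omega}(1/(\eta^2 C(D)))$ and yields an additive $\eta$.

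You instead concentrate numerator and denominator separately, using Lemma~\ref{lemma:concentration_T_h} and a Chernoff bound on the normalized ${d_u \choose 2}$'s, and then take the ratio. This spends the $\Delta_V(D)/T(D)$ part of the $p_h$ hypothesis already in this step. That costs nothing, since you need $T_h(D)\in(1\pm\eta)p_hT(D)$ anyway for the triangle term. In return you get a stronger, multiplicative bound $|Y(D)-WDCC(D)|=O(\eta)\,WDCC(D)$. Combined with $|\widehat{WDCC}(D)-Y(D)|\le\varepsilon Y(D)$, your route actually gives a purely relative error $(\varepsilon+O(\eta))\,WDCC(D)$. So weakening to $3\eta$ via $WDCC(D)\le 1$ is unnecessary.

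The paper's linearization is the more robust tool in general. In the NDCC setting the numerator is a sum of LCCs, for which no multiplicative-concentration hypothesis is available, so your ratio trick would not carry over there.

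Two small fixes:
\begin{itemize}
\item $\frac{1+\eta}{1-\eta}-1=\frac{2\eta}{1-\eta}$ is at most $3\eta$ only for $\eta\le 1/3$. This is harmless after rescaling $\eta$.
\item When absorbing $\log n/(p_hT(D))$ into $\eta^2$, the inequality you want is $p_hT(D)=\widetilde{\Omega}(\Delta_V(D)/\eta^2)\ge\widetilde{\Omega}(1/\eta^2)$, using $\Delta_V(D)\ge 1$ when $T(D)>0$. The inequality $p_hT(D)\ge p_h\Delta_V(D)$ that you cite would require $p_h\Delta_V(D)$ itself to be large, which the hypotheses do not give.
\end{itemize}
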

\begin{proof}
	Recall that $\EstDeg{u}{} = \Deg{u}{}$ for any $u \in S_h$, since degrees are tracked exactly (Lemma~\ref{lemma:est_degrees_tail}, item~\ref{item:lemma_one_first}).
	For the head estimator, Alg. \algnameestimate\ computes:
	\begin{equation}
		\widehat{WDCC}(D) = \frac{ \sum_{u \in S_h, \Deg{u}{} \in D} \EstLocalTrianglesCard{u}{}} {\sum_{u \in S_h, \Deg{u}{} \in D} {{\Deg{u}{} \choose 2}}}.
	\end{equation}
	Define the quantity using true triangle counts: 
	$$
	Y(D) = \frac{ \sum_{u \in S_h, \Deg{u}{} \in D} \LocalTrianglesCard{u}{}} {\sum_{u \in S_h, \Deg{u}{} \in D} {{\Deg{u}{} \choose 2}}}.
	$$
	We bound the overall error as follows:
	\begin{align}
		\label{eq:bound_wdcc}
		& | \widehat{WDCC}(D) - \WedgeAvgCC{D}{}| \nonumber \\
		&\; \le \underbrace{| \WedgeAvgCC{D}{} - Y(D) |}_{(i): \text{ node sampling error}} + \underbrace{| \widehat{WDCC}(D) - Y(D)|}_{(ii): \text{ triangle estimator error}}.
	\end{align}
	
	First, we bound term $(i)$ of Eq.~\ref{eq:bound_wdcc}. 
	Let $X_u = 1$ if $u \in S_h$ and 0 otherwise. 
	For the upper tail, we have:
	\begin{align*}
		& \Prob{Y(D) \ge \WedgeAvgCC{D}{} + \eta} \\
		& = \Prob{\sum_{u, d_u \in D} X_u \LocalTrianglesCard{u}{} \ge \left( \WedgeAvgCC{D}{} + \eta \right) \sum_{u, d_u \in D} X_u {\Deg{u}{} \choose 2}} \\
		& = \Prob{\sum_{u, d_u \in D} Z_u \ge 0},
	\end{align*}
	by defining $Z_u = X_u \left( \LocalTrianglesCard{u}{} - \left(  \WedgeAvgCC{D}{} + \eta \right) {d_u \choose 2} \right)$. Note that $Z_u$'s are independent. 
	Let $Z = \sum_{u, d_u \in D} Z_u$. We have:
	\begin{align*}
		& \Exp{Z} = p_h \sum_{u, d_u \in D} \left( \LocalTrianglesCard{u}{} - \WedgeAvgCC{D}{} {d_u \choose 2} - \eta {d_u \choose 2} \right) \\
		& = - \eta \sum_{u, d_u \in D} p_h {d_u \choose 2} \ge - \eta \NumNodes{D}{} p_h {U \choose 2}, 
	\end{align*}
	as $D = [L, U)$.
	Moreover, each $|Z_u|$ is upper bounded by ${d_u \choose 2} \le {U \choose 2}$, and $\Var{Z} \le p_h \NumNodes{D}{} {U \choose 2}^2$. By Bernstein's inequality (Lemma~\ref{lemma:bernstein_ineq}): 
	\begin{align*}
		\Prob{Z - \Exp{Z} \ge -\Exp{Z}} \le 2 \exp \left( \frac{- p_h^2 \eta^2 {L \choose 2}^2 \left(\NumNodes{D}{}\right)^2} {2 p_h \NumNodes{D}{} {U \choose 2}^2 + \frac{2}{3} \NumNodes{D}{} {U \choose 2}^2 p_h \eta }  \right),
	\end{align*}
	which is $\le 1/n^c$ for $p_h = \bigOmegatilde{\frac{1}{\eta^2 \NumNodes{D}{}}}$, when $U = \Theta(L)$. An equivalent proof applies for the lower tail $Y(D) \le \WedgeAvgCC{D}{} - \eta$, after which we apply a union bound over the two tail bounds. 
	
	Now, we bound term $(ii)$ of Eq.~\ref{eq:bound_wdcc}, which controls the error of our triangle-estimator. 
	The proof is equivalent to the one for Theorem~\ref{thm:ndcc_head_appendix} by defining variables accordingly. 
	Thus:
	\begin{align*}
		& \left| \widehat{WDCC}(D) - Y(D) \right| = \frac{1}{\sum_{u, d_u \in D} {d_u \choose 2}} \left| \widehat{A}(D) - A(D) \right| \le \varepsilon Y(D),
	\end{align*}
	with high probability. Combining item $(i), (ii)$ we proved, for $D = [L, U)$ with $L \le U \le \tau$, that: 
	\begin{align*}
		& \left| \widehat{WDCC}(D) - \WedgeAvgCC{D}{} \right| \le \eta + \varepsilon Y(D) \le \eta + \varepsilon (\WedgeAvgCC{D}{} + \eta) \\
		& = \varepsilon \cdot \WedgeAvgCC{D}{} + (1 + \varepsilon) \eta.
	\end{align*}
	The final result from the theorem follows then by rescaling $\eta$.
	
\end{proof}

Next, we prove bounds for our tail estimator, which provides estimates when, for $D = [L, U)$, we have $\tau \le L$. Combining Theorems~\ref{thm:ndcc_tail_appendix} and~\ref{thm:wdcc_tail_appendix} yields Theorem~\ref{thm:tail_estimator}. 
\begin{theorem}
	\label{thm:ndcc_tail_appendix}
	Consider a degree interval $D = [L, U)$ such that $\tau \le L < U$. 
	For any $\varepsilon \in (0, 1/2)$, define $D^+ = [L (1 - \varepsilon), U(1 + \varepsilon)]$, and $D^- = [L (1 + \varepsilon), U(1 - \varepsilon)]$.
	For the quantity $\NodeAvgCC{D}{} = \frac{S_{LCC}(D)}{C(D)}$, we have, if
	$$
	\tau = \bigOmegatilde{ \frac{1}{\varepsilon p_t} }  \text{ and } p_{M_t} p_{A_t} = \bigOmegatilde{\frac{\Delta_E \left( D^+ \right)}{ \varepsilon^2 \Triangles{D^-}{}}},
	$$ 
	then Algorithm \algnameestimate\ outputs a value $\widehat{NDCC}(D)$ such that:
	\begin{align*}
		& \widehat{NDCC}(D) \in \left[ \left(1 - \varepsilon \right) \; \frac{S_{LCC}(D^-)}{\NumNodes{D^+}{}},  \left( 1 + \varepsilon \right) \; \frac{S_{LCC}(D^+)}{\NumNodes{D^-}{}} \right],
	\end{align*}
\end{theorem}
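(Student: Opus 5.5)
The plan is to sandwich the set of nodes selected by the tail branch of \algnameestimate\ between $\InducedNodes{D^-}{}$ and $\InducedNodes{D^+}{}$, and then control the numerator and the denominator of $\widehat{NDCC}(D)$ separately. Let $\widehat{V}_D = \{ v \in S_t : \lceil \EstDeg{v}{} \rceil \in D\}$. By Lemma~\ref{lemma:est_degrees_tail}, items~\ref{item:lemma_one_third} and~\ref{item:lemma_one_fourh}, with $\tau = \bigOmegatilde{1/(\varepsilon p_t)}$ and a union bound, every node with $\Deg{v}{} \ge \tau$ lies in $S_t$ and satisfies $\EstDeg{v}{} \in (1\pm\varepsilon)\Deg{v}{}$, with high probability. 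Hence every $v$ with $\Deg{v}{} \in D^-$ (note $L(1+\varepsilon) \ge \tau$) has $\EstDeg{v}{} \in [L, U)$, up to the ceiling, which I absorb by a slight rescaling of $\varepsilon$. So $\InducedNodes{D^-}{} \subseteq \widehat{V}_D$.

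Conversely, consider $v \in \widehat{V}_D$. If $\Deg{v}{} \ge \tau$, then $\EstDeg{v}{} \in [L,U)$ together with $\EstDeg{v}{} \in (1\pm\varepsilon)\Deg{v}{}$ forces $\Deg{v}{} \in [L/(1+\varepsilon), U/(1-\varepsilon)]$, which is contained in $D^+$ after rescaling $\varepsilon$ (I will use $\varepsilon/2$ internally). If $\Deg{v}{} < \tau$, the upper bound $\EstDeg{v}{} \le \Deg{v}{} + 1/p_t$ from the proof of item~\ref{item:lemma_one_second}, together with $1/p_t \le \varepsilon\tau \le \varepsilon L$, gives $\Deg{v}{} \ge L(1-\varepsilon)$. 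So in all cases $\widehat{V}_D \subseteq \InducedNodes{D^+}{}$. This yields $\NumNodes{D^-}{} \le |\widehat{V}_D| \le \NumNodes{D^+}{}$ with high probability.

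For the numerator, write it as $\sum_{v \in \widehat{V}_D} \hat{T}_t[v]/\binom{\lceil \EstDeg{v}{}\rceil}{2}$. By Corollary~\ref{cor:degree_bounds_simplified}, each denominator lies in $(1\pm 3\varepsilon)\binom{\Deg{v}{}}{2}$, so up to a $(1\pm O(\varepsilon))$ factor it suffices to show that
$$\widehat{A} = \sum_{v \in \widehat{V}_D} \hat{T}_t[v]/\binom{\Deg{v}{}}{2}$$
concentrates around $A = \sum_{v \in \widehat{V}_D} \LocalCC{v}{}$ within a factor $(1\pm\varepsilon)$. The monotonicity of the sum of the nonnegative terms $\LocalCC{v}{}$ then gives $S_{LCC}(D^-) \le A \le S_{LCC}(D^+)$. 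The ratio bounds follow by dividing by $|\widehat{V}_D|$ and using the sandwich from the previous paragraph, after rescaling $\varepsilon$.

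The concentration step is the main obstacle, and it needs care because $\widehat{V}_D$ is itself random. The degree-sampling randomness (node sampling in $S_t$) is, however, independent of the edge samples $M_t, A_t$ in the two-pass algorithm. So I will condition on $S_t$ and the degree estimates, and fix the set $\widehat{V}_D$ inside the good event. I then mirror term (ii) in the proof of Theorem~\ref{thm:ndcc_head_appendix}: write $\widehat{A} = \sum_\Delta (w_\Delta/p_\Delta) I_\Delta$, with weights $w_\Delta \in [1/\binom{U'}{2}, 3/\binom{L'}{2}]$, where $L'$ and $U'$ are the endpoints of $D^+$, so that $U'/L' = \Theta(1)$. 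Then $\Var{X_\Delta} \le 9/(\binom{L'}{2}^2 p_{M_t} p_{A_t})$. The dependency graph has fractional chromatic number at most $\Delta_E(D^+)$, because every counted triangle is incident to a node in $\InducedNodes{D^+}{}$. The number of triangles involved is at least $\Triangles{D^-}{}$, since $\InducedNodes{D^-}{} \subseteq \widehat{V}_D$, which gives $A = \Omega(\Triangles{D^-}{}/\binom{U'}{2})$. Plugging these into Lemma~\ref{lemma:bernstein_frac_chromatic} gives failure probability $n^{-c}$ exactly when $p_{M_t}p_{A_t} = \bigOmegatilde{\Delta_E(D^+)/(\varepsilon^2 \Triangles{D^-}{})}$. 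Finally, a union bound over the degree event and the triangle event completes the proof.
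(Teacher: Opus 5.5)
Your proposal is correct and follows essentially the same route as the paper: you sandwich the estimated bin between $V_{D^-}$ and $V_{D^+}$, which the paper does through the complementary cumulative sums $\widehat{F}(L)-\widehat{F}(U)$ rather than by direct set containment, and the paper also leaves implicit the conditioning on $S_t$ that you make explicit; you then replace ${\lceil \hat{d}_v \rceil \choose 2}$ by ${d_v \choose 2}$ via Corollary~\ref{cor:degree_bounds_simplified}, and reuse the dependency-graph Bernstein argument of the head case with weights in $\bigl[1/{U(1+\varepsilon) \choose 2},\, 3/{L(1-\varepsilon) \choose 2}\bigr]$. One detail is worth adding: the accuracy $\hat{d}_v \in (1\pm\varepsilon)d_v$ that the corollary needs is only supplied by item~\ref{item:lemma_one_third} of Lemma~\ref{lemma:est_degrees_tail} when $d_v \ge \tau$, so for nodes of $\widehat{V}_D$ with $d_v < \tau$ note that $\hat{d}_v > L-1 \ge d_v$ and $\hat{d}_v \le d_v + \varepsilon\tau$ with $d_v \ge (1-\varepsilon)L$, which still gives $\hat{d}_v \in (1\pm 2\varepsilon)\, d_v$.
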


\begin{proof}
	For the tail estimator, it holds (w.h.p.) $\EstDeg{v}{} \in (1 \pm \varepsilon) \Deg{v}{}$ by Lemma~\ref{lemma:est_degrees_tail} item~\ref{item:lemma_one_third}, for $\tau = \bigOmegatilde{\frac{1}{\varepsilon p_t}}$.
	For the tail estimator, Alg. \algnameestimate\ computes:
	\begin{equation}
		\widehat{NDCC}(D) = \frac{ \sum_{v \in S_t, \lceil \EstDeg{v}{} \rceil \in D} \frac{\EstLocalTrianglesCard{v}{}}{{\lceil \EstDeg{v}{} \rceil \choose 2}} }{ \left| \left\{ v \in S_t, \lceil \EstDeg{v}{} \rceil \in D \right\} \right| }
		\overset{(i)}{=} \frac{ \sum_{v, \lceil \EstDeg{v}{} \rceil \in D} \frac{\EstLocalTrianglesCard{v}{}}{{\lceil \EstDeg{v}{} \rceil \choose 2}} }{ \left| \left\{ v, \lceil \EstDeg{v}{} \rceil \in D \right\} \right| },
	\end{equation}
	where $(i)$ follows from Lemma~\ref{lemma:est_degrees_tail}, item~\ref{item:lemma_one_fourh} w.h.p. Define:
	$$
	Y(D) = \frac{ \sum_{v, \lceil \EstDeg{v}{} \rceil \in D} \frac{\LocalTrianglesCard{v}{}}{{\lceil \EstDeg{v}{} \rceil \choose 2}} }{ \left| \left\{ v, \lceil \EstDeg{v}{} \rceil \in D \right\} \right| },
	$$	
	and $\widehat{X}(D)$ using true degrees (in the binomial coefficients):
	\begin{equation*}
		\widehat{X}(D) = \frac{ \sum_{v, \lceil \EstDeg{v}{} \rceil \in D} \frac{\LocalTrianglesCard{v}{}}{{ \Deg{v}{} \choose 2}} }{ \left| \left\{ v, \lceil \EstDeg{v}{} \rceil \in D \right\} \right| } 
		= \frac{\widehat{S}_{LCC}(D)}{\widehat{C}(D)};.
	\end{equation*}
	From Corollary~\ref{cor:degree_bounds_simplified}, for $\tau \ge 4$, we have:
	\begin{equation}\label{eq:tail_degrees_ndcc}
		Y(D) \in \frac{1}{1 \mp 3 \varepsilon} \widehat{X}(D).
	\end{equation}
	Define $X(D)$ using true degree over sum and count as:
	$$
	{X}(D) = \frac{ \sum_{v, \Deg{v}{} \in D} \frac{\LocalTrianglesCard{v}{}}{{ \Deg{v}{} \choose 2}} }{ \left| \left\{ v, \Deg{v}{} \in D \right\} \right| } = \frac{S_{LCC}(D)}{C(D)}; 
	$$
	Now, we have the same terms in $\widehat{X}(D), X(D)$, but ranging on a different domain due to slacked (approximated) degrees. 
	To explicit the domain differences, we consider complementary cdf to leverage monotonicity; that is, define $\widehat{F}(d) = \sum_{v, \lceil \EstDeg{v}{} \rceil \ge d} \LocalTrianglesCard{v}{} / {\Deg{v}{} \choose 2}$ and $F(d) = \sum_{v, \Deg{v}{} \ge d} \LocalTrianglesCard{v}{} / {\Deg{v}{} \choose 2}$.
	We have $\widehat{S}_{LCC}([L, U)) = \widehat{F}(L) - \widehat{F}(U)$, and $S_{LCC}([L, U)) = F(L) - F(U)$.
	By Lemma~\ref{lemma:est_degrees_tail}, items~\ref{item:lemma_one_second} and~\ref{item:lemma_one_third}, and by rescaling $\varepsilon$'s, we have (w.h.p.):
	\begin{align*}
		\widehat{F}(d) & \le \sum_{v, \Deg{v}{} \ge \frac{d}{1 + \varepsilon}} \frac{\LocalTrianglesCard{v}{}}{{\Deg{v}{} \choose 2}} \le \sum_{v, \Deg{v}{} \ge (1 - \varepsilon )d} \frac{\LocalTrianglesCard{v}{}}{{\Deg{v}{} \choose 2}} = F\!\left( (1 - \varepsilon)d \right),
	\end{align*}
	and, similarly:
	\begin{align*}
		\widehat{F}(d) & \ge \sum_{v, \Deg{v}{} \ge \frac{d}{1 - \varepsilon}} \frac{\LocalTrianglesCard{v}{}}{{\Deg{v}{} \choose 2}} \ge \sum_{v, \Deg{v}{} \ge (1 + \varepsilon )d} \frac{\LocalTrianglesCard{v}{}}{{\Deg{v}{} \choose 2}} = F\!\left((1 + \varepsilon) d\right).
	\end{align*}
	Consequently, for our estimator:
	\begin{align*}
		\widehat{S}_{LCC}([L, U)) &= \widehat{F}(L) - \widehat{F}(U) \le {F}((1 - \varepsilon)L) - {F}((1 + \varepsilon)U) \\
		& = S_{LCC} \left( [ (1 - \varepsilon)L, (1 + \varepsilon)U ] \right),
	\end{align*}
	and, similarly:
	\begin{align*}
		\widehat{S}_{LCC}([L, U)) &= \widehat{F}(L) - \widehat{F}(U) \ge {F}((1 + \varepsilon)L) - {F}((1 - \varepsilon)U) \\
		& = S_{LCC} \left( [ (1 + \varepsilon)L, (1 - \varepsilon)U ] \right).
	\end{align*}
	By defining $D^+, D^-$ as in the statement of the theorem, we have $\widehat{S}_{LCC}(D) \in [S_{LCC}(D^-), S_{LCC}(D^+)]$.
	With the same exact steps, we can derive $\widehat{C}(D) \in [C(D^-), C(D^+)]$. Combining with Eq.~\ref{eq:tail_degrees_ndcc}, we obtain:
	$$
	Y(D) \in \frac{1}{ 1 \mp 3\varepsilon } \left[ \frac{S_{LCC}(D^-)}{C(D^+)}, \frac{S_{LCC}(D^+)}{C(D^-)} \right].
	$$
	Now, we bound $\widehat{NDCC}(D)$ with respect to $Y(D)$, which corresponds to the triangle-estimator error. 
	This proof is equivalent to the one in Theorem~\ref{thm:ndcc_head_appendix}. We just need to define $w_\Delta$'s accordingly and notice that $w_\Delta \in \left[ \frac{1}{ {U \left(1 + \varepsilon \right) \choose 2} }, \frac{3} {{ L \left( 1 - \varepsilon \right) \choose 2}} \right]$. Applying the same steps leads to $ \widehat{NDCC}(D) \in (1 \pm \varepsilon) \; Y(D) $ w.h.p. by setting $p_M p_A = \bigOmegatilde{\frac{\Delta_E \left(D^+ \right)}{\varepsilon^2 T(D^-)}}$. 
	Notice that $T_t(D^-) = T(D^-)$ with high probability by Lemma~\ref{lemma:est_degrees_tail} item~\ref{item:lemma_one_fourh}.
	Overall, we can write:
	\begin{align*}
		& \widehat{NDCC}(D) \in \left[ (1 - \varepsilon) \; \frac{S_{LCC}(D^-)}{\NumNodes{D^+}{}},  (1 + \varepsilon) \; \frac{S_{LCC}(D^+)}{\NumNodes{D^-}{}} \right],
	\end{align*}
	as $\varepsilon \in (0, 1/2)$, and by rescaling, concluding the proof. 
\end{proof}

\begin{theorem}
	\label{thm:wdcc_tail_appendix}
	Consider a degree interval $D = [L, U)$ such that $\tau \le L < U$. 
	For any $\varepsilon \in (0, 1/2)$, define $D^+ = [L (1 - \varepsilon), U(1 + \varepsilon)]$, and $D^- = [L (1 + \varepsilon), U(1 - \varepsilon)]$.
	For the quantity $\WedgeAvgCC{D}{} = \frac{\Triangles{D}{}}{\Wedges{D}{}}$, we have, if
	$$
	\tau = \bigOmegatilde{ \frac{1}{\varepsilon p_t} }  \text{ and } p_{M_t} p_{A_t} = \bigOmegatilde{\frac{\Delta_E \left(D^+ \right)}{ \varepsilon^2 \Triangles{D^-}{}}},
	$$ 
	then Algorithm \algnameestimate\ outputs a value $\widehat{WDCC}(D)$ such that:
	\begin{align*}
		& \widehat{WDCC}(D) \in \left[ \left(1 - \varepsilon \right) \; \frac{\Triangles{D^-}{}}{\Wedges{D^+}{}},  \left( 1 + \varepsilon \right) \; \frac{\Triangles{D^+}{}}{\Wedges{D^-}{}} \right],
	\end{align*}
\end{theorem}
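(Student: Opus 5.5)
The plan follows the proof of Theorem~\ref{thm:ndcc_tail_appendix} step by step, with sums of local clustering coefficients replaced by sums of triangle counts and node counts replaced by wedge counts. First, by Lemma~\ref{lemma:est_degrees_tail} items~\ref{item:lemma_one_third} and~\ref{item:lemma_one_fourh} and the choice $\tau = \bigOmegatilde{1/(\varepsilon p_t)}$, with high probability every node of degree at least $(1-\varepsilon)L \ge \tau(1-\varepsilon)$ relevant to $D^+$ lies in $S_t$ and satisfies $\EstDeg{v}{} \in (1\pm\varepsilon)\Deg{v}{}$. Item~\ref{item:lemma_one_second} guarantees that nodes of degree below $(1-\varepsilon)\tau$ cannot enter the bin. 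Hence the sums in the tail branch of \algnameestimate\ can be taken over all $v$ with $\lceil \EstDeg{v}{} \rceil \in D$.

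Next, I would introduce the intermediate quantity
$$Y(D) = \frac{\sum_{v,\lceil \EstDeg{v}{}\rceil \in D} \LocalTrianglesCard{v}{}}{\sum_{v,\lceil \EstDeg{v}{}\rceil \in D} {\lceil \EstDeg{v}{}\rceil \choose 2}},$$
which uses true triangle counts. By Corollary~\ref{cor:degree_bounds_simplified}, the denominator is within a factor $(1\pm 3\varepsilon)$ of $\widehat{W}(D)=\sum_{v,\lceil \EstDeg{v}{}\rceil \in D}{\Deg{v}{}\choose 2}$, so $Y(D) \in \frac{1}{1\mp 3\varepsilon}\,\widehat{T}(D)/\widehat{W}(D)$, where $\widehat{T}(D)$ is the true-triangle numerator.

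The domain shift is handled with complementary cumulative functions. Define $\widehat{G}(d)=\sum_{v,\lceil \EstDeg{v}{}\rceil \ge d} \LocalTrianglesCard{v}{}$ and $G(d)=\sum_{v,\Deg{v}{}\ge d}\LocalTrianglesCard{v}{}$. Exactly as for $\widehat{F}$, we get $G((1+\varepsilon)d) \le \widehat{G}(d) \le G((1-\varepsilon)d)$, and differencing at $L$ and $U$ gives $\widehat{T}(D) \in [T(D^-), T(D^+)]$. The same argument with weights ${\Deg{v}{}\choose 2}$ gives $\widehat{W}(D) \in [W(D^-), W(D^+)]$. The only care needed is that the ceiling and the boundary $U$ are absorbed by rescaling $\varepsilon$. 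Combining these bounds, $Y(D)$ lies in $\frac{1}{1\mp3\varepsilon}[T(D^-)/W(D^+),\, T(D^+)/W(D^-)]$.

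Finally, the triangle-estimation error $|\widehat{WDCC}(D) - Y(D)|$ involves only the numerator, because the denominator is the same in both. I would write $\sum \hat T_t[v] = \sum_{\Delta} w_\Delta I_\Delta/p_\Delta$ with $w_\Delta \in \{1,2,3\}$ counting the endpoints of $\Delta$ in the estimated bin. Then I would apply the dependency-graph Bernstein inequality (Lemma~\ref{lemma:bernstein_frac_chromatic}) as in Theorem~\ref{thm:ndcc_head_appendix}. Variances are bounded by $9/(p_{M_t}p_{A_t})$, and the chromatic bound is $\chi^*(H)\lesssim \Delta_E(D^+)$, since all counted triangles are incident to nodes with degree in $D^+$. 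The mean is at least $T(D^-)$, so the condition $p_{M_t}p_{A_t} = \bigOmegatilde{\Delta_E(D^+)/(\varepsilon^2 T(D^-))}$ gives a $(1\pm\varepsilon)$ multiplicative bound on the numerator. The main obstacle is this chromatic-number step: unlike the head case, tail membership is random through $p_t$, so the number of triangles through an edge must be bounded by $\Delta_E(D^+)$ directly, which works because every counted triangle has an endpoint in $D^+$ with high probability. A final union bound and rescaling of $\varepsilon$ give the stated interval.
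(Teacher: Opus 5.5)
Your proposal is correct and follows essentially the same route as the paper. The paper proves this theorem by pointing back to the proof of Theorem~\ref{thm:ndcc_tail_appendix}, with complementary cumulative sums of $t_v$ and ${d_v \choose 2}$ used to sandwich the binned numerator and denominator between their $D^-$ and $D^+$ values. Corollary~\ref{cor:degree_bounds_simplified} handles the binomial distortion, and the dependency-graph Bernstein bound (with $\chi^* \lesssim \Delta_E(D^+)$ and mean at least $T(D^-)$) handles the sampling error in the numerator, which is the only part affected because the denominator is shared.
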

\begin{proof}
	The proof follows equivalent steps of the proof of Thm.~\ref{thm:ndcc_tail_appendix}, by applying complementary cdf on $\Triangles{D}{}$ and on $\Wedges{D}{}$.
\end{proof}

Now, we prove our result on local clustering coefficients for sufficiently high-degree nodes (i.e., from our tail estimator).
\begin{manualtheorem}{Theorem~\ref{thm:tail_LCC}}
	Let $v \in S_t$ be a node such that $d_v \ge \tau$ and $\LocalCC{v}{} > 0$. For any $\varepsilon \in (0,1/2)$ and any $\eta \in (0,\LocalCC{v}{}]$, if
	$$
	\tau = \widetilde{\Omega}\!\left(\frac{1}{\varepsilon p_t}\right), \text{ and } p_{M_t}p_{A_t} = \widetilde{\Omega}\!\left(\frac{\Delta_E(v)\LocalCC{v}{}}{\eta^2 \tau^2} \right),
	$$
	then, with high probability,
	$$
	\left|\widehat{LCC}(v)-\LocalCC{v}{}\right| \le \varepsilon \LocalCC{v}{} + \eta.
	$$
\end{manualtheorem}
\begin{proof}
	First, a node $v$ with degree $d_v \ge \tau$ is included in $S_t$ for the above choice of $\tau$ with high probability, from Lemma~\ref{lemma:est_degrees_tail}.
	Our tail estimator computes, for any such node:
	$$
	\widehat{LCC}(v) = \frac{\EstLocalTrianglesCard{v}{}}{{\lceil \EstDeg{v}{} \rceil \choose 2} } \overset{(i)}{\in} \frac{1}{1 \mp \varepsilon} \frac{\EstLocalTrianglesCard{v}{}}{{ \Deg{v}{} \choose 2} } \coloneqq \frac{1}{1 \mp \varepsilon} Y(v),
	$$
	where $(i)$ follows from Corollary~\ref{cor:degree_bounds_simplified}, and by rescaling $\varepsilon$.
	For vertex $v$, let $X_i(v) = 1 / p_{\Delta_i}$ if $i$-th triangle is incident to $v$ and is counted by the algorithm. We have $\Exp{X_i(v)} = 1$, so that $\EstLocalTrianglesCard{v}{} = \sum_i X_i(v)$ and $\Exp{\EstLocalTrianglesCard{v}{}} = \LocalTrianglesCard{v}{}$. Moreover, $\Var{X_i(v)} \le 1 / p_{\Delta_i} \le 1 / p_A p_M$, assuming $p_A \le p_M$. We can build a dependency graph for variables $X_i(v)$ for all $i$, with fractional chromatic number $\le \Delta_E(v)$, which is the maximum number of triangles (among those incident to $v$) adjacent to an edge.
	Define centered variables $Z_i(v) = X_i(v) - \Exp{X_i(v)}$, and $Z = \sum_i Z_i(v)$. By Lemma~\ref{lemma:bernstein_frac_chromatic}:
	\allowdisplaybreaks
	\begin{align*}
		& \Prob{ \left| Y(v) - LCC(v) \right| \ge \eta} =  \Prob{ \left| Z \right| \ge \eta \; {d_v \choose 2}} \\
		& \le 2 \exp \left( \frac{-8 \eta^2 {d_v \choose 2}^2}{25 \Delta_E(v) \left( \frac{\LocalTrianglesCard{v}{}}{p_A p_M} + \frac{\eta {d_v \choose 2}}{3 p_A p_M} \right)} \right),
	\end{align*}
	which is $\le 1/n^c$ for $p_A p_M = \bigOmegatilde{ \frac{\Delta_E(v) LCC(v)}{\eta^2 \tau^2}}$ for $\eta \le LCC(v)$, and since $d_v \ge \tau$. Overall:
	\begin{align*}
		& \left| \widehat{LCC}(v) - LCC(v) \right| \le \left| \widehat{LCC}(v) - Y(v) \right| + \left| Y(v) - LCC(v) \right| \\
		& \le \varepsilon Y(v) + \eta \le \varepsilon LCC(v) + \eta (\varepsilon + 1).
	\end{align*}
	Main result follows by rescaling $\varepsilon$ and $\eta$. 
	
\end{proof}

Now, we report the proof of the equivalence of the tail triangle estimator in the two-pass and in the one-pass algorithms. 
\begin{manualtheorem}{Lemma~\ref{lemma:triangle_tail_equiv}}
	The estimators produced by the one-pass and two-pass algorithms are identical for the node-degree map $S_t$ and, if $p_{M_t}=p_{A_t}$, also for the node-triangle map $\hat{T}_t$.
\end{manualtheorem}

\begin{proof}
	
	We prove the lemma by coupling the randomness of the one-pass and two-pass algorithms. 
	For each edge $e = \{u, v\}$ in the stream, independently pre-generate Bernoulli random variables $X_u, X_v$ with parameter $p_t$, used to determine whether $u, v \in S_t$, and a Bernoulli random variable $Y_e$ with parameter $p_{M_t}=p_{A_t}$, used to determine whether $e$ is sampled for tail triangle counting. 
	Consider the same variables $\{ X_u, X_v, Y_e \}_{e = \{u, v\}}$ for both algorithms. 
	
	Since both algorithms use the same Bernoulli variables $X_u, X_v$ to decide whether nodes $u$ and $v$ belong to $S_t$, the resulting set $S_t$ is identical in the two algorithms.
	
	Now, fix any node $v \in S_t$, and consider any triangle of the type
	$
	\Delta = \{\Edge{v}{a}, \Edge{v}{b}, \Edge{a}{b}\}
	$
	incident to $v$. Let $e_1,e_2$ be the two edges of $\Delta$ that appear before the closing edge in the stream.
	
	We first analyze the two-pass algorithm. After the first pass, the set $S_t$ is fixed. An edge $e$ is inserted into the tail main sample if it is incident to a node of $S_t$ and $Y_e=1$; otherwise, if $Y_e=1$, it is inserted into the tail auxiliary sample. Since $v \in S_t$, at least one of the two edges $e_1,e_2$ is incident to $v$. Hence, whenever that edge is sampled, it must belong to the tail main sample. Therefore, the event that $\Delta$ is counted by the two-pass algorithm is exactly $ Y_{e_1}=1$ and $Y_{e_2}=1.$
	
	Consider now the one-pass algorithm. At the arrival time of an edge $e$, the algorithm decides, based on the current stream prefix, whether $e$ is eligible for the tail main sample or for the tail auxiliary sample. Since $p_{M_t}=p_{A_t}$, in both cases the decision is made using the same variable $Y_e$. Thus, independently of whether $e$ is currently being sampled in the main or auxiliary sample, the event that $e$ is sampled is $Y_e=1$.
	The one-pass algorithm counts a triangle whenever its two non-closing edges are sampled, even if both are stored in the auxiliary sample. Hence, for the same triangle $\Delta$, the event that it is counted is again exactly $Y_{e_1}=1$ and $Y_{e_2}=1.$
	
	Thus, for every triangle incident to $v \in S_t$, the counting event is identical in the one-pass and two-pass algorithms. Summing over all such triangles shows that the number of counted triangles incident to $v$ is the same in the two algorithms, for every $v \in S_t$.
\end{proof}

Finally, we present the proofs about space and time complexities of our proposed one-pass algorithm.

\begin{manualtheorem}{Theorem~\ref{thm:space}}
	Let $\mathcal{D} = \{ D_1, D_2, \dots \}$ be a set of degree intervals.
	There exists a \emph{one-pass} streaming algorithm that, for all $D = [L, U) \in \mathcal{D}$ \emph{simultaneously}, computes approximations of $\NodeAvgCC{D}{}$ and of $\WedgeAvgCC{D}{}$ using \emph{expected} space:
	$$
	\bigOtilde{\frac{n}{\eta^2} \; \max_{D \colon U \le \tau } \left( \frac{1}{\NumNodes{D}{}} + \frac{\Delta_V(D)}{\Triangles{D}{}} \right) + \frac{m}{\varepsilon \tau}} \text{ nodes, and}
	$$
	$$
	\bigOtilde{\frac{m}{\varepsilon} \left( \eta + \max_D \sqrt{\frac{\Delta_E(D^+)}{\Triangles{D^-}{}}} \; \right) } \text{edges, }
	$$
	where $\tau = \frac{1}{\varepsilon p_t}$.
\end{manualtheorem}

\begin{proof}
	The overall pipeline is as follows: first, run our one-pass algorithm to gather node-degree and node-triangle estimates. 
	Parameters of the one-pass algorithm should be fixed accordingly to $\mathcal{D}$: if all $D = [L, U) \in \mathcal{D}$ are such that $\tau \le L$, then all intervals are estimated by using the tail estimator (i.e., there is no point in using the head estimator and hence one can set $p_h = 0$). In general, $p_h$ should consider only degrees $D = [L, U) \in \mathcal{D} \colon U \le \tau$.  
	Note that $\tau = 1 / (\varepsilon p_t)$ can be computed upfront, given our fixed parameter $p_t$. 
	Upon running the one-pass algorithm, for every degree interval $D = [L, U) \in \mathcal{D}$ run Algorithm \algnameestimate\ to obtain estimates $\widehat{NDCC}(D)$ and $\widehat{WDCC}(D)$, either using the head estimator (if $U \le \tau$) or the tail estimator (if $\tau \ge L $). 
	
	Consider a fixed interval $D \in \mathcal{D}$. 
	We plug bounds derived in Thm.~\ref{thm:ndcc_head_appendix},~\ref{thm:wdcc_head_appendix},~\ref{thm:ndcc_tail_appendix},~\ref{thm:wdcc_tail_appendix}.
	The space used by the algorithm accounts for the node-degree map $S_h$ (head) and $S_t$ (tail), and the subgraph stored for counting triangles.
	The expected number of nodes in $S_h$ is $n \cdot p_h$, and the expected number of nodes in $S_t$ is $\sum_v \left( 1 - \left( 1 - p_t \right)^{\Deg{v}{}} \right) \le \sum_v \min \left( 1, p_t \Deg{v}{} \right)$. We consider the looser $m p_t = m / (\varepsilon \tau)$ for the sake of interpretability of our bound.
	The expected number of edges for counting triangles is bounded by $m \left( p_{M_h} + p_{A_h} + p_{M_t} + p_{A_t} \right)$, subject to constraints
	$
	p_{M_h} p_{A_h} = \bigOmegatilde{\frac{1}{ \varepsilon^2 } \left( \eta^2 + \frac{\Delta_E(D)}{\Triangles{D}{} } \right) },
	$ 
	and $p_{M_t} p_{A_t} = \bigOmegatilde{\frac{\Delta_E(D^+)}{\varepsilon^2 \Triangles{D^-}{}}}$.
	Thus, the quantity $p_{M_h} + p_{A_h} + p_{M_t} + p_{A_t}$ is minimised when $p_{M_h} = p_{A_h} = \bigOtilde{\sqrt{\frac{1}{\varepsilon^2} \left( \eta^2 + \frac{\Delta_E(D)}{\Triangles{D}{}}\right)}}$ and $p_{M_t} = p_{A_t} = \bigOtilde{\sqrt{\frac{\Delta_E(D^+)}{\varepsilon^2 \Triangles{D^-}{}}}}$. We have $p_h = \bigOmegatilde{\frac{1}{\eta^2} \left( \frac{1}{\NumNodes{D}{}} + \frac{\Delta_V(D)}{\Triangles{D}{}} \right)}$, and $\frac{\Triangles{D}{}}{\NumNodes{D}{}} \le \Delta_V(D)$. Plugging and re-arranging, yields the final bound for a single degree interval $D \in \mathcal{D}$. The result follows by a union bound on $|\mathcal{D}| = \bigO{n^2}$ and considering the interval $D$ which maximizes each quantity depending on it. Again, by prioritizing clarity of our bounds, we do not make sepation between head and tail intervals in our edge space.
	
\end{proof}

\begin{manualtheorem}{Proposition~\ref{prop:time}}
	Given an input stream $\Sigma$ for graph $G$ with $m$ edges and maximum degree $d_{max}(G)$, our one-pass algorithm processes the stream in $\bigO{m \cdot d_{max}(G)}$ time.  
\end{manualtheorem}
\begin{proof}
	For each incoming edge $\Edge{u}{v}$ in the stream $\Sigma$, the most expensive step is the computation of common neighbors when counting triangles. 
	For edge $\Edge{u}{v}$, computing common neighbors takes $\bigO{\min(\Deg{u}{}, \Deg{v}{})}$, where adjacency lists of subgraph are implemented by hash sets, so we can iterate over the smaller set and use set membership in the largest one. 
	Iterating over all the edges, it trivially follows $\sum_{\Edge{u}{v} \in \Sigma} \bigO{\min(\Deg{u}{}, \Deg{v}{})} = \bigO{m \cdot d_{max}(G)}$, where $d_{max}(G)$ is the maximum degree for graph $G$, and is a rough upper bound as we need to consider degrees in the subgraph induced by sampled edges. 
	All remaining operations, including hash computations, accesses to the node-degree maps $S_h, S_t$, and insertions into $M_h, M_t, A_h,$ and $A_t$, take constant time per operation under standard hashing assumptions. The final step for correcting tail nodes takes at most $\bigO{m}$ time, and is implemented by guessing the value $r$ through binary search.
\end{proof}

\section{Selecting Parameters for \algname }
\label{sec:params_selection}
In the following, we provide analytical insights into the parameter selection strategy used in our experiments. 
Our algorithm \algname\ allocates memory for counting triangles across four different reservoirs: $B_{M_h}$ for head main sample, $B_{A_h}$ for head auxiliary sample, $B_{M_t}$ for tail main sample, and $B_{A_t}$ for tail auxiliary sample. 

\smallskip
\emph{Head Main and Auxiliary Sample}. Given the probability $p_h$ of sampling head nodes, which determines the sample $S_h$, let $m_{M_h} = \left| \{ \{u, v\} \in E \colon u \in S_h \text{ or } v \in S_h \}\right|$ (number of candidates edges for head main sample), and $m_{A_h} = \left| \{ \{u, v\} \in E \colon u \notin S_h \text{ and } v \notin S_h \}\right|$ (number of candidate edges for head auxiliary sample). 
In expectation, we have $m_{M_h} = \left( 2p_h - p_h^2 \right) m$ and $m_{A_h} = \left(1 - p_h \right)^2 m$.

Consider the counting of a triangle $\{ u, v, w\}$ with at least one vertex (say $v$) belongs to $S_h$. We can have:
\begin{myitemize}
	\item $u, w \notin S_h$: depending on the arrival order, the two edges (when the closing one arrives) can be both in the main sample or one in the main and the other in the auxiliary sample. Assuming random stream ordering, the former happens w.p. 1/3, and the latter w.p. 2/3.
	\item $u \in S_h$ or $w \in S_h$: no matter the arrival order, both edges (when the closing one arrives) are in the main sample. 
\end{myitemize}
Thus, for the given triangle, the probability that both edges are in the main sample is $\frac{1}{3} \left( 1 - p_h \right)^2 + \left( 1 - \left(1 - p_h \right)^2 \right) = 1 - \frac{2}{3} \left(1 - p_h \right)^2$, while the probability that one edge is in the main and the other in the auxiliary sample is $ \frac{2}{3} \left( 1 - p_h \right)^2$.

Following previous work~\cite{lee2020temporal,wu2025great}, we consider the simplified variance (i.e., ignoring covariance), as it has been shown that it is strongly correlated with the variance in real-world graphs ($R^2 > 0.99$). 
Define the ``effective'' sampling probabilities $q_M = B_{M_h} / m_{M_h}$ and $q_A = B_{A_h} / m_{A_h}$ from our reservoir samples.
Minimizing the simplified variance implies minimizing the following upper bound on each individual triangle variance\footnote{Technically, the probability is $B_{M_h} \left( B_{M_h} - 1\right) / \left( m_{M_h} \left( m_{M_h} - 1\right) \right)$ rather than $q_M^2$; we assume equivalence for sufficiently large numbers.}:
\begin{equation} \label{eq:variance_head_bound}
	V = \frac{1}{q_M^2} \left( 1 - \frac{2}{3} \left( 1 - p_h \right)^2 \right) + \frac{1}{q_M q_A} \left( \frac{2}{3} \left( 1 - p_h \right)^2 \right).
\end{equation}

Our objective is to minimze $V$ subject to a fixed memory budget for head samples $k_h =  B_{M_h} + B_{A_h}$. 
We know that, in expectation, $B_{M_h} = q_M \cdot m_{M_h}$ and $B_{A_h} = q_A \cdot m_{A_h}$, for fixed probability sampling. We plug in the expected quantity\footnote{The derivation is not formal for the ease of presentation. Standard concentration inequalities (e.g., Hoeffding's, see~\ref{lemma:hoeffding_ineq}) guarantee that the empirical count matches expectations with high probability.} in~\eqref{eq:variance_head_bound}

Let $y = B_{M_h} / k_h$ be the budget allocation for head main sample. Setting $\frac{dV}{dy} = 0$ we obtain a quadratic equation $ay^2 + by +c = 0$ with:
$
a = -\frac{4}{3} + \frac{20}{3}p_h - \frac{10}{3}p_h^2, \
b = \frac{2}{3} - \frac{16}{3}p_h - \frac{16}{3}p_h^2 + 8p_h^3 - 2p_h^4, \
c = \frac{4}{3}p_h + \frac{14}{3}p_h^2 - \frac{16}{3}p_h^3 + \frac{4}{3}p_h^4.
$
% We can obtain a clean formula depending on $p_h$ for setting the budgets $B_{M_h}$ and $B_{A_h}$, given $k_h = B_{M_h} + B_{A_h}$.
Consequently, the optimal parameter selection can be computing deterministically given $p_h$ by solving the above quadratic equation. 
For instance, when $p_h = 0.1$, this selection yields $B_{M_h} \approx 0.56 k_h$ and $B_{A_h} \approx 0.44 k_h$; for $p_h = 0.2$, it shifts to $B_{M_h} \approx 0.65 k_h$ and $B_{A_h} \approx 0.35 k_h$.
%At the extremes, for $p_h = 1$ it sets $B_{M_h} = k_h$ and for $p_h = 0$ it yields $B_{M_h} = 0.5 k_h$

\smallskip
\emph{Tail Main and Auxiliary Sample}.
Given that tail node sampling is degree-dependent, providing an analytical closed-form budget allocation (as done for the head component) is intractable. 
Remember that tail triangles are counted even when their edges arrive prior to the sampling of tail nodes, in which case both edges belong to the tail auxiliary sample (see Section~\ref{sec:one_pass_algo}). 

To resolve the tail budget allocation, we assume that a triangle is equally likely to be captured in one of the three following configurations: (i) entirely within the main sample, (ii) entirely within the auxiliary sample, or (iii) split across both.
By similar calculations, an upper bound on the simplified variance is $\frac{1}{3 q_M^2} + \frac{1}{3 q_M q_A} + \frac{1}{3 q_A^2}$ for $q_M = B_{M_t} / k_t$ and $q_A = B_{A_t} / k_t$ and subject to $k_t = B_{M_t} + B_{A_t}$, which is minimized setting $B_{M_t} = B_{A_t}$, i.e., equally balancing the tail main and auxiliary budgets.

Notice that we set equal memory budgets $B_{M_t} = B_{A_t} = 0.005m$.
Since \algname\ is implemented via reservoir sampling, equal budgets do not correspond to equal sampling probabilities. As motivated in Section~\ref{sec:one_pass_algo}, our choice of parameters is regarded as a heuristic initialization, and is not intended as an instantiation of the condition $p_{M_t} = p_{A_t}$ of Lemma~\ref{lemma:triangle_tail_equiv}.

\section{Additional Experiments}
\label{sec:additional_experiments}

In this section, we first assess empirically the quality of bounds obtained by our tail estimator. 
Moreover, we analyze the extra number of counters necessary for guaranteeing correct estimates for our final one-pass algorithm. 
Then, we expand on the empirical choice of parameters for our algorithm \algname, and finally we report results missing in the main text due to space issues. 

\subsection{Empirical Bounds for Tail Estimator}
\label{sec:empirical_bounds_tail}
\EmpiricalBoundsTail
In the following, we empirically assess the quality of the bounds reported by our tail estimator. 
For degree interval $D$, let $\NodeAvgCC{D}{} = S_{LCC}(D) / \NumNodes{D}{}$. 
From Theorem~\ref{thm:ndcc_tail_appendix}, Algorithm \algnameestimate\ outputs estimate $\widehat{NDCC}(D)$ such that:
\begin{equation}
	\label{eq:ndcc_tail_interval}
	\widehat{NDCC}(D) \in \left[ (1 - \varepsilon) \frac{S_{LCC}(D^-)}{\NumNodes{D^+}{}}, (1 + \varepsilon) \frac{S_{LCC}(D^+)}{\NumNodes{D^-}{}}   \right], 
\end{equation}
for any degree interval $D = [L, U)$ such that $L \ge \tau$, where $D^- = [(1 + \varepsilon)L, (1 - \varepsilon)U]$ and $D^+ = [(1 - \varepsilon)L, (1 + \varepsilon)U]$.

To empirically quantify such bounds, we compute absolute error $|\widehat{NDCC}(D) - \NodeAvgCC{D}{}|$ considering degree intervals of powers of two (i.e., $D = [2^i, 2^{i+1})$). We consider values of $\widehat{NDCC}(D)$ as the middle point with respect to the interval guarantees from Equation~\ref{eq:ndcc_tail_interval}.
We pick 20 intervals equally spaced, making sure that for each interval $D = [L, U)$ we have $L \ge \tau$ (i.e., when we employ our tail estimator), for values of $\tau$ computed by our algorithm \algname\ in the experiments (see main text for some representative values of degree threshold $\tau$).

Figure~\ref{fig:empirical_bounds_tail} shows results for $\varepsilon \in \{ 0.01, 0.05, 0.1\}$. While for $\varepsilon = 0.1$ we observe large absolute errors and high variance, lower values of $\varepsilon$ (i.e., tighter intervals $D^-, D^+$) allow for very accurate and precise stable bounds, across all the datasets and the considered intervals. 
This means that in real-world networks, the distribution of degree-binned clustering coefficient is stable when allowing small slack in the binned intervals (for high degrees), and our bounds in Theorem~\ref{thm:ndcc_tail_appendix} are empirically accurate.  

\TableOvercount

\subsection{One Pass Algorithm: Extra Counters}
\label{sec:extra_counters}

In this section, we investigate the number of extra counters needed to ensure unbiasedness of estimates for our one-pass algorithm. 
Let $N$ be the number of nodes for which we know the degree (i.e., $|S_t|$), and $N'$ be the number of nodes for which we know the number of local triangles (i.e., $|\hat{T}_t|$).
In our two-pass algorithm, we have $N \ge N'$, given that our second pass \algnamesecondpass\ is estimating triangles only incident to nodes for which we know the degree from the first pass \algnamefirstpass.
Conversely, our one-pass algorithm also counts triangles incident to nodes for which (currently) we are unaware of their degrees. Such nodes belong to edges in the auxiliary sample; counting such triangles allows for the unbiasedness (see Section~\ref{sec:one_pass_algo}). 
That is, for our one-pass algorithm, we \emph{may} have $N \le N'$. 
Note that the extra counters for triangles will be wasted, as we cannot provide an estimate of clustering coefficients without knowing degrees (or estimates of degrees) for such nodes. 

\TableDelta

Table~\ref{tab:overcounts} shows the fraction of extra counters stored by our one-pass algorithm. We compute the number $W$ of ``wasted'' counters as $W = \max(0, N' - N)$ for each dataset, and for different size of tail auxiliary sample $B_{A_t} = f \cdot m$ for values of $f \in \{0.005, 0.01, 0.015 \}$.
The other parameters are as explained in Section~\ref{sec:params_choice}.
We observe that, across all datasets, the wasted counters range from 0 to 2.24 times the number of counters effectively needed.

\subsection{Empirical Values: $\Delta_V(D)$, $\Delta_E(D)$, $\Triangles{D}{}$}
\label{sec:bound_deltas}
In the following, we inspect values of $\Delta_V(D), \Delta_E(D)$ and $\Triangles{D}{}$ in real-world datasets.
%Recall that, for a degree interval $D$, $\Delta_V(D)$ is the maximum number of triangles incident to a node having degree in $D$, and $\Delta_E(D)$ is the maximum number of triangles (among those incident to nodes in $V_D$) adjacent to an edge.

Table~\ref{tab:delta_ratio_perturbed} shows values which appear in our space bound, considering intervals of power of two (i.e., $D = [2^i, 2^{i+1})$) on \texttt{SK}, \texttt{LJ} and \texttt{ORK}. 
We compute slacked intervals $D^- = [L(1 + \varepsilon), U(1 - \varepsilon))$ and $D^+ = [L(1 - \varepsilon), U(1 + \varepsilon))$ for any $D = [L, U)$ reported in the first column of the table, considering $\varepsilon = 0.05$.
We observe that the ratio $\Delta_E(D^+) / \Triangles{D^-}{}$, which our edge-space depends on, is consistently very small across all datasets: its maximum is below $0.009$ for all the considered intervals and datasets.

\subsection{Choice of Parameters} 
\label{sec:params_choice}
\GridSearchParams
We explain the choice of parameters for our algorithm \algname. 
%Overall, \algname\ depends on several parameters: probabilities $p_h$ and $p_t$ of sampling head and tail nodes, memory budgets $B_{M_h}, B_{M_t}$ for head and tail main samples, and memory budgets $B_{A_h}, B_{A_t}$ for the auxiliary samples. 
%Clearly, the higher each of these parameters is, the better the estimates, as they all govern the space used. 

We perform a grid search to identify the best choice of parameters. 
We consider values of $p_h \in \{0.05, 0.1, 0.15, 0.2, 0.25, 0.3\}$ and $p_t \in \{ 0.025 , 0.05, 0.075 \} \cdot n/m$, and we constrain \algname\ to use exactly $10\% m$ edges, considering head-tail allocation $(B_{M_h} + B_{A_h}, B_{M_t} + B_{A_t}) = \{ (0.01, 0.09), (0.05, 0.05), (0.09, 0.01) \} \cdot m$. 
Consequently, we set main-auxiliary allocation as explained in Sect.~\ref{sec:params_selection}: solving the quadratic equation based on $p_h$ for head budgets $B_{M_h}$ and $B_{A_h}$, and splitting equally tail budgets $B_{M_t} = B_{A_t}$.

Given that these experiments do not involve scalability of our algorithm, we do not consider our biggest datasets (\texttt{TW} and \texttt{FR}). 
Figure~\ref{fig:grid_search} depicts results; $x$-axis shows values of $p_h$, and $y$-axis shows RHAS for distribution $\{ NDCC(D_i) \}_i$ and degree intervals of powers of two (i.e., $D = [2^i, 2^{i+1})$). For each row (dataset), a subplot represents a different value of $p_t$. 
We see that the choice $B_{M_h} + B_{A_h} = 0.09m, B_{M_t} + B_{A_t} = 0.01m$ is consistently the best and the most stable, while the opposite assignment $B_{M_h} + B_{A_h} = 0.01m, B_{M_t} + B_{A_t} = 0.09m$ is performing badly and exhibits high variance. (It is cropped from the figure for the sake of visualization.)
The reason is that, in real-world datasets, degree distribution is heavy-tailed: there are many more edges incident to low-degree nodes (as they are a lot) conversely to high-degree nodes (as they are few). Hence, giving significantly more budget to the head region is the right move.

\PrintFigure{NDCC}{1.5}
\PrintFigure{NDCC}{1.1}
\PrintFigure{WDCC}{2}
\PrintFigure{WDCC}{1.2}
\PrintFigure{WDCC}{1.1}

The best and most robust choice of probabilities for sampling nodes would be $p_h = 0.3$ and $p_t = 0.075 n/m$, i.e., the highest values. 
However, this implies to store node-degrees counters for $45\%n$, on average.
We thus opted for the more conservative choice $p_h = 0.2$ and $p_t = 0.05 n/m$, which provides similar accuracy while tracking degrees only for a $30\%$ fraction of the nodes. 
In any case, we note that our algorithm \algname\ is very robust for different choices of $p_h$ and $p_t$, with the head-tail split of $0.09m, 0.01m$.

\subsection{Comparison with baselines}
\label{sec:comparison_baselines}
We compare \algname\ with \triestalg\ and \wrsalg\ baselines. In the following, we provide a brief description of baselines and how we implemented their degree estimator. 
\begin{myitemize}
	\item \triestalg~\cite{stefani2017triest} works by sampling edges via (a unique) reservoir sampling of size equal to a given memory budget. Consider the sample $\mathcal{R}$ of edges in the reservoir at the end of the stream. We output the local triangle estimates $\hat{t}_u$ for each node $u \in \mathcal{R}$, together with the unbiased degree estimator $\hat{d}_u = \frac{m}{|\mathcal{R}|} | \{ \Edge{u}{v} \in \mathcal{R} \}|$. 
	Estimates for NDCC are then computed by averaging LCCs for each degree interval.
	Note that reservoir sampling is an edge-sampling scheme, and thus keep nodes with probability proportional to their degree. As shown in Section~\ref{sec:exp}, \triestalg\ fails to provide estimate for low-degree nodes, as they are unrepresented in the sample.
	\item \wrsalg~\cite{shin2017wrs,lee2020temporal} couples reservoir sampling with waiting room, which stores the most recent edges in the stream. As suggested in their publication, we set the waiting room size to be a $0.1$ fraction of the total budget. 
	Consider the sample $\mathcal{R} \cup \mathcal{W}$ of edges in the reservoir and waiting room at the end of the stream. 
	We output the local triangle estimates $\hat{t}_u$ for each node $u \in \mathcal{R} \cup \mathcal{W}$, together with the unbiased degree estimator $\hat{d}_u = \frac{m}{|\mathcal{R}|} | \{ \Edge{u}{v} \in \mathcal{R} \}| + |\{ \Edge{u}{v} \in \mathcal{W} \}|$. 
	Because of the bias on low-degree nodes, which are counted only within the waiting room, we cap the average of local clustering coefficients to one, for each degree interval. 
	Estimates for NDCC are then computed by averaging LCCs for each degree interval.
	As shown in Section~\ref{sec:exp}, \wrsalg\ fails to provide estimate for low-degree nodes, as they are both unrepresented in the reservoir sampling and they are not captured precisely within the waiting room, which is useful for counting global triangles rather than clustering coefficients.
\end{myitemize}

\subsection{Additional Results}
\label{sec:additional_results}
We report all the results missing from Section~\ref{sec:exp}. 
We display estimates of \algname\ for $\{NDCC(D_i)\}_i$ and $\{WDCC(D_i)\}_i$ distributions, considering degree intervals $D_i = [b^i, b^{i+1})$, spanning $b = \{ 2, 1.5, 1.2, 1.1\}$. 
Figures~\ref{fig:compact_NDCC_base1.5} and~\ref{fig:compact_NDCC_base1.1} depict results for NDCC for $b = \{1.5, 1.1\}$, while Figures~\ref{fig:compact_WDCC_base2},~~\ref{fig:compact_WDCC_base1.2},  and~\ref{fig:compact_WDCC_base1.1} show results for WDCC for $b = \{2, 1.2, 1.1\}$. Similarly to what was observed in Section~\ref{sec:exp}, we see that \algname\ is qualitatively good, as the estimated distributions of NDCC and WDCC are very close to the exact distribution. Moreover, our algorithm is quantitavely accurate given that RHAS distance is consistently well below 15\%.

\subsection{Comparison with \kpalg}
\label{sec:comparison_lcc}
\FigureECC

In the following, we provide a more detailed comparison between algorithms \kpalg\ and \algname\ when varying the number $K$ of iterations used by \kpalg. 
Remember that \kpalg\ runs $K$ parallel copies of a sparsified graph with sparsification rate equal to $1 / C$, where $C$ is the number of colors.
As in the experiments in the main text, we fix memory budget to $0.1m$ for both algorithms, and we run algorithm \kpalg\ for number $K$ of iterations in $[50, 100, 150, 200, 250, 300, 350, 400]$, as in~\cite{kutzkov2013streaming}. The number $C$ of colors is then $C = K / 0.1$. 

Clearly, the lower the iterations, the higher the sparsification rate. This implies that \kpalg\ retrieves more nodes (of lower degrees), and evaluation of LCCs is performed with respect to a bigger set $H$ (implying different performance for \algname).  

Figure~\ref{fig:ecc_comparison} shows results, for our smallest datasets \texttt{SK}, \texttt{LJ} and \texttt{ORK}. 
We observe that \algname\ is significantly better than \kpalg\ for each number $K > 50$ of iterations and for each metric on \texttt{SK} and \texttt{LJ}, and comparable for $K = 50$, while on \texttt{ORK} our algorithm is better for $K > 100$. 
For small $K$'s, \kpalg\ uses higher sparsification rate and retrieves nodes of lower degree, which may be harmful for \algname\ as they are closer to the degree threshold $\tau$. Remember that \algname\ has theoretical guarantees for LCCs only in the tail estimator.

Most importantly, \algname\ employs \emph{only} the \emph{tail} estimator to estimate LCCs, thus consuming an edge budget of $0.01m$ (as $B_{M_t} = B_{A_t} = 0.005m$), an order of magnitude less compared to \kpalg\ (using $0.1m$ in expectation).
Therefore, if the final goal is to estimate LCCs, \algname\ can be configured so as to ignore the head component and to fully dedicate its allowed space to the tail estimator.
To empirically quantify this claim, we compare \algname\ with the same configuration used for Table~\ref{tab:lcc_comparison} (henceforth, we denote such configuration as \whead) against \algname\ without the head component (henceforth: \wohead), i.e., setting $p_h = B_{M_h} = B_{A_h} = 0$. 
Since we compute the degree threshold $\tau$ as a function of the number of nodes retained in the head sample, and since our goal here is to compare runtime and memory footprint, we provide \wohead\ with the degree thresholds used by \whead.

Table~\ref{tab:head_ablation} reports the results.
The two configurations attain comparable accuracy on every dataset and for every metric considered, with all differences lying well within one standard deviation.\footnote{Results are not exactly equal due to the uniqueness of the random seed, as the number of samples drawn differs across the two configurations.}
The reduction in edge budget, in contrast, translates directly into a reduction of the memory footprint: \wohead\ requires from $6.1\times$ (\texttt{SK}) up to $8.8\times$ (\texttt{ORK}, \texttt{TW}) less memory than \whead, thus closely approaching the order-of-magnitude gap.
The running time improves as well, although by a smaller factor ($2.2$--$3.1\times$), since a constant fraction of the cost is due to the single pass over the stream, which is independent of the allotted budget.

We conclude that, when only LCC estimates are required, the head component can be safely disabled, yielding a considerably lighter and faster estimator at no cost in accuracy.

\TableHeadAblation

\end{document}